\newif\ifpublish
\newif\iflongversion
\newif\ifsbc
\newcommand{\codelink}{
  \ifpublish
    \url{https://github.com/asonnino/mysticeti/tree/paper} (commit \texttt{96fd831})
  \else
    Code available but link omitted for blind review.
  \fi
}
\newcommand{\dslcodelink}{
  \ifpublish
    \url{https://github.com/MystenLabs/sui/pull/16436}
  \else
    Code available but link omitted for blind review.
  \fi
}
\newcommand{\para}[1]{\vskip 0.5em \noindent \textbf{#1.}}
\newtheorem{lemma}{Lemma}
\newtheorem{theorem}{Theorem}
\newtheorem{corollary}{Corollary}
\newcommand{\sysname}{\textsc{Mysticeti}\xspace}
\newcommand{\sysnamec}{\textsc{Mysticeti-C}\xspace}
\newcommand{\sysnamefpc}{\textsc{Mysticeti-FPC}\xspace}
\newcommand{\aquarium}{
  \ifpublish
    \unskip Sui\xspace
  \else
    \unskip Aquarium\xspace
  \fi
}
\newcommand{\realblockchain}{
  \ifpublish
    \unskip the Sui blockchain~\cite{sui}\xspace
  \else
    \unskip a major blockchain\xspace
  \fi
}
\newcommand{\GST}{\textsf{GST}}
\definecolor{eclipseStrings}{RGB}{42,0.0,255}
\definecolor{eclipseKeywords}{RGB}{127,0,85}
\colorlet{numb}{magenta!60!black}
\lstdefinelanguage{json}{
  basicstyle=\normalfont\ttfamily,
  commentstyle=\color{eclipseStrings},
  stringstyle=\color{eclipseKeywords},
  stepnumber=1,
  numbersep=8pt,
  showstringspaces=false,
  breaklines=true,
  string=[s]{"}{"},
  comment=[l]{:\ "},
  morecomment=[l]{:"},
  literate=
    *{0}{{{\color{numb}0}}}{1}
    {1}{{{\color{numb}1}}}{1}
    {2}{{{\color{numb}2}}}{1}
    {3}{{{\color{numb}3}}}{1}
    {4}{{{\color{numb}4}}}{1}
    {5}{{{\color{numb}5}}}{1}
    {6}{{{\color{numb}6}}}{1}
    {7}{{{\color{numb}7}}}{1}
    {8}{{{\color{numb}8}}}{1}
    {9}{{{\color{numb}9}}}{1}
}
\newcommand\YAMLcolonstyle{\color{red}\mdseries}
\newcommand\YAMLkeystyle{\color{black}\bfseries}
\newcommand\YAMLvaluestyle{\color{blue}\mdseries}
\newcommand\language@yaml{yaml}
\lstdefinelanguage
\newcommand\ProcessThreeDashes{\llap{\color{cyan}\mdseries-{-}-}}
\begin{document}

\title{
  \sysname: Reaching the Latency Limits with Uncertified DAGs

  \ifsbc
    \vskip 0.1em
      {\LARGE (Accepted at NDSS'25, Adopted by Sui \& IOTA)}
  \fi
}

\ifpublish
  \author{
    \IEEEauthorblockN{
      Kushal Babel\IEEEauthorrefmark{1}\IEEEauthorrefmark{2},
      Andrey Chursin\IEEEauthorrefmark{3},
      George Danezis\IEEEauthorrefmark{3}\IEEEauthorrefmark{4},
      Anastasios Kichidis\IEEEauthorrefmark{3},
      Lefteris Kokoris-Kogias\IEEEauthorrefmark{3}\IEEEauthorrefmark{5}, \\
      Arun Koshy\IEEEauthorrefmark{3},
      Alberto Sonnino\IEEEauthorrefmark{3}\IEEEauthorrefmark{4},
      Mingwei Tian\IEEEauthorrefmark{3}
    }
    \IEEEauthorblockA{
      \\
      \IEEEauthorrefmark{1}Cornell Tech,
      \IEEEauthorrefmark{2}IC3,
      \IEEEauthorrefmark{3}Mysten Labs,
      \IEEEauthorrefmark{4}University College London (UCL),
      \IEEEauthorrefmark{5}IST Austria
    }
  }
\else
  \author{}
\fi

\IEEEoverridecommandlockouts
\makeatletter\def\@IEEEpubidpullup{6.5\baselineskip}\makeatother
\IEEEpubid{
  \parbox{\columnwidth}{
    Network and Distributed System Security (NDSS) Symposium 2025\\
    24 - 28 February 2025, San Diego, CA, USA\\
    ISBN 979-8-9894372-8-3\\
    https://dx.doi.org/10.14722/ndss.2025.240929\\
    www.ndss-symposium.org
  }
  \hspace{\columnsep}\makebox[\columnwidth]{}
}

\thispagestyle{plain}
\pagestyle{plain}

\maketitle

\begin{abstract}
  We introduce \sysnamec, the first DAG-based Byzantine consensus protocol to achieve the lower bounds of latency of 3 message rounds.
Since \sysnamec is built over DAGs it also achieves high resource efficiency and censorship resistance.
\sysnamec achieves this latency improvement by avoiding explicit certification of the DAG blocks and by proposing a novel commit rule such that every block can be committed without delays, resulting in optimal latency in the steady state and under crash failures.
We further extend \sysnamec to \sysnamefpc, which incorporates a fast commit path that achieves even lower latency for transferring assets. Unlike prior fast commit path protocols, \sysnamefpc minimizes the number of signatures and messages by weaving the fast path transactions into the DAG. This frees up resources, which subsequently result in better performance.
We prove the safety and liveness in a Byzantine context. We evaluate both \sysname protocols and compare them with state-of-the-art consensus and fast path protocols to demonstrate their low latency and resource efficiency, as well as their more graceful degradation under crash failures. \sysnamec is the first Byzantine consensus protocol to achieve WAN latency of 0.5s for consensus commit while simultaneously maintaining state-of-the-art throughput of over 200k TPS. Finally, we report on integrating \sysnamec as the consensus protocol into \realblockchain, resulting in over 4x latency reduction.

\end{abstract}

\section{Introduction}
Several recent blockchains, such as Sui~\cite{sui,sui-lutris}, have adopted consensus protocols based on certified directed
acyclic graphs (DAG) of blocks~\cite{narwhal,bullshark,shoal,dag-rider,dumbo-ng,dispersedledger,sailfish,bbca-ledger,fino}. By design, these consensus protocols scale well in terms of throughput, with a performance of 100k tx/s of raw transactions and are robust against faults and network asynchrony~\cite{consensus-dos,narwhal}. This, however, comes at a high latency of around $2$-$3$ seconds, which can hinder user experience and prevent low-latency applications.

\para{\sysnamec: the power of uncertified DAGs}
Certified DAGs~\cite{dag-rider,narwhal}, where each vertex is delivered through consistent broadcast~\cite{cachin2011introduction}, have high latency for three main reasons: (1) the certification process requires multiple round-trips to broadcast each block between validators, get signatures, and re-broadcast certificates. This leads to higher latency than traditional consensus protocols~\cite{jolteon,diem,pbft}; (2) blocks commit on a ``per-wave'' basis, which means that only once every two rounds (for Bullshark~\cite{bullshark}) there is a chance to commit. Hence, some blocks have to wait for the wave to finish increasing the latency of transactions proposed by the block. This phenomenon is similar to committing big batches of $2f+1$ blocks. Finally, (3) since all certified blocks need to be signed by a supermajority of validators, signature generation and verification consume a large amount of CPU on each validator, which grows with the number of validators~\cite{li2023performance,chalkias2024fastcrypto}. This burden is particularly heavy for a crash-recovered validator that typically needs to verify thousands of signatures when trying to catch up with the rest. 

\begin{figure}[t]
    \centering
    \includegraphics[width=\columnwidth]{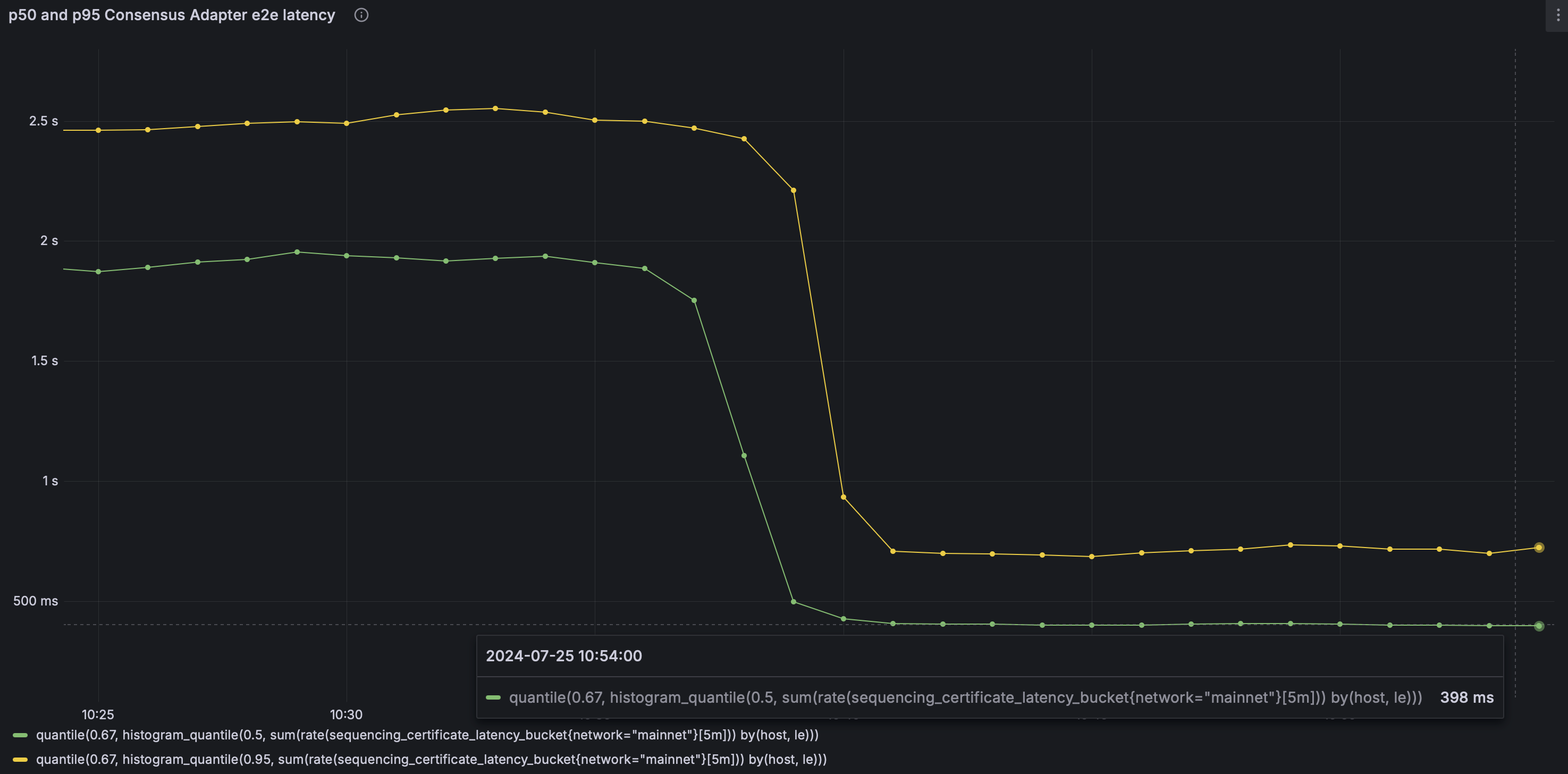}
    \caption{\footnotesize P50 latency of \realblockchain switching from Bullshark (1.9s) to \sysnamec (400ms) on 106 independently run validators}
    \label{fig:switch}
    \vspace{-0.5cm}
\end{figure}

These shortcomings come in stark contrast to the early protocols for BFT consensus, such as PBFT~\cite{pbft}, which require only 3 message delays to commit (instead of the $6$ in Bullshark) and facilitate the pipeline of proposals to commit one every round~\cite{kotla04high}. They, however, require a high number of authenticated messages to coordinate, which consumes a lot of resources and results in low throughput. Additionally, they are fragile to faults and implementation mistakes due to their complexity, especially the view-change sub-protocols.

This work presents \sysname, a family of DAG-based protocols allowing to safely commit distributed transactions in a Byzantine setting that
focuses on low-latency and low-CPU operation, achieving the best of both worlds.
\sysnamec is a consensus protocol based on a threshold logical clock~\cite{DBLP:journals/corr/abs-1907-07010} DAG of blocks,
that commits every block as early as it can be decided. \sysnamec solves all of the above challenges as (1) it does not require explicit certificates, committing blocks within the known lower bound~\cite{DBLP:journals/tdsc/MartinA06} of $3$ message rounds, (2) commits every block independently and does not need to wait for the wave to finish, and (3) requires a single signature generation and verification per block, minimizing the CPU overhead.

From a production readiness point of view, the protocol tolerates crash failures without any throughput degradation and minimal latency degradation. It uses a single message type, the signed block, and a single multi-cast transmission method between validators, making it easier to understand, implement, test, and maintain. \sysnamec has been adopted by \realblockchain that switched from the state-of-the-art Bullshark~\cite{bullshark} to \sysnamec. \Cref{fig:switch} shows the 80\% latency reduction (from 1.9s to 400 ms) that happened at the moment of the deployment on a 106 validator network.

\para{\sysnamefpc: supporting consensusless transactions}
The power of uncertified DAGs is not limited to consensus protocols.
This work generalizes \sysnamec to apply uncertified DAGs to BFT systems that process transactions without or before reaching consensus, such as in FastPay~\cite{fastpay}, Zef~\cite{DBLP:journals/corr/abs-2201-05671}, Astro~\cite{astro}, and Sui~\cite{sui-lutris}.
These systems use reliable broadcast instead of consensus to commit transactions that only access state
controlled by a single party.

The only operating protocol of this kind is Sui Lutris~\cite{sui-lutris}, which powers
the open source Sui blockchain (Linera~\cite{linera} is under development). Sui combines a consensusless ``fast'' path with a black-box certified DAG consensus. This composition is generic and leads to low latencies for fast-path transactions. But it also leads to
(1) increased latencies for transactions requiring the consensus path and overall increased sync latency due to a separate post-consensus checkpoint mechanism,
and (2) additional signature generation and
verification for transactions to be certified separately. The latter means that the validator's CPU is largely
devoted to performing cryptographic operations rather than executing transactions.
To alleviate these challenges, we co-design with \sysnamec a fast path-enabled version called \sysnamefpc, leading to very low-latency commits without the need to generate an explicit certificate for each transaction.
This new design inherits the benefits of lower latency and lower CPU utilization.

\para{Contributions} We make the following contributions:
\begin{itemize}
    \item We present \sysnamec, a DAG-based Byzantine consensus algorithm and its proofs of safety and liveness. Notably, it implements a commit rule where every single block can be directly committed, significantly reducing latency even when failures occur. We show it has a low commit latency and exceeds the throughput of Narwhal-based consensus. \sysnamec is already powering \realblockchain with more than \$1.5B of value under management and 1M Daily Active Accounts.

    \item We also present \sysnamefpc that offers feature parity with Sui Lutris~\cite{sui-lutris}, that is, both a fast path and a consensus path, as well as safe checkpointing and epoch close mechanisms. We show that \sysnamefpc has a fast path latency comparable with Zef~\cite{DBLP:journals/corr/abs-2201-05671} and Fastpay~\cite{fastpay} but higher throughput thanks to lower CPU utilization and batching.

    \item We implement and evaluate both protocols on a wide-area network. We show their performance is superior to certified DAG-based designs both in consensus and consensusless modes due to the need for fewer messages and lower CPU overheads. We also report the experiences and performance benefits of integrating \sysnamec into a production blockchain.
\end{itemize}

\section{Overview} \label{sec:overview}
This paper presents the design of the \sysname protocols, a pair of Byzantine Fault Tolerant (BFT) protocols based on Directed Acyclic Graphs (DAGs) that aim to achieve high performance in a partially synchronous network. \sysnamec is a low-latency consensus protocol that commits multiple blocks per round, while \sysnamefpc extends \sysnamec with a fast path for transactions that do not require consensus.

\subsection{System model, goals, and assumptions}
We consider a message-passing system where, in each epoch, $n=3f+1$ validators process transactions using the \sysname protocols.
In every epoch, a computationally bound adversary can statically corrupt an unknown set of up to $f$ validators. We call these validators \emph{Byzantine} and they can deviate from the protocol arbitrarily. The remaining validators (at least $2f+1$) are \emph{honest} and follow the protocol faithfully.

For the description of the protocol, we assume that links between
honest parties are reliable and authenticated. That is, all messages among honest parties eventually arrive and a receiver can verify the sender's identity.
The adversary is computationally bound hence the usual security properties of cryptographic hash functions, digital signatures, and other cryptographic primitives hold. Under these assumptions, \Cref{sec:security} shows that the \sysname protocols are safe, in that, no two correct validators commit inconsistent transactions.

Validators communicate over a partially synchronous network. There exists a time called Global Stabilization Time ($\GST$) and a finite time bound $\Delta$, such that any message sent by a party at time $x$ is guaranteed to arrive by time $\Delta + \max\{\GST, x\}$. Within periods of synchrony (after GST) the \sysname protocols are also live in that they are guaranteed to commit transactions from correct validators.

Following prior work~\cite{dag-rider,bullshark,narwhal} we focus on byzantine atomic broadcast for \sysname. Additionally for \sysnamefpc, we show that the fast-path transactions sub-protocol satisfies reliable broadcast within an epoch~\cite{sui-lutris}, but allows for recovery of equivocating objects across epochs without losing safety at the epoch boundaries.

More formally, each validator $v_k$ broadcasts messages by calling
$\textit{r\_bcast}_k(m,q)$, where $m$ is a message and $q \in \mathbb{N}$
is a sequence number.
Every validator $v_i$ has an output $\textit{r\_deliver}_i(m,q,v_k)$,
where $m$ is a message, $q$ is a sequence number, and $v_k$ is the identity of the validator
that called the corresponding $\textit{r\_bcast}_k(m,q)$.
The reliable broadcast abstraction guarantees the following properties:
\begin{itemize}
      \item \textbf{Agreement:} If an honest validator $v_i$ outputs $\textit{r\_deliver}_i(m,q,v_k)$, then every other honest validator $v_j$ eventually outputs $\textit{r\_deliver}_j(m,q,v_k)$.
      \item \textbf{Integrity:} For each sequence number $q \in \mathbb{N}$ and validator $v_k$, an honest validator $v_i$ outputs $\textit{r\_deliver}_i(m,q,v_k)$ at most once regardless of~$m$.
      \item \textbf{Validity:} If an honest validator $v_k$ calls $\textit{r\_bcast}_k(m,q)$, then every honest validator $v_i$ eventually outputs $\textit{r\_deliver}_i(m,q,v_k)$.
\end{itemize}
Additionally, for byzantine atomic broadcast, each honest
validator $v_i$ can call $\textit{a\_bcast}_i(m,q)$ and output
$\textit{a\_deliver}_i(m,q,v_k)$.
A byzantine atomic broadcast protocol satisfies reliable broadcast
(agreement, integrity, and validity) as well as:
\begin{itemize}
      \item \textbf{Total order:} If an honest validator $v_i$ outputs $a\_deliver_i(m,q,v_k)$ before $a\_deliver_i(m',q',v_k')$, then no honest party $v_j$ outputs $a\_deliver_j(m',q',v_k')$ before $a\_deliver_j(m,q,v_k)$.
\end{itemize}

Finally, most prior work on consensusless transactions defines properties as if the protocol runs in a single epoch. This setting is unrealistic as it cannot accommodate recovering from equivocation, which is a common benign event for non-expert users. To this end, we extend all the protocols to also take as a parameter the epoch number and all properties should hold within a single epoch.
Fortunately, the definition of reliable broadcast allows the recovery of liveness for blocked sequence numbers that are equivocated inside an epoch. Thus, we define equivocation tolerance for consesusless transactions as follows:
\begin{itemize}
      \item \textbf{Equivocation tolerance:} If a validator $v_k$ concurrently called $\textit{r\_bcast}_k(m,q,e)$ and  $\textit{r\_bcast}_k(m',q,e)$ with $m \neq m'$ then the rest of the validators either $\textit{r\_deliver}_i(m,q,v_k,e)$, or $\textit{r\_deliver}_i(m',q,v_k,e)$, or there is a subsequent epoch $e'>e$ where $v_k$ is honest, calls $\textit{r\_bcast}_k(m'',q,e')$ and all honest validators $\textit{r\_deliver}_i(m'',q,v_k,e')$,
\end{itemize}

\subsection{Intuition behind the \sysname design}
\sysname aims to push the latency boundaries of state machine replication in DAG-based blockchains.
Achieving BFT consensus typically necessitates at least three message delays~\cite{pbft}\footnote{
      While some protocols, such as Zyzzyva~\cite{zyzzyva}, operate under optimistic assumptions, they often prove fragile in scenarios of asynchrony or faults~\cite{narwhal,consensus-dos}. Moreover, they are unsuitable for the blockchain environment, characterized by a multitude of unreliable nodes wielding a minor fraction of the total voting power.
}. This underscores the inherent latency sub-optimality of Narwhal~\cite{narwhal}, that implements consensus (at least 3 message delays) on certified DAG blocks, when the block certification itself adds a further 3 message delays. Consequently, the first design challenge for \sysname is to manage equivocation and ensure data availability~\cite{cohen2023proof}, without relying on pre-certification of individual blocks.

Moreover, even if we overcome this initial challenge, committing only one block every three messages falls short of the performance potential inherent in DAG-based consensus, which thrives on processing $O(n)$ blocks per round, one per validator, to fully utilize network resources. Therefore, a key objective for \sysname is to maximize block commitments per round to align system tail latency closely with the three-message delay. However, achieving this presents a more formidable challenge. Unlike traditional methods that rely on the recursive and elegant commit rules found in DAG-based consensus protocols~\cite{dag-rider, narwhal, bullshark, dumbo-ng,dispersedledger}, our approach cannot afford to require sufficient distance between two potential candidate blocks on the DAG to prevent conflicting decisions among validators with divergent sub-DAG views. Implementing such protocols would require at least one gap round, raising the latency to a minimum of four delays.

\sysname is not just a consensus protocol but a class of protocols facilitating state machine replication. For now, we only focused on the consensus protocol \sysnamec, but \cref{sec:fastpath} extends it to protocols for consensusless agreement with \sysnamefpc. The core contribution of \sysnamefpc to prior work is that it is co-designed with \sysnamec instead of being a separate path like in Sui~\cite{sui-lutris}. This allows us to avoid the need for generating a majority-signed certificate per transaction, freeing a significant amount of network and CPU resources to be used for actual transactions instead of generating and verifying certificates~\cite{li2023performance,chalkias2024fastcrypto}.

Given ourexperience of deploying DAG-based consensus protocols~\cite{sui-lutris}, there are some design challenges that relate to engineering. Bullshark~\cite{bullshark} requires separate sub-protocols for managing individual block certification, for exchanging certified blocks, and for managing the communication of metadata between nodes. The challenge with \sysname is to design a protocol that has a single message type, the signed block, and a single network primitive, by which each block is multi-cast to all other correct validators.

A final point of focus inspired by our deployment is that crash-faults and struggling nodes are a common occurrence and not an exception. This is why we have designed \sysname to be able to tolerate crash-faults with as little performance degradation as possible.

\subsection{The structure of the \sysname DAG} \label{sec:dag}
We present the structure of the \sysname DAG. Its main goal is to build an uncertified DAG protocol that provides the same guarantees as a certified DAG.

The \sysname protocols operate in a sequence of logical \emph{rounds}.  For every round, each honest validator proposes a unique signed \emph{block}; Byzantine validators may attempt to equivocate by sending multiple distinct blocks to different parties or no block. During a round, validators receive transactions from users and blocks from other validators and use them as part of their proposed blocks.
A block includes \emph{references to blocks} from prior rounds, always starting from their most recent block, alongside \emph{fresh transactions} not yet incorporated indirectly in preceding blocks.
Once a block contains references to at least $2f+1$ blocks from the previous round, the validator signs it and sends it to other validators.

Clients submit transactions to a validator, who subsequently incorporates them into their blocks. In the event that a transaction fails to become finalized within a specified time frame, the client selects an alternative validator for resubmission.

\para{Block correctness}
A block should include at a minimum (1) the author $A$ of the block and their signature on the block contents, (2) a round number $r$, (3) a list of transactions, and (4) at least $2f+1$ distinct hashes of blocks from the previous round, along potentially others from all previous rounds. By convention, the first hash must be to the previous block of $A$\footnote{
      This rule also helps to guarantee the safety of fast pah transactions upon epoch change (\Cref{sec:epoch-change}).
}. We index each block by the triplet $B \equiv (A, r, h)$, comprised of the author $A$, the round $r$, and the hash $h$ of the block contents.
A block is valid if (1) the signature is valid and $A$ is part of the validator set, and (2) all hashes point to distinct valid blocks from previous rounds, the first block links to a block from $A$, and within the sequence of past blocks, there are $2f+1$ blocks from the previous round $r-1$.

\para{Identifying DAG patterns} \label{sec:dag-patterns}
We say that a block $B'$ \emph{supports} a past block $B \equiv (A, r, h)$ if, in the depth-first search performed starting at $B'$ and recursively following all blocks in the sequence of blocks hashed, block $B$ is the first block encountered for validator $A$ at round $r$.
As \Cref{fig:support} illustrates, a block $(A_3, r+2, \cdot)$ (green) may reference blocks $(A_2, r+1, \cdot)$ and $(A_3, r+1, \cdot)$ from different validators that respectively support block $(A_3, r, L_r)$ (blue) and the equivocating block $(A_3, r, L_r')$ (red). At most one of these equivocating blocks can gather support from $2f+1$ validators.

\begin{figure}[t]
      \centering
      \includegraphics[scale=0.45]{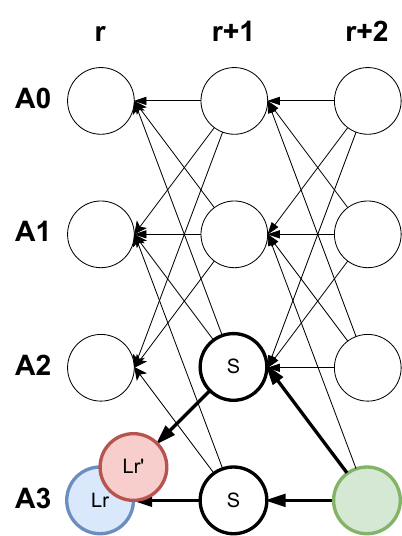}
      \caption{\footnotesize Block $(A_3, r+2, \cdot)$ (green) may reference blocks from different validators that support both $(A_3, r, L_r)$ (blue) and $(A_3, r, L_r')$ (red) equivocating blocks. If any of the blocks gathers $2f+1$ support, it will be certified, and we show that at most one may do so.}
      \vspace{-0.5cm}

      \label{fig:support}
\end{figure}

\sysnamec (\Cref{sec:consensus}) and \sysnamefpc (\Cref{sec:fastpath}) operate by interpreting the structure of the DAG to reach decisions using a single type of message, the block.
They mainly operate by identifying the following two patterns:
\begin{enumerate}
      \item The \emph{skip pattern}, illustrated by \Cref{fig:dag-patterns} (left), where at least $2f+1$ blocks at round $r+1$ \emph{do not} support a block $(A, r, h)$. Note that there may be multiple or no proposal for the slot. The skip pattern is identified if for all proposals, we observe 2f+1 subsequent blocks that do not support it (or support no proposal).
      \item The \emph{certificate pattern}, illustrated by \Cref{fig:dag-patterns} (right), where at least $2f+1$ blocks at round $r+1$  \emph{support} a block $B \equiv (A, r, h)$. We then say that $B$ is \emph{certified}. Any subsequent block (illustrated at $r+2$) that contains in its history such a pattern is called a \emph{certificate} for the block $B$.
\end{enumerate}

Using these patterns, we obtain certificates implicitly by interpreting the DAG, and the certification guarantees are identical to Narwhal~\cite{narwhal}. That is, a certified block ($2f+1$ support) is available and no other certified block may exist for the same spot $(A, r)$. This counter intuitively means that even if $A$ equivocates and one of its blocks is certified, we process it as being correct -- despite the self evident Byzantine behavior. This does not constitute a problem as we only commit blocks that belong to the implicitly certified part of the DAG. We also note that a skip pattern guarantees that a certificate will never exist for a block, and thus it will never be part of the implicitly certified DAG and can be safely skipped.

\begin{figure}[t]
      \begin{subfigure}[t]{0.23\textwidth}
            \centering
            \includegraphics[scale=0.45]{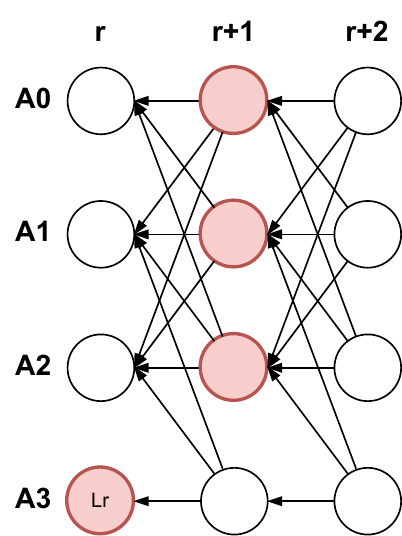}
            \caption{
                  \footnotesize Illustration of \emph{skip} pattern, blocks $(A_0, r+1, \cdot), (A_1, r+1, \cdot), (A_2, r+1, \cdot)$ do not support $(A_3, r, L_r)$.
            }
            \label{fig:skip-pattern}
      \end{subfigure}
      \hfill
      \begin{subfigure}[t]{0.23\textwidth}
            \centering
            \includegraphics[scale=0.45]{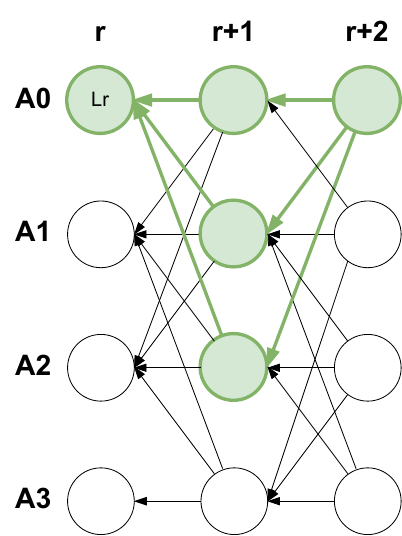}
            \caption{
                  \footnotesize Illustration of \emph{certificate} pattern, block $(A_0, r+2, \cdot)$ is a certificate for $(A_0, r, L_r)$.
            }
      \end{subfigure}
      \caption{
            \footnotesize Illustration of main DAG patterns identified by validators.
      }
      \vspace{-0.5cm}
      \label{fig:dag-patterns}
\end{figure}

\para{Liveness intuition} \label{sec:dag-liveness}
Since we are not using randomization, we need to rely on timeouts for liveness. Although every blocks has the potential of being committed directly in 3 message delays we cannot provide liveness for all of them through timeouts, as this would allow Byzantine validators to slow down the DAG to the point that every round would move at the speed of the timeout instead of network speed.

Instead we only provide guaranteed liveness after GST for one block per round\footnote{
      This can be extended to more blocks but it increases the chance that the adversary controls one block causing a full delay for the round.
}. We deem this block as the primary block of the round $r$ and require that validators at $r+1$ wait a timeout for it to arrive before disseminating their blocks. Additionally, if the block is in the view of a validator at $r+1$ we further require the validator to wait another timeout for $r+2$ or until there are $2f+1$ votes for the primary block of $r$. This guarantees the existence of a certificate over an honest primary block after GST and provides liveness for \sysnamec.

\section{The \sysnamec Consensus Protocol} \label{sec:consensus}
\sysnamec is the first DAG-based consensus protocol that decides blocks in 3 message delays. It achieves this through foregoing an explicit certification of the blocks and through treating every block as a first-class block that can be proposed and decided directly. Additionally, \sysnamec is able to instantly identify and exclude crashed validators, the most frequent failure case in blockchains in the wild.

\begin{figure*}[t]
    \centering
    \vspace{-0.4cm}
    \begin{subfigure}[t]{0.3\textwidth}
        \centering
        \includegraphics[width=\textwidth]{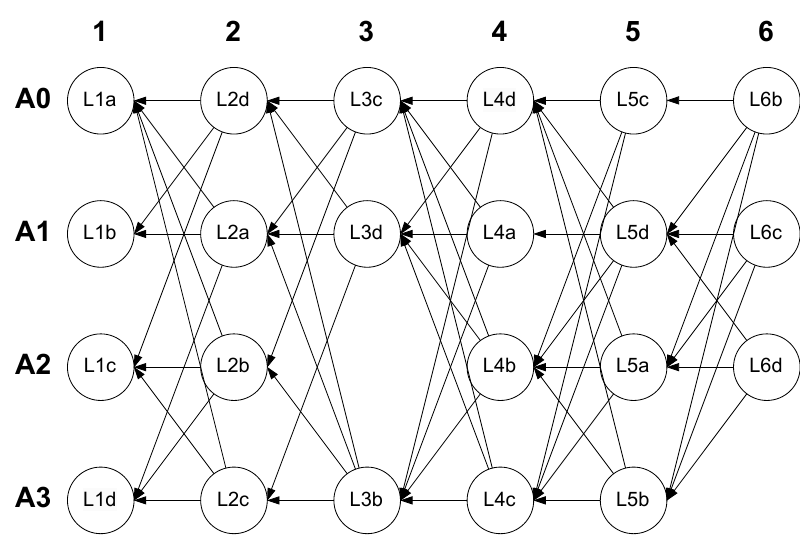}
        \caption{\footnotesize All proposers are initially \textsf{undecided}.}
        \label{fig:example-1}
    \end{subfigure}
    \hfill
    \begin{subfigure}[t]{0.3\textwidth}
        \centering
        \includegraphics[width=\textwidth]{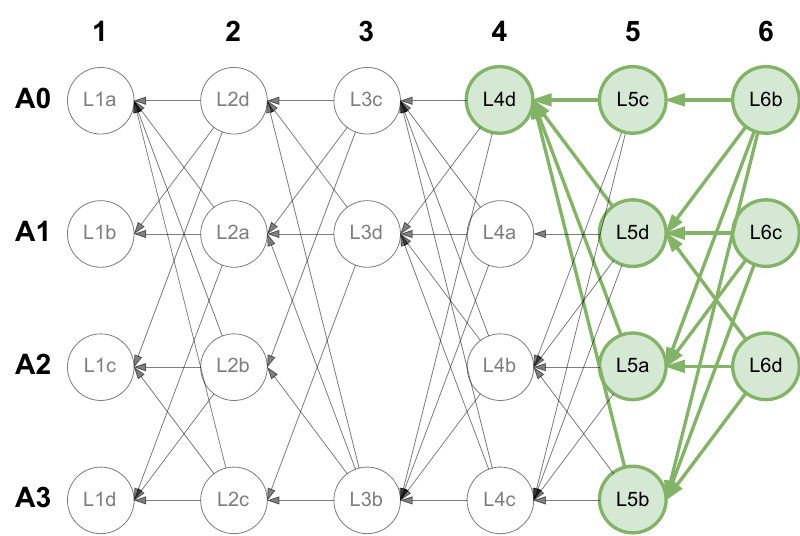}
        \caption{\footnotesize Direct decision rule: L4d is \textsf{to-commit}.}
        \label{fig:example-2}
    \end{subfigure}
    \hfill
    \begin{subfigure}[t]{0.3\textwidth}
        \centering
        \includegraphics[width=\textwidth]{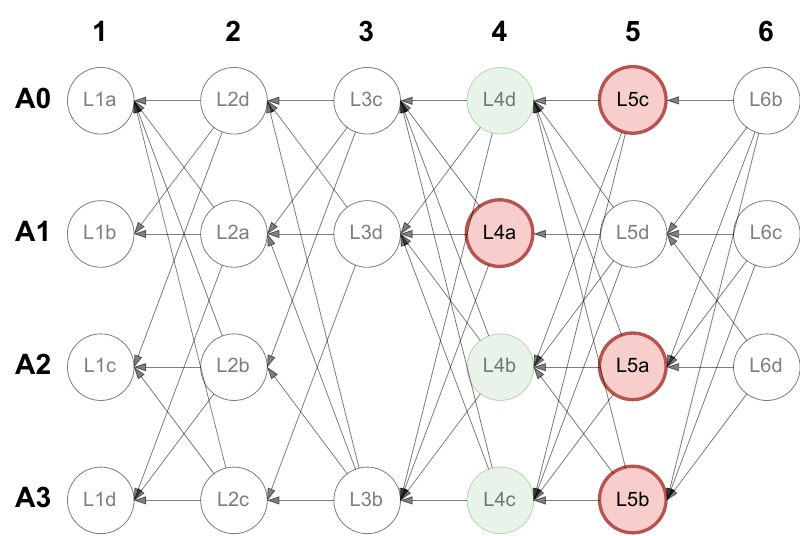}
        \caption{\footnotesize Direct decision rule: L4a is \textsf{to-skip}.}
        \label{fig:example-3}
    \end{subfigure}
    \hfill
    \begin{subfigure}[t]{0.3\textwidth}
        \centering
        \includegraphics[width=\textwidth]{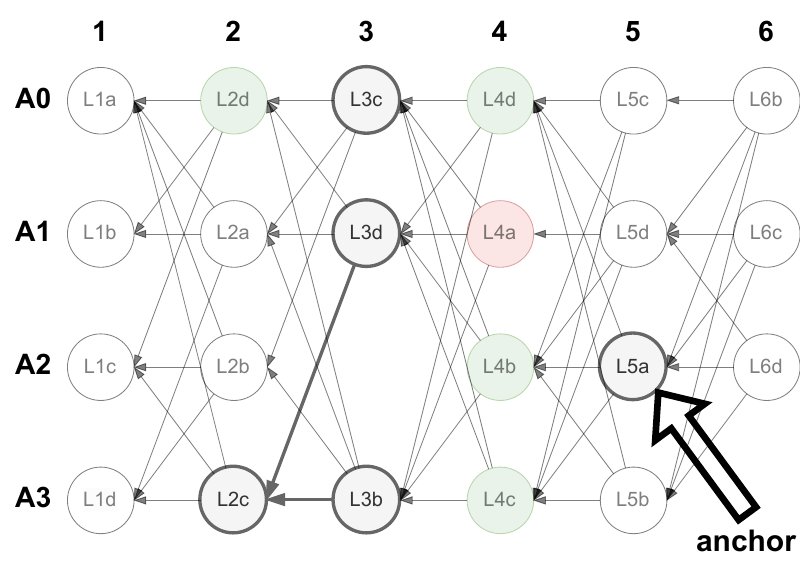}
        \caption{\footnotesize Indirect decision rule: L2c is \textsf{undecided}. Its anchor (L5a) is \textsf{undecided}, we cannot determine the status of L2c yet.}
        \label{fig:example-4}
    \end{subfigure}
    \hfill
    \begin{subfigure}[t]{0.3\textwidth}
        \centering
        \includegraphics[width=\textwidth]{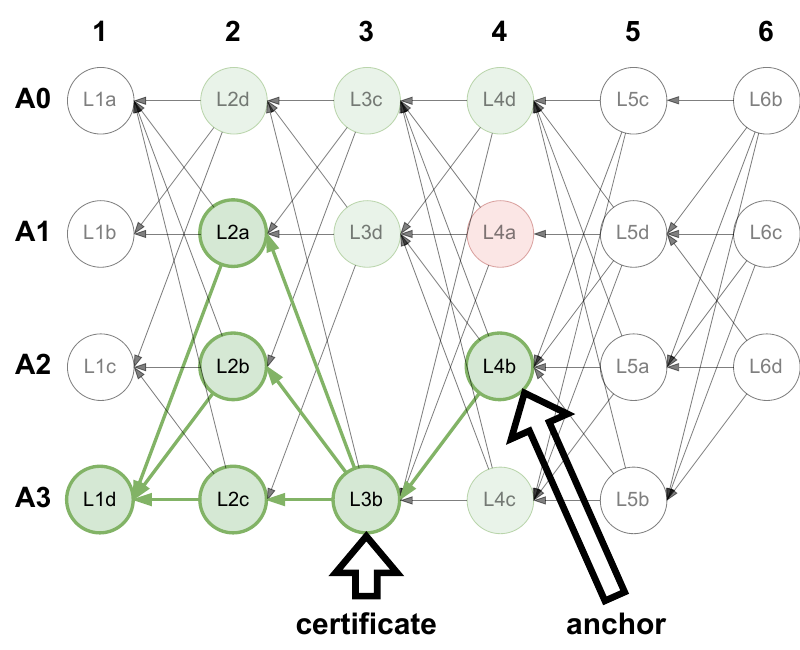}
        \caption{\footnotesize Indirect decision rule: L1d is \textsf{to-commit}. Its anchor (L4b) is \textsf{to-commit} and there's a certified link from L4b to L1d.}
        \label{fig:example-5}
    \end{subfigure}
    \hfill
    \begin{subfigure}[t]{0.3\textwidth}
        \centering
        \includegraphics[width=\textwidth]{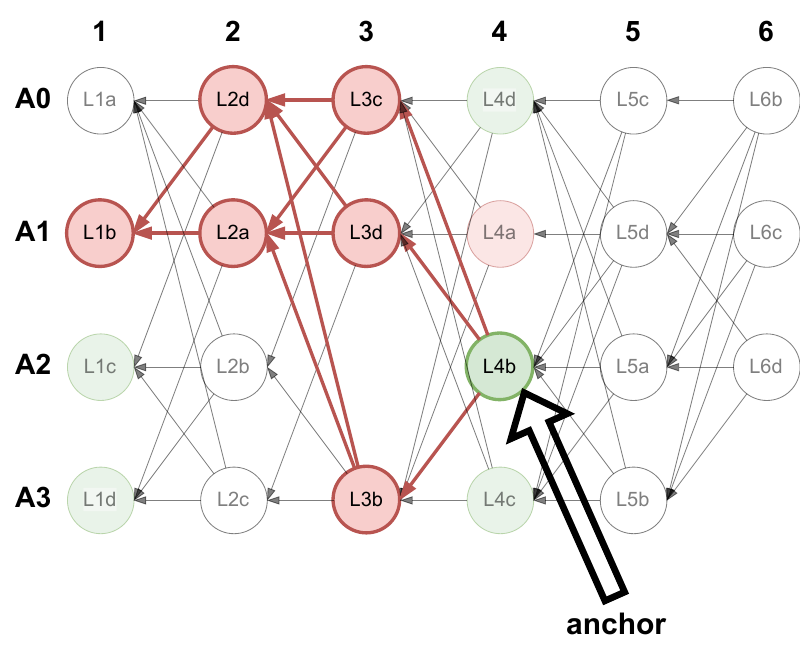}
        \caption{\footnotesize Indirect decision rule: L1b is \textsf{to-skip}. Its anchor (L4b) is \textsf{to-commit} and no certified link (only two links in round 2) from L4b to L1b.}
    \end{subfigure}
    \hfill

    \caption{\footnotesize Example application of the \sysnamec decision rule with four validators (A0, A1, A2, A3) and four proposer slots per round.}
    \vspace{-0.5cm}

    \label{fig:example-6}
    \label{fig:example}
\end{figure*}

\subsection{Proposer slots}
\sysnamec introduces the concept of \emph{proposer slot}. A proposer slot represents a tuple (validator, round) and can be either empty or contain the validator's proposal for the respective round. For instance, in Bullshark~\cite{bullshark}, there is a single proposer every two rounds, which results in higher latencies. Unfortunately, it is not trivial to increase the number of slots, as the commit rule of Bullshark relies on the fact that every proposer slot has a link to every other proposer slot, something that is not possible even if there is a single proposer per round, let alone $n$.

We overcome this challenge by introducing multiple \emph{states} for each proposer slot, namely: \textsf{to-commit}, \textsf{to-skip}, or \textsf{undecided}.
The \textsf{to-commit} state is the equivalent of the \textsf{decided} state that already exists in the prior work. The most important state is the \textsf{undecided},
which forces all subsequent proposer slots to wait, mitigating the risk of non-deterministic commitments due to network asynchrony without the need for a buffer round as prior work~\cite{dag-rider,narwhal,bullshark,dumbo-ng}. Finally, the \textsf{to-skip} state allows to exclude proposer slots assigned to crashed validators, thus allowing the subsequent slots to commit.

The number of proposer slots instantiated per round can be configured but for systems with few faults it can be set to $n$ so that every block has a chance to commit in $3$ steps. It can also be dynamically adjusted based on the network conditions, following a similar deterministic approach to HammerHead~\cite{tsimos2023hammerhead} (see
\iflongversion
    \Cref{sec:example}
\else
    the long version of the paper~\cite{mysticeti-long}
\fi
).
Initially, we establish a deterministic total order among all pending proposer slots, aligning with the round ordering. Within a single round, the ordering may either remain fixed or change per round (e.g., round robin).
\Cref{fig:example} illustrates an example of a \sysname DAG with four validators, (A0, A1, A2, A3), four slots per round, and a potential proposer slot ordering represented as (L1a, L1b, L1c, L1d) and (L2a, L2b, L2c, L2d) for the first and second rounds, respectively. This order resembles a FIFO queue.

As discussed in \Cref{sec:dag-liveness}, validators await the proposal from the primary validator assigned to the first proposer slot of round $r$ for up to a predetermined delay $\Delta$ before generating their own proposal for round $r+1$. \Cref{sec:security} shows that this delay ensures the liveness of the protocol.

\subsection{The \sysnamec decision rule}
This section describes the decision rule of \sysnamec leveraging an example protocol run. \Cref{sec:algorithms} provides detailed algorithms.
As illustrated by \Cref{fig:example-1}, all proposer slots are initially in the \textsf{undecided} state. The end goal of \sysnamec is to mark all proposer slots as either \textsf{to-commit} or \textsf{to-skip} by detecting the DAG patterns presented in \Cref{sec:dag-patterns}. The \sysnamec decision rule operates in three steps:

\para{Step 1: Direct decision rule}
Starting with the latest proposer slot (L6d in \Cref{fig:example}), the validator applies the following \emph{direct decision rule} to attempt to determine the status of the slot.
The validator marks a slot as \textsf{to-commit} if it observes $2f+1$ \emph{commit patterns} for that slot, that is, if it accumulates $2f+1$ distinct implicit certificate blocks for it (see \Cref{sec:dag}). This is the \textbf{first key design point} for lowering the latency as we certify blocks while constructing the DAG by interpreting \emph{certificate patterns}.

\Cref{fig:example-2} illustrates the direct decision rule applied to L4d, which is marked as \textsf{to-commit} in just 3 messages due to the presence of $2f+1$ commit patterns.
The first message delay is the proposal block; the second message delay is the block(s) supporting and voting/certification; and the third message delay is the block(s) certifying serving as acknowledgment/commitment.
The direct decision rule marks a slot as \textsf{to-skip} if it observes a \emph{skip pattern} for that slot. That is for any proposal for the slot (there may be multiple due to potential equivocation) it observes $2f+1$ blocks that do not support it or support no proposal.
\Cref{fig:example-3} demonstrates the direct decision rule applied to L4a, which is marked as \textsf{to-skip} due to the presence of a skip pattern.

Promptly marking slots as \textsf{to-skip} is the \textbf{second key design point} that contributes to the reduction of \textsf{undecided} slots following crash-failures and allows \sysnamec to tolerate crash-faults virtually for free.

If the direct decision rule fails to mark a slot as either \textsf{to-commit} or \textsf{to-skip}, the slot remains \textsf{undecided} and the validator resorts to the \emph{indirect decision rule} presented in step 2 below. During normal operations, however, we expect the direct decision rule to succeed and to only resort to the indirect decision rule during periods of asynchrony or under attacks.

\para{Step 2: Indirect decision rule}
If the direct decision rule fails to determine the slot, the validator resorts to the indirect decision rule to attempt to reach a decision for the slot.
This rule operates in two stages. It initially searches for an \emph{anchor}, which is defined as the first slot with the round number $(r' > r + 2)$ that is already marked as either \textsf{undecided} or \textsf{to-commit}\footnote{This section assumes a fixed distance of 3 rounds between a proposer slot which is the minimum secure distance. \Cref{sec:algorithms} generalize this rule to a variable distance and discusses its tradeoffs.}.
\Cref{fig:example-4} and \Cref{fig:example-5} respectively illustrate the anchor of L2c (marked as \textsf{undecided}) and the anchor of L1d (marked as \textsf{to-commit}).

If the anchor is marked as \textsf{undecided} the validator marks the slot as \textsf{undecided} (\Cref{fig:example-4}).
Conversely, if the anchor is marked as \textsf{to-commit}, the validator marks the slot either as \textsf{to-commit} if the anchor causally references a certificate pattern over the slot or as \textsf{to-skip} in the absence of a certificate pattern.
\Cref{fig:example-5} illustrates the indirect decision rule applied to L1d, which is marked as \textsf{to-commit} due to the presence of a certificate pattern linking L4b to L1d. On the other hand, \Cref{fig:example-6} demonstrates the indirect decision rule applied to L1b, which is marked as \textsf{to-skip} due to the absence of a certificate pattern linking L4b to L1b.

This is the \textbf{third key design point} contributing to the safety of \sysnamec without the need for links between proposers. Namely, instead of forcing a direct happened-before relationship between proposer slots, we take advantage of the predefined total ordering of proposer slots to ensure that any decision is recursively carried forward such that no matter the commit pattern, the commit decisions are deterministic.

\para{Step 3: Commit sequence}
After processing all slots, the validator derives an ordered sequence of slots. Subsequently, the validator iterates over that sequence, committing all slots marked as \textsf{to-commit} and skipping all slots marked as \textsf{to-skip}. This iteration continues until the first \textsf{undecided} slot is encountered. \Cref{sec:security} demonstrates that this commit sequence is safe and that eventually all slots will be classified as either \textsf{to-commit} or \textsf{to-skip}.
In the example depicted in \Cref{fig:example}, the commit sequence is L1a, L1c, L1d, L2a.
\iflongversion
    \Cref{sec:example}
\else
    The long version of the paper~\cite{mysticeti-long}
\fi
provides a detailed walkthrough of the decision rule applied to the example DAG of \Cref{fig:example}.

This is the \textbf{final key design point} of \sysnamec; unlike prior work that commits everything the moment a decision rule exists, \sysnamec applies some backpressure through undecided slots to preserve safety. This, however, does not harm performance, as these undecided slots would have not even existed as possible commit candidates in prior designs.

\subsection{Choosing the number of proposer slots}
The example presented by \Cref{fig:example} assumes a number of proposer slots per round equal to the committee size.
While this choice offers the best latency under normal conditions, it may impact performance during periods of extreme asynchrony or under Byzantine attack.

In these cases, the probability that the direct decision rule fails to classify a proposer slot increases when some proposer slots are slow or equivocate. This forces the validator to resort to the indirect decision rule more often.
As a result, there can be an increase in the number of undecided slots, which in turn delays the commit sequence. \Cref{fig:example} illustrates this example through the classification of L2c and L1b as \textsf{undecided}, preventing the exemplified protocol execution from immediately committing L2d, L3b, L3c, L3d, L4b, L4c, and L4d, which would have been possible under ideal conditions. This is nevertheless an extreme case of the adversary controlling the network and some validators only to slow down the system without any actual profit.
After a decade of running blockchains in the wild, this is not something that has been witnessed, as attackers tend to attack in order to break safety and not liveness.

Nevertheless, in order to mitigate it we use
HammerHead~\cite{tsimos2023hammerhead} in order to select $2f+1$ leaders that are best performing as candidate leaders. This strikes a good balance as it does not increase the median latency and only increases the expected latency by $\frac{1}{3}$ of a delay.
\Cref{sec:algorithms} provides detailed \sysnamec algorithms that allow the number of proposer slots per round to be configurable.

\subsection{\sysnamec Algorithms} \label{sec:algorithms}
This section presents the detailed algorithms of \sysnamec. It can be skipped if a high-level understanding is sufficient.

\begin{algorithm}[t]
    \caption{Helper functions}
    \label{alg:consensus-utils}
    \footnotesize

    \begin{algorithmic}[1]
        \Procedure{GetProposerBlock}{$w$}
        \State $r_{proposer} \gets \Call{ProposerRound}{w}$
        \State $id \gets \Call{GetPredefinedProposer}{r_{proposer}}$
        \If{$\exists b \in DAG[r_{proposer}] \text{ s.t. } b.author = id$} \Return $b$ \EndIf
        \State \Return $\perp$
        \EndProcedure

        \Statex
        \Procedure{GetFirstVotingBlocks}{$w$}
        \State $r_{voting} \gets \Call{ProposerRound}{w} + 1$
        \State \Return $DAG[r_{voting}]$
        \EndProcedure

        \Statex
        \Procedure{GetDecisionBlocks}{$w$}
        \State $r_{decision} \gets \Call{DecisionRound}{w}$
        \State \Return $DAG[r_{decision}]$
        \EndProcedure

        \Statex
        \Procedure{Link}{$b_{old}, b_{new}$}
        \State \Return exists a sequence of $k\in\mathbb{N}$ blocks $b_1, \dots, b_k$ s.t. $b_1 = b_{old}, b_k = b_{new}$ and $\forall j \in [2, k]: b_j \in \bigcup_{r \geq 1} DAG[r] \wedge b_{j-1} \in b_{j}.parents$
        \EndProcedure

        \Statex
        \Procedure{IsVote}{$b_{vote}, b_{proposer}$}
        \Function{SupportedBlock}{$b, id, r$}
        \If{$r \geq b.round$} \Return $\perp$ \EndIf
        \For{$b' \in b.parents$}
        \If{$(b'.author, b'.round) = (id, r)$} \Return $b'$ \EndIf
        \State $res \gets \Call{SupportedBlock}{b', id, r}$
        \If{$res \neq \perp$} \Return $res$ \EndIf
        \EndFor
        \State \Return $\perp$
        \EndFunction
        \State $(id, r) \gets (b_{proposer}.author, b_{proposer}.round)$
        \State \Return $\Call{SupportedBlock}{b_{vote}, id, r} = b_{proposer}$
        \EndProcedure

        \Statex
        \Procedure{IsCert}{$b_{cert}, b_{proposer}$}
        \State $res \gets |\{b \in b_{cert}.parents: \Call{IsVote}{b, b_{proposer}}\}|$
        \State \Return $res \geq 2f+1$
        \EndProcedure

        \Statex
        \Procedure{SkippedProposer}{$w$}
        \State $r_{proposer} \gets \Call{ProposerRound}{w}$
        \State $id \gets \Call{GetPredefinedProposer}{r_{proposer}}$
        \State $B \gets \Call{GetFirstVotingBlocks}{w}$
        \State $res \gets |\{b \in B \text{ s.t. } \forall b' \in b.parents: b'.author \neq id \}|$
        \State \Return $res \geq 2f+1$
        \EndProcedure

        \Statex
        \Procedure{SupportedProposer}{$w$}
        \State $b_{proposer} \gets \Call{GetProposerBlock}{w}$
        \State $B \gets \Call{GetDecisionBlocks}{w}$
        \If {$|\{b' \in B: \Call{IsCert}{b', b_{proposer}}\}| \geq 2f+1$} \State \Return $b_{proposer}$
        \EndIf
        \State \Return $\perp$
        \EndProcedure

        \Statex
        \Procedure{CertifiedLink}{$b_{anchor}, b_{proposer}$}
        \State $w \gets \Call{WaveNumber}{b_{proposer}.round}$
        \State $B \gets \Call{GetDecisionBlocks}{w}$
        \State \Return $\exists b \in B \text{ s.t. } \Call{IsCert}{b, b_{proposer}}$ \& $\Call{Link}{b, b_{anchor}}$
        \EndProcedure
    \end{algorithmic}
\end{algorithm}
\begin{algorithm}[t]
    \caption{DirectDecider Algorithm}
    \label{alg:baseline-committer}
    \footnotesize

    \begin{algorithmic}[1]
        \State \texttt{waveLength} \Comment{Defaults to $3$}
        \State \texttt{roundOffset}
        \State \texttt{proposerOffset}


        \Statex
        \Procedure{TryDirectDecide}{$w$}
        \If{$\Call{SkippedProposer}{w}$} \Return $\texttt{Skip}(w)$ \EndIf \label{alg:line:baseline:skipped-proposer}
        \State $b_{proposer} \gets \Call{SupportedProposer}{w}$
        \If{$b_{proposer} \neq \perp$} \Return $\texttt{Commit}(b_{proposer})$ \EndIf
        \State \Return $\perp$
        \EndProcedure


        \Statex
        \Procedure{WaveNumber}{$r$}
        \State \Return $(r - \texttt{roundOffset}) / \texttt{waveLength}$
        \EndProcedure

        \Statex
        \Procedure{ProposerRound}{$w$}
        \State \Return $w * \texttt{waveLength} + \texttt{roundOffset}$
        \EndProcedure

        \Statex
        \Procedure{DecisionRound}{$w$}
        \State \Return $w * \texttt{waveLength} + \texttt{waveLength} - 1 + \texttt{roundOffset}$
        \EndProcedure

        \Statex
        \Procedure{GetPredefinedProposer}{$w$}
        \State $r_{proposer} \gets \Call{ProposerRound}{w}$
        \State \Return $\Call{PredefinedProposer}{r_{proposer}+\texttt{ProposerOffset}}$
        \EndProcedure
    \end{algorithmic}
\end{algorithm}
\begin{algorithm}[t]
    \caption{\sysnamec}
    \label{alg:universal-committer}
    \footnotesize

    \begin{algorithmic}[1]
        \State \texttt{committeeSize}
        \State \texttt{waveLength} \Comment{Defaults to $3$}
        \State \texttt{numOfProposers} \Comment{Set to $2$ in \Cref{sec:evaluation}}

        \Statex
        \Procedure{TryDecide}{$r_{committed}, r_{highest}$} \label{alg:line:try-decide}
        \State $sequence \gets [\;]$
        \For{$r \in [r_{highest} \text{ down to } r_{committed}+1]$}
        \For{$l \in [\texttt{numOfProposers}-1 \text{ down to } 0]$}
        \State $i \gets r \; \% \; \texttt{waveLength}$
        \State $c \gets \textsf{DirectDecider}(\texttt{waveLength}, i, l)$ \label{alg:line:direct-decider}
        \State $w \gets c.\Call{WaveNumber}{r}$
        \State $status \gets c.\Call{TryDirectDecide}{w}$ \label{alg:line:universal:try-direct-decide}
        \If{$status = \perp$} \label{alg:line:universal:direct-decide-failed}
        \State $status \gets \Call{TryIndirectDecide}{c, w, sequence}$ \label{alg:line:universal:try-indirect-decide}
        \EndIf
        \State $sequence \gets status || sequence$
        \EndFor
        \EndFor
        \State $decided \gets [\;]$
        \For{$status \in sequence$}
        \If{$status = \perp$} break \EndIf
        \State $decided \gets decided || status$
        \EndFor
        \State \Return $decided$
        \EndProcedure

        \Statex
        \Procedure{TryIndirectDecide}{$c, w, sequence$} \label{alg:line:try-indirect-decide}
        \State $r_{decision} \gets c.\Call{DecisionRound}{w}$
        \State $anchors \gets [s \in sequence \text{ s.t. } r_{decision} < s.round]$
        \For{$a \in anchors$}
        \If{$a = \perp$} \Return $\perp$ \EndIf \label{alg:line:universal:indirect-decide-failed}
        \If{$a = \texttt{Commit}(b_{anchor})$}
        \State $b_{proposer} \gets c.\Call{GetProposerBlock}{w}$
        \If{$c.\Call{CertifiedLink}{b_{anchor}, b_{proposer}}$} \label{alg:line:universal:certified-link}
        \State \Return $\texttt{Commit}(b_{proposer})$ \label{alg:line:universal:indirect-commit}
        \Else
        \State \Return $\texttt{Skip}(w)$ \label{alg:line:universal:indirect-skip}
        \EndIf
        \EndIf
        \EndFor
        \State \Return $\perp$
        \EndProcedure
    \end{algorithmic}
\end{algorithm}

\Cref{alg:consensus-utils} provides base utility functions common to many DAG-based consensus protocols~\cite{narwhal,dag-rider,bullshark}.
The function $\Call{PredefinedProposer}{\cdot}$ of \Cref{alg:baseline-committer} is a determinist leader election function, such as round robin.
\sysnamec has one type of message; the block and its validity rules are described in \Cref{sec:dag}. Every node simply proposes blocks for every round, and the validity rules make sure this happens at a beneficial pace.

\Cref{alg:universal-committer} presents the \sysnamec algorithm that is run every time a valid block is received. \sysnamec is instantiated with the following parameters:. (1) The committee size \texttt{committeeSize}. (2) The wavelength \texttt{wave\_lenght}, which the description of \Cref{sec:consensus} assumes to always equal $3$. A larger wavelength parameter increases the probability of observing a certificate pattern (\Cref{sec:dag}) over proposer slots during periods of asynchrony but increases the median latency during periods of network synchrony. (3) The number of proposer slots per round, which the example depicted by \Cref{fig:example} of \Cref{sec:consensus} assumes to equal the committee size.

The entry point of this algorithm is the procedure $\Call{TryDecide}{\cdot}$ (\Cref{alg:line:try-decide}). It operates by instantiating a \textsf{Direct Decider} (\Cref{alg:baseline-committer}) for each possible proposer slot in each round that applies the direct decision rule (\Cref{alg:line:direct-decider}). Each \textsf{Direct Decider} instance is instantiated with a round offset $\texttt{roundOffset}=r$ and a proposer offset $\texttt{proposerOffset}=l$, such that each instance operates over a unique proposer slot. These instances try to apply the direct decision rule to their proposer slot by calling the procedure $\Call{TryDirectDecide}{\cdot}$ (\Cref{alg:line:universal:try-direct-decide}). If the direct decision rule fails, \Cref{alg:universal-committer} resorts to the indirect decision rule (\Cref{alg:line:universal:try-indirect-decide}). The algorithm returns the commit sequence.

\section{The \sysnamefpc fast path protocol} \label{sec:fastpath}
For workloads necessitating consensus, the \sysnamec protocol successfully achieves a low latency bound. However, popular workloads~\cite{neiheiser2024chiron} such as asset transfers, payments or NFT minting, can be finalized before consensus, through and even lower latency fast path. This section presents \sysnamefpc that extends the consensus protocol with such consensusless transactions.

\subsection{Embedding a fast path into the DAG}
The real-world deployment of such hybrid blockchains, exemplified by Sui~\cite{sui-lutris, sui}, capitalizes on the insight that certain objects, like coins, solely access state controlled by a single party and need not undergo consensus. These objects can be finalized through a fast path utilizing reliable broadcast. Such objects are classified as having an \emph{owned object} type as opposed to the traditional \emph{shared object} type. Transactions that exclusively involve owned objects as inputs are called \emph{fast path transactions}. Two transactions \emph{conflict} if they take as input the same owned object.

In \sysnamefpc validators include transactions, and explicitly \emph{vote} for causally past transactions, in their blocks. A validator includes a transaction $T$ in its block if it does not conflict with any other transaction for which the validator has previously voted. This is also an implicit vote for the transaction. Other validators, include explicit votes for $T$ in a block $B$ if: (i) $T$ is present in the causal history of $B$; and (ii) $T$ does not conflict with any other already voted on transaction. In our implementation (\Cref{sec:implementation}), we denote the vote for a transaction $T$ appearing in block $B$ at position $i$ as the tuple $(B,i)$.
Once $T$ has $2f+1$ votes from distinct validators, we call $T$ \emph{certified}. It is a guarantee that no two conflicting transaction will be certified in the same epoch. This is the basis of the fast path safety. Transaction $T$ is finalized when either (i) there exists $2f+1$ validators supporting a certificate over $T$, even before a \sysnamec commit, or (ii)  \sysnamec commits through consensus a block that contains a certificate over $T$ in its causal history (see \Cref{sec:fastpath-execution}).

In contrast to previous approaches~\cite{sui-lutris, fastpay, zef, astro}, the fast path in \sysnamefpc is integrated within the DAG structure itself. This eliminates the need for additional protocol messages and for validators to individually sign each fast-path transaction. Instead, a validator's fast path votes are embedded within its signed blocks, which are already produced as part of the consensus protocol. Consequently, in addition to the block contents of \sysnamec, blocks in \sysnamefpc also incorporate explicit votes for transactions involving at least one owned object input. This deep embedding in the DAG additionally simplifies checkpoints~\cite{sui-lutris} as it does not require an external sub-protocol to collect all fast-path transactions that have been finalized. Instead, \sysnamefpc simply defines checkpoints as the set of finalized fast path transactions referenced by the causal history of each \sysnamec commit. These can then be used to make sure that all validators have the same state for an epoch change.

To summarize, \sysnamefpc offers several advantages compared to prior work: (i) A reduction in the number of signature generation and verification operations alleviating the compute bottleneck. (ii) Elimination of a separate post-consensus checkpointing mechanism, resulting in reduced synchronization latency, as the consensus commits themselves serve as checkpoints. (iii) Simplification of the epoch close mechanism, as we examine next.

\subsection{Execution and finality} \label{sec:fastpath-execution}
Similarly to Sui~\cite{sui-lutris}, \sysnamefpc introduces the distinction between \emph{fast path execution} and \emph{fast path finality}. The former refers to the moment when a transaction is executed by a validator, the execution effects are known, and the validator can execute subsequent transactions over the same object. The latter signifies when a transaction is considered final, ensuring persistence across epoch boundaries and validator reconfigurations.

\para{Fast path execution}
A validator can safely execute a fast path transaction once it observes blocks from $2f+1$ validators that include a vote for the transaction. Due to quorum intersection, no correct validator will ever execute conflicting fast path transactions.
\Cref{fig:fastpath-execution} illustrates a DAG pattern enabling the validator to safely execute fast path transactions $T_1$ and $T_3$. The blocks $(A_0, r, \cdot)$ contain the fast path transactions $T_1$, $T_3$, and $T_6$, while the blocks $(A_0, r+1, \cdot)$, $(A_1, r+1, \cdot)$, and $(A_2, r+1, \cdot)$ support $(A_0, r, L_r)$ and explicitly vote for $T_1$ and $T_3$ (but not for $T_6$\footnote{Transaction $T_6$ may conflict with another transaction for which the validator already voted.}). Upon observing these blocks, the validator can safely execute $T_1$ and $T_3$.
Note that \sysnamefpc transaction execution can be extremely low-latency, requiring only a single round of communication, as opposed to the $2$ rounds required by related work~\cite{sui-lutris,fastpay,astro,zef}.

\begin{figure}[t]
    \centering
    \includegraphics[scale=0.45]{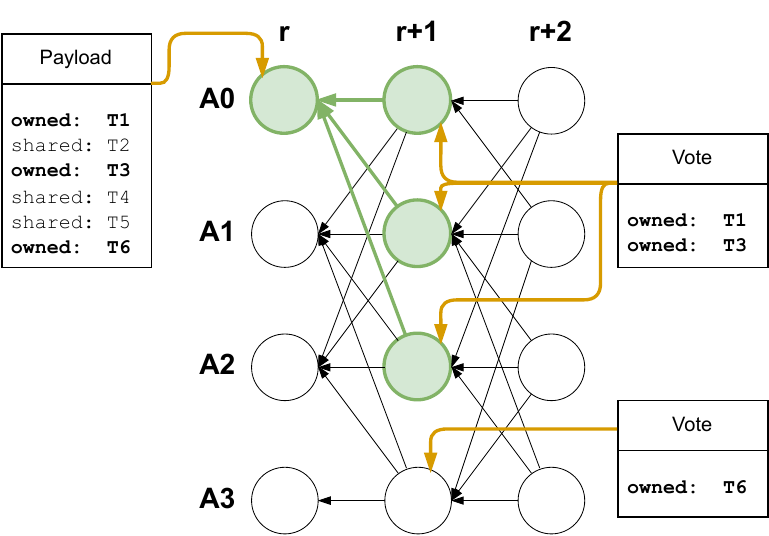}
    \caption{\footnotesize Illustration fast path transaction execution. The blocks $(A_0, r, \cdot)$ contain the fast path transactions $T_1$, $T_3$, and $T_6$. Blocks $(A_0, r+1, \cdot), (A_1, r+1, \cdot), (A_2, r+1, \cdot)$ support $(A_0, r, L_r)$ and explicitly vote for $T_1$ and $T_3$ (but not $T_6$). Upon observing these blocks, the validator can safely execute $T_1$ and $T_3$.}
    \label{fig:fastpath-execution}
\end{figure}

\para{Fast path finality}
Transactions executed by some honest validators can still be reverted since there is no guarantee that other validators will eventually observe sufficient evidence to execute the transaction.
For instance, \Cref{fig:revert-execution} illustrates a scenario where transactions $T_1$ and $T_3$ are executed by validator $A_3$ at round $r+3$, but no proposals from that validators are included into the DAG for rest of the epoch, possibly due to network asynchrony. Consequently, no other validator observes sufficient evidence to execute those transactions, and validator $A_3$ reverts their execution upon epoch change. Note that reverting execution is a straightforward operation and already supported by the Sui protocol, the only blockchain deploying a fast path.

\begin{figure}[t]
    \centering
    \includegraphics[scale=0.45]{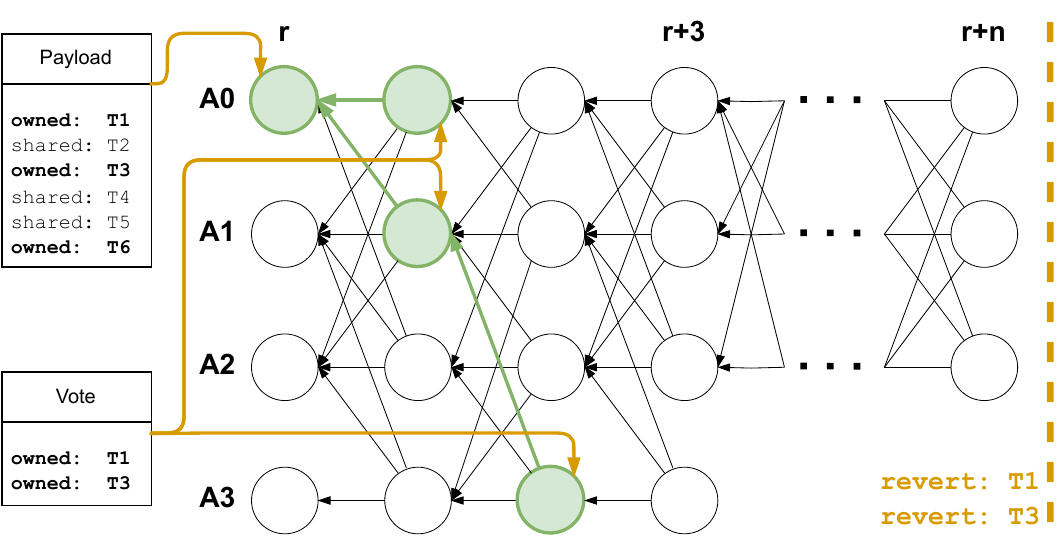}
    \caption{\footnotesize Illustration of the scenario where transactions $T_1$ and $T_3$ are executed by validator $A_3$ at round $r+2$ but no other validator observes sufficient votes to execute those transactions, and validator $A_3$ reverts their execution upon epoch change.}
    \label{fig:revert-execution}
\end{figure}

To ensure that the effects of a fast path transaction endure across epoch boundaries and validator reconfiguration, it must be \emph{finalized}. A fast path transaction is finalized when the validator observes either (1) $2f+1$ certificate patterns over the block proposing the transaction (as detailed in \Cref{sec:dag}), each containing $2f+1$ votes for the transaction, or (2) a single certificate pattern over the block proposing the transaction, which includes $2f+1$ votes for the transaction and is referenced in the causal history of a block committed by the consensus protocol.
\Cref{fig:fastpath-finalization} illustrates these two possible finality pattern for fast path transactions $T_1$ and $T_3$.

The finality of a fast-path transaction across epochs is proven by \Cref{thm:epoch-safety} of \Cref{sec:security}.
Additionally, \Cref{sec:mixed-objects} outlines how \sysnamefpc accommodates transactions containing both owned object and non-owned object inputs.

\begin{figure}[t]
    \centering
    \begin{subfigure}[t]{0.5\textwidth}
        \centering
        \includegraphics[scale=0.45]{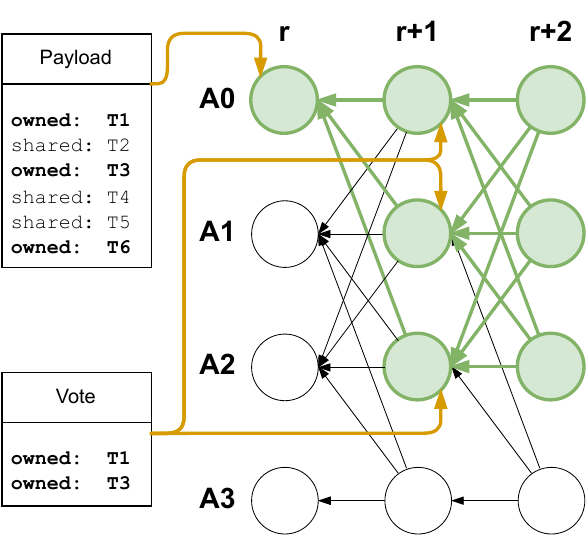}
        \caption{\footnotesize
            Transactions $T_1$ and $T_3$ proposed by $(A_0, r, \cdot)$ are finalized at round $r+2$ upon observing the $2f+1$ certificate pattern defined by $(A_0, r+2, \cdot)$, $(A_1, r+2, \cdot)$, and $(A_3, r+2, \cdot)$, referencing the $2f+1$ blocks $(A_0, r, \cdot)$, $(A_1, r, \cdot)$, and $(A_3, r, \cdot)$ that explicitly vote for $T_1$ and $T_3$.
        }
        \label{fig:fastpath-finalization-1}
    \end{subfigure}
    \begin{subfigure}[t]{0.5\textwidth}
        \centering
        \includegraphics[scale=0.45]{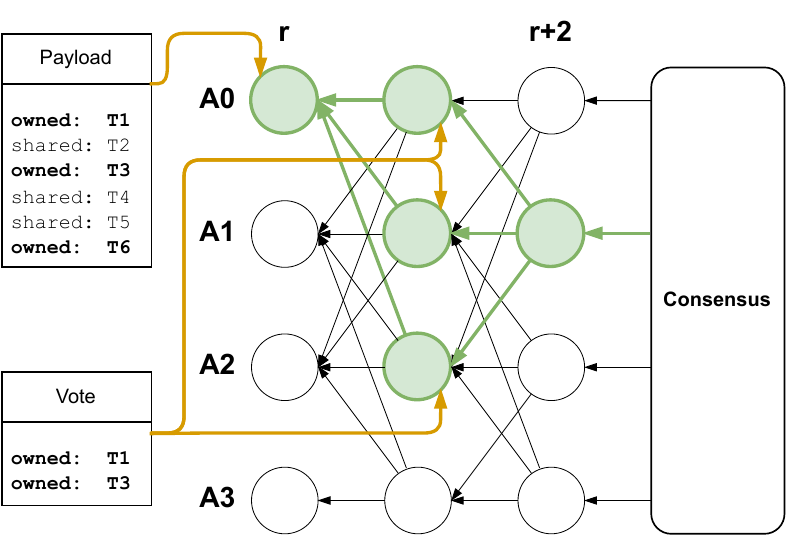}
        \caption{ \footnotesize
            Transactions $T_1$ and $T_3$ proposed by $(A_0, r, \cdot)$ are finalized after consensus upon committing block $(A_1, r+2, \cdot)$. This block defines a certificate pattern over $(A_0, r, \cdot)$ that contains $(A_0, r+1, \cdot)$, $(A_1, r+1, \cdot)$, and $(A_3, r+1, \cdot)$ that vote for $T_1$ and $T_3$.
        }
        \label{fig:fastpath-finalization-2}
    \end{subfigure}
    \caption{
        \footnotesize Illustration of the two fast path transaction finalization scenarios.
    }
    \label{fig:fastpath-finalization}
\end{figure}

\subsection{Mixed-objects transactions} \label{sec:mixed-objects}
\sysnamefpc allows for transactions that contain both owned-object and non-owned-object inputs. Such transactions are called \emph{mixed-objects transactions}. Validators execute and finalize these transactions upon observing (1) blocks from $2f+1$ validators that include a vote for the transaction, and (2) a block committed by the consensus protocol referencing these blocks in its causal history.

\Cref{fig:mixed-objects} provides an example illustrating the finalization of a mixed-object transaction. This mechanism intuitively operates in two steps: first, it ``locks'' the owned-object inputs, and then sequences this lock to prevent the execution of potentially conflicting owned-object transactions. The safety of this approach is guaranteed by \Cref{thm:fpc-safety} of \Cref{sec:security}.

\begin{figure}[t]
    \centering
    \includegraphics[scale=0.45]{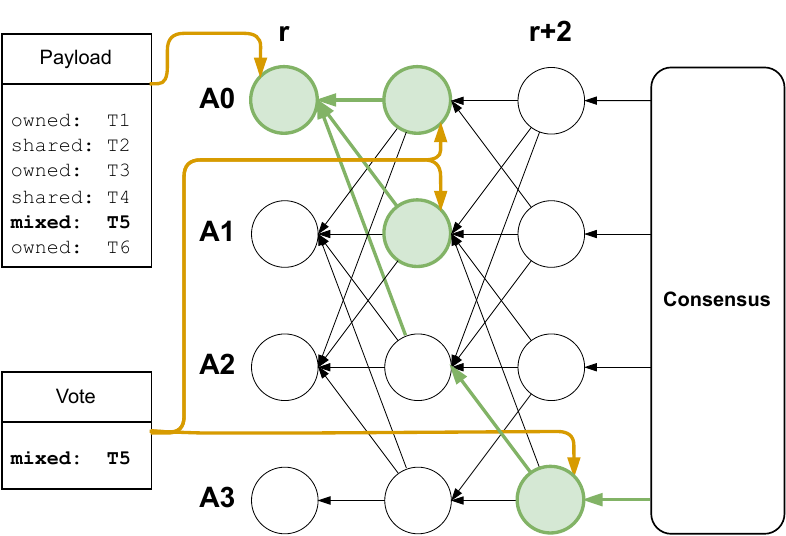}
    \caption{
        \footnotesize Illustration of a mixed-objects transaction $T_5$ that contains both owned-object inputs and non-owned-object inputs.
        $T_5$ is proposed as part of $(A_0, r, \cdot)$. Blocks $(A_0, r_1, \cdot)$, $(A_1, r+1, \cdot)$, and $(A_3, r+2, \cdot)$ vote for $T_5$. The validator can execute and finalize $T_5$ once block $(A_3, r+2, \cdot)$ is committed by the consensus protocol.
    }
    \label{fig:mixed-objects}
\end{figure}

\subsection{Epoch change and reconfiguration} \label{sec:epoch-change}

As mentioned in \Cref{sec:consensus}, quorum-based blockchains typically operate in epochs, allowing validators to join and leave the system at epoch boundaries. Moreover, epoch boundaries serve as natural boundaries for protocols with a consensusless path to ``unlock" transactions that have lost liveness due to equivocation from the client~\cite{sui-lutris, cuttlefish}.
This committee reconfiguration process must uphold a critical safety property: transactions finalized in an epoch should persist across subsequent epochs. In other words, transactions finalized in the current epoch should not conflict with transactions committed in future epochs. This holds trivially for consensus protocols which is why we omit the epoch change for \sysnamec.

\para{The \sysnamefpc epoch-change protocol}
The safety of reconfiguration is ensured by including all finalized transactions from the current epoch into the causal history of the epoch's final commit, which also acts as the initial state for the succeeding epoch. Guaranteeing reconfiguration safety is straightforward in systems mandating consensus for all transactions, such as \sysnamec, owing to the total ordering property inherent in consensus. A deterministic consensus commit $C$ sets the boundary between epochs $e$ and $e+1$. This makes sure that all transactions completed in epoch $e$ are included in and come before commit $C$.

However, designing reconfiguration mechanisms for systems with a consensusless fast path, like \sysnamefpc, presents non-trivial challenges. There is a race between finalized transactions being incorporated into consensus commits and new transactions being finalized by the fast path. Trivially closing the epoch may result in the final commit of the epoch failing to encompass all transactions finalized by the fast path, thereby violating the safety property of reconfiguration.

To solve this challenge, \sysnamefpc introduces an overriding bit called the \textit{epoch-change bit} in all its blocks. When this bit is set to 1 (default set to 0), it signifies that blocks referencing these votes do not contribute to the finalization of fast path transaction, irrespective of its causal history. Effectively, this epoch-change bit allows for the pause of the consensusless fast path of \sysnamefpc near the end of the epoch, mitigating the race condition highlighted above.

Epoch change starts at a predefined commit, often signaled by a higher-layer logic (e.g., a smart contract) indicating the readiness of the new committee to take charge. Once an honest validator detects the commencement of epoch change, it ceases to include transactions and to cast votes for any fast-path transactions. Subsequently, it sets the epoch-change bit to 1 in all its future blocks for the current epoch. Furthermore, while the validator continues to progress through rounds and participate in consensus, it stop processing and finalizing fast-path transactions. Upon committing blocks from $2f+1$ validators with the epoch-change bit set via the consensus path, the epoch is considered closed.

Once the epoch ends, any validator participating the committee of the next epoch may unlock fast-path transactions that were blocked due to client equivocations. These transactions can then receive fresh votes in subsequent epochs.
%

\para{Security intuition}
The epoch-change mechanism ensures that transactions finalized in an epoch (including on the fast path before consensus) persist across all subsequent epochs, a critical safety property (more formally in \Cref{thm:epoch-safety}).
Informally, by committing $2f+1$ blocks with the epoch-change bit set, we guarantee that every transaction finalized via the fast-path
would have a certificate as part of the causal history of the epoch-change commit (due to a quorum intersection argument).
Consequently, all validators process the certificate before they end of the epoch and persist execution results across epochs.

The liveness of \sysnamefpc directly depends on the liveness of \sysnamec. Informally, if the epoch is long enough, a non-conflicting transaction will gather sufficient votes, and then be certified by $2f+1$ blocks with the epoch-change bit unset. Which in turn ensures that it will be included in a commit and persisted across epochs.
\Cref{sec:security-fpc} formally proves the safety and liveness of \sysnamefpc.

\section{\sysname Security} \label{sec:security}
\label{sec:proof}
We argue the security of \sysnamec and \sysnamefpc under the Byzantine assumption presented in \Cref{sec:overview}.

\subsection{Security of \sysnamec} \label{sec:security-c}
This section argues the safety, liveness, and integrity of \sysnamec.

\para{Safety of \sysnamec}
A validator $v_k$ broadcasts messages calling $\textit{a\_bcast}_k(b,r)$, where $b$ is a block signed by validator $v_k$ and $r$ is the block's round number, i.e., $r = b.round$.
Every validator $v_i$ has an output $\textit{a\_deliver}_i(b,b.round,v_k)$, where $v_k$ is the author of $b$ and the validator that called the corresponding $\textit{a\_bcast}_k(b,b.round)$.

\begin{restatable}{lemma}{recursionsafety} \label{lemma:recursion-safety}
    If at a round $x$, $2f+1$ blocks from distinct authorities certify a block $B$, then all blocks at future rounds ($>x$) will link to a certificate for $B$ from round $x$.
\end{restatable}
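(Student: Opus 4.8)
The plan is to argue by induction on the round $y > x$ that every block at round $y$ contains, in its causal history, a certificate for $B$ originating from round $x$. The key structural fact I would use is the block validity rule from \Cref{sec:dag}: every valid block at round $y$ references at least $2f+1$ distinct blocks from round $y-1$. I would combine this with a quorum-intersection argument against the $2f+1$ blocks at round $x$ that certify $B$.

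First I would set up the base case $y = x+1$. Here I claim any valid block $C'$ at round $x+1$ already \emph{is} a certificate for $B$: $C'$ references $2f+1$ blocks from round $x$, and since there are only $3f+1$ authorities, these $2f+1$ references intersect the $2f+1$ certifying blocks in at least $f+1$ authorities — in particular, $C'$ references at least one block that certifies $B$. Wait, I need to be careful about what ``certifies'' means here relative to the definitions: a \emph{certificate} for $B$ is a block that contains a \emph{certificate pattern} (i.e., $2f+1$ supporting votes for $B$) in its history, and the $2f+1$ blocks ``certifying $B$'' at round $x$ should be read as $2f+1$ such certificate blocks, each at round $x$. Actually re-reading the statement, I would interpret it as: the certificate pattern for $B$ is witnessed by blocks at round $x$, i.e., $2f+1$ blocks at round $x$ each of which is a certificate for $B$ (equivalently, $B$ sits at round $\le x-2$ and the $2f+1$ voting blocks are at round $x-1$). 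Under this reading the base case is exactly the quorum-intersection step above: any block at round $x+1$ references $2f+1$ of the $3f+1$ round-$x$ blocks, hence at least $f+1$ — and in particular at least one — of the certificate blocks for $B$, so it links to a certificate for $B$ from round $x$.

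For the inductive step, assume every block at round $y \ge x+1$ links to a certificate for $B$ from round $x$. Any valid block $D$ at round $y+1$ references at least $2f+1 \ge 1$ blocks from round $y$; by the induction hypothesis each such referenced block links to a round-$x$ certificate for $B$, and since the link relation is transitive (composition of parent edges, per \Call{Link}{\cdot}), $D$ links to that certificate as well. This closes the induction and gives the lemma for all rounds $> x$.

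The main obstacle I anticipate is purely definitional precision rather than mathematical depth: I need to pin down exactly which round the ``certifying blocks'' live in and confirm that ``will link to a certificate for $B$ from round $x$'' means ``has in its causal history a block at round $x$ that is a certificate for $B$.'' Once that is fixed, the argument is a one-line quorum intersection ($2f+1$ and $2f+1$ out of $3f+1$ meet) for the base case plus a trivial transitivity induction for the step; there is no subtlety involving equivocation here, since we only need the \emph{existence} of one certificate block in the history, not uniqueness (uniqueness of the certified proposal is handled separately, as noted around \Cref{fig:support}).
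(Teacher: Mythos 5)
Your argument is essentially sound, and it reaches the lemma by a slightly different route than the paper. The paper's proof is a two-case contradiction: for $r = x+1$ it runs the same quorum-intersection step you use, but for $r > x+1$ it avoids your induction entirely. It picks, among the $2f+1$ round-$(r-1)$ parents of the offending block, one authored by a \emph{correct} validator that also lies in the set of $2f+1$ certifying authorities (quorum intersection again); since a correct validator's blocks chain back through its own previous blocks (the ``first parent is my previous block'' rule), that validator's round-$(r-1)$ block links to its own round-$x$ block, which is precisely a certificate for $B$ from round $x$. Your induction-plus-transitivity step is a legitimate substitute: it needs the hypothesis for \emph{every} valid block at round $y$ (which the validity rule delivers, since any round-$(y+1)$ block has $2f+1$ round-$y$ parents), whereas the paper's version localizes the whole argument to a single honest validator's chain and thereby pins the certificate at round $x$ without an induction.

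The one point you should not wave away is equivocation in the base case, and your closing sentence gets this backwards. The fact that the $2f+1$ authorities referenced by $C'$ at round $x$ intersect the $2f+1$ certifying authorities in at least $f+1$ members does not, by itself, mean $C'$ references a certifying \emph{block}: a Byzantine authority in the intersection may have produced two distinct round-$x$ blocks, one that certifies $B$ and a different one that $C'$ references. You must add that among the $f+1$ common authorities at least one is honest, and an honest validator issues a single block per round, so for that validator the referenced block and the certifying block coincide. This is exactly how the paper phrases its $r=x+1$ case (``a correct validator equivocated in round $x$, which is a contradiction''), and the same honesty step is what licenses its $r>x+1$ case as well. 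With that one sentence added, your proof is complete.
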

\begin{proof}
    Each block links to $2f+1$ blocks from the previous round.
    For the sake of contradiction, assume that a block in round $r (>x)$ does not link to a certificate from round $x$.
    If $r = x+1$, by the standard quorum intersection argument, a correct validator equivocated in round $x$, which is a contradiction.
    Similarly, if $r > x+1$, by the standard quorum intersection argument, a correct validator's block in round $r-1$ does not link to its own block in round $x$, which is also a contradiction.
\end{proof}

\begin{restatable}{lemma}{directskip} \label{lemma:direct-skip}
    If a correct validator commits some block in a slot $s$, then no correct validator decides to directly skip the slot $s$.
\end{restatable}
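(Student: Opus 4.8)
The plan is to reach a contradiction by quorum intersection at the voting round. Write the slot as $s \equiv (A,r)$, so the relevant voting round (as returned by \textsc{GetFirstVotingBlocks}) is $r+1$. First I would record what ``commits some block in slot $s$'' buys us: irrespective of whether the committing validator reached its decision through the direct rule (\textsc{TryDirectDecide}, which via \textsc{SupportedProposer} demands $2f+1$ certificate blocks for the committed proposal) or through the indirect rule (\textsc{TryIndirectDecide}, which via \textsc{CertifiedLink} demands that a \textsf{to-commit} anchor causally references a certificate for the proposal), in both cases there must exist a concrete block $B \equiv (A,r,h)$ filling slot $s$ together with a certificate for $B$. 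By the definition of the certificate pattern in \Cref{sec:dag}, this certificate witnesses a set $Q_1$ of at least $2f+1$ blocks from distinct authorities at round $r+1$, every one of which supports $B$. Since $B$ lies exactly one round below these blocks, ``$b$ supports $B$'' collapses to ``$B \in b.parents$'': the depth-first search can only descend in round number, so a round-$r$ block of $A$ is reachable from a round-$(r+1)$ block iff it is a direct parent. Hence every block in $Q_1$ carries a round-$r$ parent authored by $A$.

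Second, I would read off \textsc{SkippedProposer} in \Cref{alg:consensus-utils}: a correct validator decides to \emph{directly} skip $s$ only if it sees at least $2f+1$ blocks at round $r+1$, from $2f+1$ distinct authorities, none of whose parents are authored by $A$. Call this set $Q_2$. By the same ``search only descends'' observation, a round-$(r+1)$ block with no parent authored by $A$ cannot reach \emph{any} round-$r$ block of $A$, so in particular no block of $Q_2$ carries a round-$r$ parent authored by $A$.

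Third, I would intersect $Q_1$ and $Q_2$. Both span $\ge 2f+1$ of the $n = 3f+1$ authorities, so their authority sets overlap in $\ge f+1$ authorities, at least one of which, say $v^*$, is honest. An honest validator emits exactly one block per round, so the round-$(r+1)$ blocks that $v^*$ contributes to $Q_1$ and to $Q_2$ are one and the same block $b^*$. But $b^* \in Q_1$ forces $b^*$ to have a round-$r$ parent authored by $A$, while $b^* \in Q_2$ forbids exactly that — a contradiction. Therefore, once some correct validator commits a block in $s$, no correct validator can directly skip $s$.

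The only delicate point, which I would spell out carefully, is that the committing validator and the skipping validator may hold different views and even observe different equivocating proposals for slot $s$ (recall $A$ may be Byzantine). The argument survives this because \textsc{SkippedProposer}'s test — ``no parent authored by $A$'' — does not depend on which round-$r$ block of $A$ a validator regards as canonical, while the committed-block certificate pins down one specific block $B$; the honest authority $v^*$ extracted from the overlap then contradicts itself with respect to that single $B$, regardless of whose view $B$, $Q_1$, or $Q_2$ came from. A secondary, routine bookkeeping obligation is to invoke the paper's convention that the $2f+1$ votes inside a certificate and the $2f+1$ blocks counted by \textsc{SkippedProposer} are taken from distinct authorities, so that the $f+1$-overlap count is legitimate — the same convention already used in \Cref{lemma:recursion-safety}.
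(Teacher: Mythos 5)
Your proof is correct and follows essentially the same route as the paper's: a commit of a block in slot $s$ yields a certificate, i.e.\ a quorum of $2f+1$ round-$(r+1)$ blocks supporting it (at least $f+1$ from honest validators), which by quorum intersection with the $2f+1$ non-supporting blocks required by \textsc{SkippedProposer} forces some honest validator's unique round-$(r+1)$ block to both support and not support the committed proposal. Your additional care (covering indirect commits explicitly, collapsing one-round support to direct parenthood, and flagging the distinct-author convention) simply makes explicit what the paper's terser argument leaves implicit.
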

\begin{proof}
    A validator $X$ decides to directly skip a slot $s$ if there is no support during the support rounds for any block corresponding to $s$. If another validator committed some block $b$ for slot $s$, at least $f+1$ correct validators supported $b$. By the quorum intersection argument, $X$ must have observed at least one validator supporting $B$, which is a contradiction.
\end{proof}

\begin{restatable}{lemma}{directcommitsafety} \label{lemma:direct-commit-safety}
    If a correct validator directly commits some block in a slot $s$, then no correct validator decides to skip $s$.
\end{restatable}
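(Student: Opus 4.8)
The plan is to rule out, one at a time, every route by which a correct validator could reach a \texttt{Skip} verdict for the slot $s \equiv (A, r)$, and to show each is incompatible with a correct validator having \emph{directly} committed a block $B$ in $s$. There are exactly two such routes: the direct skip rule ($\Call{SkippedProposer}{\cdot}$ in \Cref{alg:baseline-committer}) and the indirect skip rule ($\Call{TryIndirectDecide}{\cdot}$ returning \texttt{Skip}). The direct route is immediate: a direct commit is a fortiori a commit, so \Cref{lemma:direct-skip} already forbids any correct validator from directly skipping $s$.

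The remaining work is the indirect route, and here the key lever is \Cref{lemma:recursion-safety}. First I would observe that a direct commit of $B$ in $s$ means the committing validator saw a commit pattern, i.e.\ $2f+1$ distinct \emph{certificate} blocks from distinct authorities for $B$ at the decision round $r+2$ (see \Cref{sec:dag}); so the hypothesis of \Cref{lemma:recursion-safety} holds at round $r+2$, and every block at a round $> r+2$ links to a round-$(r+2)$ certificate for $B$. Now assume for contradiction that a correct validator $Y$ indirectly skips $s$. Reading $\Call{TryIndirectDecide}{\cdot}$, this requires $Y$ to have settled on an anchor $a = \texttt{Commit}(b_{anchor})$ for $s$, i.e.\ a block $b_{anchor}$ at some round $r' > r+2$ whose slot $Y$ already marked \textsf{to-commit}, and $\Call{CertifiedLink}{b_{anchor}, b_{proposer}}$ to have returned false. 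But since $r' > r+2$, $b_{anchor}$ links to some round-$(r+2)$ certificate $c$ for $B$; because a correct validator's DAG is causally closed, $Y$ holds $c$, the $2f+1$ round-$(r+1)$ votes for $B$ it references, and $B$ itself. Hence $c \in \Call{GetDecisionBlocks}{w}$ for the wave $w$ of $s$, $\Call{IsCert}{c, B}$ holds, and $\Call{Link}{c, b_{anchor}}$ holds, so $\Call{CertifiedLink}{b_{anchor}, B}$ returns true --- the sought contradiction, provided $b_{proposer} = B$.

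The one subtlety, and the step I expect to be the main obstacle, is discharging that last proviso: ruling out that $A$ equivocated and that $Y$'s indirect rule is being evaluated against a sibling block $B' \neq B$ of slot $s$ for which no certificate exists. This is where I would invoke the uniqueness guarantee of \Cref{sec:dag}: a block with $2f+1$ support is the only block that can be certified for its slot, since a correct validator lying in the intersection of the $2f+1$ supporters of $B$ and of $B'$ would have to support two distinct blocks of $(A,r)$, which a single depth-first traversal cannot. Combined with the fact derived above --- that $Y$ necessarily holds $B$ and a round-$(r+2)$ certificate for it --- the proposer block governing $Y$'s decision for $s$ is $B$, and $\Call{CertifiedLink}{b_{anchor}, B}$ succeeds. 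The remaining bookkeeping is routine from the pseudocode: $\Call{TryIndirectDecide}{\cdot}$ reaches a \texttt{Skip} verdict only after fixing on a \textsf{to-commit} anchor (it returns $\perp$ on an \textsf{undecided} one), and the anchor round is strictly above $r+2$, which is exactly what lets \Cref{lemma:recursion-safety} bite.
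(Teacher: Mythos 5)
Your proof is correct and follows essentially the same route as the paper's: the direct-skip route is dispatched by \Cref{lemma:direct-skip}, and the indirect-skip route is refuted by noting that a direct commit yields $2f+1$ certificates at the decision round, so by \Cref{lemma:recursion-safety} the anchor (whose round exceeds the decision round) must link to a certificate for the committed block, contradicting the absence of a certified link. Your additional discussion of the equivocating-sibling case, discharged via uniqueness of certification, is extra care the paper's proof leaves implicit, but it does not change the underlying argument.
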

\begin{proof}
    For the sake of contradiction, assume that a correct validator $X$ directly commits block $b$ in slot $s$ while another correct validator $Y$ decides to skip the slot.
    $Y$ can decide to skip the slot $s$ in one of two ways: (a) $Y$ directly skipped $s$ because there was no support during the support rounds for any block corresponding to $s$, or (b) $Y$ skipped $s$ during the recursive commits triggered by a direct commit of a later slot.

    Case (a). Direct contradiction of Lemma~\ref{lemma:direct-skip}.

    Case (b). Let block $b'$ denote the proposer block, committed during the recursive indirect commits, that allowed $Y$ to decide $s$ as skipped. Due to the commit rule, the round number of $b'$ is greater than the decision round of $s$, and $b'$ does not link to a certificate for $b$. Since $X$ committed $b$, there are $2f+1$ certificates for $b$ in its decision round, leading to a contradiction due to Lemma~\ref{lemma:recursion-safety}.
\end{proof}

\begin{restatable}{lemma}{uniquesupport} \label{lemma:unique-support}
    For any slot $s \equiv (v,r)$, a correct validator never supports two distinct block proposals from validator $v$ in round $r$ across all of its blocks.
\end{restatable}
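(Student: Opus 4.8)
The plan is to reduce the statement to a monotonicity (``locking'') property of the \emph{support} relation along a correct validator's own chain of blocks, and then exploit the depth-first nature of the definition of support from \Cref{sec:dag}. First I would record the trivial observation that a \emph{single} block $B'$ supports at most one block for any given slot $(v,r)$: by definition the supported block is the \emph{first} block authored by $v$ at round $r$ that the depth-first traversal rooted at $B'$ encounters, so it is unique if it exists. Hence it suffices to show that \emph{all} blocks of a correct validator $X$ that support some proposal for $(v,r)$ support the \emph{same} one. I would also note that any block supporting a block for slot $(v,r)$ necessarily has round $> r$, since blocks only reference strictly earlier rounds.

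The core step is the following claim: if a block $b$ of the correct validator $X$, at round $j > r$, supports a block $B$ for slot $(v,r)$, then every block $b'$ of $X$ at a round $j' > j$ also supports $B$. To prove it, recall that by the ``first hash points to the author's previous block'' convention a correct validator's blocks form a chain, so $b$ lies on the first-parent chain $b' = b_{j'} \to b_{j'-1} \to \dots \to b_j = b$. In the depth-first search from $b'$, each node is expanded by recursing into its first parent before its other parents, so the traversal first walks down this chain and then completely explores the causal past of $b$ --- in exactly the order it would be explored were the search rooted at $b$ --- before it ever backtracks to any other parent of $b_{j'}, \dots, b_{j+1}$. None of the chain blocks $b_{j'}, \dots, b_{j+1}$ is a proposal of $v$ at round $r$ (they are blocks of $X$ at rounds strictly between $r$ and $j'$), and none of them coincides with a block in the causal past of $b$ (their rounds exceed $j$). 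Therefore the first block authored by $v$ at round $r$ encountered by the search from $b'$ is the same as the one encountered by the search from $b$, namely $B$; that is, $b'$ supports $B$.

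Given the claim, the lemma follows: suppose for contradiction that $X$ has blocks $b^{1}, b^{2}$ supporting distinct proposals $B_1 \neq B_2$ for slot $(v,r)$, and assume without loss of generality that $\mathrm{round}(b^{1}) \le \mathrm{round}(b^{2})$. If the rounds are equal then $b^{1} = b^{2}$, since a correct validator proposes a unique block per round, and a single block supports at most one proposal for the slot --- a contradiction. Otherwise $\mathrm{round}(b^{1}) < \mathrm{round}(b^{2})$, and the claim gives that $b^{2}$ supports $B_1$ as well as $B_2$, so $B_1 = B_2$ --- again a contradiction. The step I expect to be the most delicate is justifying that the depth-first search from $b'$ traverses the causal past of $b$ in precisely the same relative order as the search rooted at $b$: this is what makes ``first encountered'' stable, and it relies on the combination of the fixed first-parent convention and the fact that, along a correct validator's own chain, rounds strictly decrease, so no block of the chain can be revisited (or itself be a round-$r$ proposal of $v$) while descending to $b$.
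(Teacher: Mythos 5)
Your proof is correct and follows essentially the same route as the paper's: the uniqueness of the ``first block encountered'' in the depth-first traversal, combined with the convention that a correct validator's first reference is to its own previous block, so its later blocks re-descend through its own chain and inherit the same supported block. You simply spell out in detail the monotonicity step that the paper's proof states in one sentence.
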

\begin{proof}
    By definition, a block can only support at most a single proposal for a particular slot $s$.
    Block support is calculated through a depth-first traversal of the referenced blocks, such that the first block corresponding to $s$ encountered during the traversal is supported.
    Since a correct validator first includes a reference to its own block from the previous round, once a correct validator supports a certain block for $s$, it continues to support the same block in all of its future blocks.
\end{proof}

\begin{restatable}{lemma}{uniquecert} \label{lemma:unique-cert}
    For any slot, at most a single block will ever be certified, i.e. gather a quorum ($2f+1$) of support.
\end{restatable}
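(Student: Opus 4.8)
The plan is to argue by contradiction using a standard quorum-intersection step, leaning entirely on \Cref{lemma:unique-support}. Suppose that for some slot $s \equiv (v,r)$ two \emph{distinct} block proposals $B$ and $B'$ authored by $v$ in round $r$ both become certified at some point in the execution. By the definition of the certificate pattern (\Cref{sec:dag}), being certified means there is a set $Q_B$ of $2f+1$ distinct validators each of which has a block that supports $B$, and likewise a set $Q_{B'}$ of $2f+1$ distinct validators each with a block supporting $B'$. Note that support here is taken across all blocks a validator ever produces, matching the statement of \Cref{lemma:unique-support}.

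Next I would invoke quorum intersection: since $n = 3f+1$, the two quorums satisfy $|Q_B \cap Q_{B'}| \ge 2(2f+1) - (3f+1) = f+1$, so the intersection contains at least one honest validator, call it $w$. Then $w$ has (at least) one block supporting $B$ and (at least) one block supporting $B'$, with $B \neq B'$ but both being proposals of $v$ for round $r$. This directly contradicts \Cref{lemma:unique-support}, which states that a correct validator never supports two distinct proposals for the same slot across all of its blocks. Hence no two distinct blocks for a slot can both be certified.

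I do not expect a genuine obstacle here; the only point requiring a little care is making the counting explicit, namely that "gathering a quorum of support" is a statement about $2f+1$ \emph{distinct authorities} contributing supporting blocks (as already used in \Cref{lemma:recursion-safety}), rather than $2f+1$ supporting blocks that might repeat an author — otherwise the intersection argument would not isolate an honest validator. With that reading fixed, the proof is a two-line consequence of \Cref{lemma:unique-support} and $n = 3f+1$. I would also remark that this lemma is exactly what justifies treating the implicitly-certified part of the DAG as unambiguous even in the presence of equivocation by $v$, as claimed in \Cref{sec:dag}.
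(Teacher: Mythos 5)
Your proof is correct and matches the paper's argument exactly: a quorum-intersection step showing some honest validator would have to support two distinct proposals for the same slot, contradicting \Cref{lemma:unique-support}. The only difference is that you spell out the counting ($2(2f+1)-(3f+1)\ge f+1$) that the paper leaves implicit as the ``standard quorum intersection argument.''
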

\begin{proof}
    For contradiction's sake, assume that two distinct block proposals for a slot gather a quorum of support. By the standard quorum intersection argument, a correct validator supports two distinct blocks for the same slot, which is a contradiction of the proved Lemma~\ref{lemma:unique-support}.
\end{proof}

As a result of Lemma~\ref{lemma:unique-cert}, we get the following corollary:

\begin{corollary}
    \label{corollary:unique-commit}
    No two correct validators commit distinct blocks for the same slot.
\end{corollary}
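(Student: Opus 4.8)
The plan is to argue by contradiction: suppose two correct validators $X$ and $Y$ commit distinct blocks $b \neq b'$ for the same slot $s \equiv (v,r)$. The key insight I would exploit is that \emph{every} block a correct validator commits --- whether through the direct decision rule or the indirect decision rule --- is certified, i.e., has gathered a quorum of $2f+1$ support. Granting this, Lemma~\ref{lemma:unique-cert} says that at most one block is ever certified for a given slot, which forces $b = b'$ and contradicts $b \neq b'$. So the whole argument reduces to establishing the claim just stated for both kinds of commit.

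For the direct case, I would observe that a correct validator only commits $b$ for $s$ via \textsc{TryDirectDecide} when \textsc{SupportedProposer} returns $b$, which by its definition requires at least $2f+1$ blocks in the decision round that are certificates for $b$; since each such certificate contains $2f+1$ votes for $b$ among its parents, $b$ is certified. For the indirect case, \textsc{TryIndirectDecide} only emits $\texttt{Commit}(b)$ after finding that the relevant anchor is itself committed and that \textsc{CertifiedLink} holds between that anchor and $b$; \textsc{CertifiedLink} by definition exhibits a block in the decision round that is a certificate for $b$, so again $b$ is certified. The one delicate point is that when the proposer of slot $s$ equivocates, the block returned by \textsc{GetProposerBlock} may differ across validators' sub-DAG views; but among the equivocating proposals only the unique certified one (Lemma~\ref{lemma:unique-cert}) can satisfy the \textsc{CertifiedLink} / \textsc{SupportedProposer} guard --- any other candidate is skipped instead --- so no extra commit candidate is introduced.

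Combining the two cases, the blocks $b$ and $b'$ committed by $X$ and $Y$ are both certified blocks of slot $s$, hence equal by Lemma~\ref{lemma:unique-cert}, the desired contradiction. I expect the main obstacle to be bookkeeping rather than mathematical depth: one must check that the recursion in the indirect rule --- anchors of anchors, possibly under divergent views --- never yields a commit verdict for a non-certified block. This holds because the recursion only propagates decisions already justified at a later round, and the sole place a commit verdict originates is the \textsc{CertifiedLink}-guarded branch analyzed above. Note that Lemmas~\ref{lemma:direct-skip} and~\ref{lemma:direct-commit-safety} concern the complementary commit-versus-skip consistency and are not needed for this corollary.
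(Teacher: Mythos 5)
Your proposal is correct and follows essentially the same route as the paper, which derives this corollary directly from Lemma~\ref{lemma:unique-cert} on the (implicit) grounds that any committed block---whether via the direct rule's $2f+1$ certificates or the indirect rule's certified link---must itself be certified. Your write-up merely makes that implicit step explicit by walking through \textsc{SupportedProposer} and \textsc{CertifiedLink}, which is a faithful elaboration rather than a different argument.
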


\begin{lemma}
    \label{lemma:consistent-slots}
    All correct validators have a consistent state for each slot, i.e. if two validators have decided the state of a slot, then both either commit the same block or skip the slot.
\end{lemma}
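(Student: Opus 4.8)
The plan is to reduce \Cref{lemma:consistent-slots} to \Cref{corollary:unique-commit} together with \Cref{lemma:direct-skip} and \Cref{lemma:direct-commit-safety}, leaving a single residual case (indirect-commit versus indirect-skip) that I would settle by induction on rounds. First observe that if two correct validators $X$ and $Y$ have both decided a slot $s$, there are only three possibilities: both skip (consistent by definition), both commit (consistent by \Cref{corollary:unique-commit}, which says they commit the same block), or one commits a block $b$ in $s$ while the other skips $s$. So it suffices to rule out this last shape. If $X$'s commit is a \emph{direct} commit, \Cref{lemma:direct-commit-safety} already yields a contradiction; if $Y$'s skip is a \emph{direct} skip, then since $X$ commits $b$ in $s$, \Cref{lemma:direct-skip} yields a contradiction. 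Hence the only case left is: $X$ commits $s$ via the indirect rule and $Y$ skips $s$ via the indirect rule.

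I would handle this by strong induction on the round number of the slot, taken in \emph{decreasing} order. This is well-founded because any decision a correct validator makes depends, through its chain of anchors (whose rounds strictly increase), only on finitely many strictly higher rounds — the DAG is finite at any point in time — so every anchor chain terminates at a slot decided by the \emph{direct} rule, which is exactly the base case covered by the paragraph above. For the inductive step, let $a_X$ (resp.\ $a_Y$) be the anchor slot used by $X$ (resp.\ $Y$) when deciding $s\equiv(v,r)$; both have round strictly greater than $r$. Since $X$ actually \emph{decided} $s$ indirectly, its anchor cannot be \textsf{undecided} (an \textsf{undecided} anchor forces $s$ to be \textsf{undecided}), so $X$ has $a_X$ marked \textsf{to-commit}; symmetrically $Y$ has $a_Y$ marked \textsf{to-commit}. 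Both validators locate the anchor by scanning the \emph{same} deterministic total order of slots, skipping exactly those they have marked \textsf{to-skip}. If $a_X\neq a_Y$, whichever appears earlier in the order — say $a_X$, the other case being symmetric — is marked \textsf{to-commit} by $X$ but, because $Y$ did not stop there, is marked \textsf{to-skip} by $Y$ (it cannot be \textsf{undecided} for $Y$, else $Y$ would have selected it and marked $s$ \textsf{undecided}): that is an inconsistency at a slot of round strictly greater than $r$, contradicting the induction hypothesis. Therefore $a_X=a_Y=:a$, and since both $X$ and $Y$ committed a block in slot $a$, \Cref{corollary:unique-commit} gives that it is the same block $b_a$. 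Now $X$'s indirect commit of $s$ means $b_a$ causally references a certificate for the (unique, by \Cref{lemma:unique-cert}) certified block $b$ of slot $s$; since $Y$ committed $b_a$, it holds $b_a$'s entire causal history, and being a certificate is a property of the immutable parents of the certifying block, so $Y$ also sees this certified link and must commit $b$ in $s$ rather than skip it — the desired contradiction.

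The main obstacle is getting the anchor-agreement argument and its interaction with the induction exactly right: one must carefully exploit that an \textsf{undecided} anchor means the slot is \emph{not} yet decided (so a validator that has decided $s$ necessarily saw a \textsf{to-commit} anchor), that the anchor search differs between validators only through which slots each has marked \textsf{to-skip}, and that any disagreement exposed by differing anchors genuinely sits at a strictly higher round so the induction applies. A secondary subtlety to pin down is that a commit always refers to the unique certified block of the slot (\Cref{lemma:unique-cert}), which is what makes ``$X$ commits $b$ in $s$'' versus ``$Y$ skips $s$'' the only remaining inconsistency and lets the causal-history argument about $b_a$ close the proof.
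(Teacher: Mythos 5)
Your proof is correct and takes essentially the same route as the paper: you reduce to the indirect-commit versus indirect-skip case via Lemma~\ref{lemma:direct-skip}, Lemma~\ref{lemma:direct-commit-safety} and Corollary~\ref{corollary:unique-commit}, then use the induction hypothesis to force the two validators' anchors to coincide, after which the decision is a deterministic function of the shared anchor block's causal history. The only difference is bookkeeping: the paper inducts over the slot order downward from the highest committed slot of one validator (necessarily a direct commit), while you induct downward on rounds and justify well-foundedness by termination of anchor chains -- both rest on the same observation that an indirect decision needs a decided (to-commit) anchor at a strictly higher round.
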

\begin{proof}
    Let $[x_i]_{i=0}^n$ and $[y_i]_{i=0}^m$ denote the state of the slots for two correct validators $X$ and $Y$, such that $n$ and $m$ are respectively the indices of the highest committed slot. WLOG $n \leq m$. Any slot decided by $X$ higher than $n$ are direct skips and are therefore consistent with $Y$ due to Lemma~\ref{lemma:direct-skip}.
    We now prove, by induction, statement $P(i)$ for $0 \leq i \leq n$: if $X$ and $Y$ both decide the slot $i$, then both either commit the same block or skip the slot.

    Base Case: $i=n$. $X$ directly commits slot $i$, the highest committed slot for $X$. From Lemma~\ref{lemma:direct-commit-safety}, if $Y$ decides slot $i$, then it must also commit slot $i$. By Corollary~\ref{corollary:unique-commit}, $Y$ commits the same block.

    Assuming $P(i)$ is true for $k+1 \leq i \leq n$, we now prove $P(k)$. Similar to the base case, if one validator decides to directly commit a block in slot $k$, then the other validator, if it also decides slot $k$, decides to commit the same block. If one validator decides to directly skip slot $k$, then the other validator, if it also decides slot $k$, decides to skip due to Lemma~\ref{lemma:direct-skip}. We now analyze the only remaining case where $X$ and $Y$ indirectly decide the slot $k$. Let $k'$ denote the first slot $>k$ with a round number higher than the decision round of $k$. There exist slots $k_x (\geq k')$ and $k_y(\geq k')$ such that $X$ commits block $b_x$ in $k_x$ while skipping all slots in $[k', k_x)]$ and $Y$ commits block $b_y$ in $k_y$ while deciding to skip all slots in $[k', k_y)]$.
    As $k_x \leq n$, it follows from the induction hypothesis that $k_x = k_y$ and $b_x = b_y = b$.
    Since the indirect decision of $X$ and $Y$ for slot $k$ depends entirely on the causal history of the same block $b$, both validators decide the slot $k$ identically.
\end{proof}

\begin{lemma}
    \label{lemma:proposer-consistent}
    All correct validators commit a consistent sequence of proposer blocks (i.e., the committed proposer sequence of one correct validator is a prefix of another's).
\end{lemma}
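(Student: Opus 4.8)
The plan is to reduce this to the per-slot consistency already proved in \Cref{lemma:consistent-slots} together with \Cref{corollary:unique-commit}, exploiting the fact that the committed proposer sequence is a deterministic function of the fixed total order on proposer slots and of each slot's decided state. Recall from Step~3 of the decision rule that a correct validator produces its committed sequence by walking the slots in the predefined total order, emitting every slot marked \textsf{to-commit} with the block it committed there, dropping every slot marked \textsf{to-skip}, and halting at the first slot that is not (yet) marked \textsf{to-commit} or \textsf{to-skip} --- either because that slot is still \textsf{undecided} or because it lies beyond the rounds the validator has processed. Writing $u_X$ for that first ``stopping'' slot of a correct validator $X$, the committed sequence $C_X$ of $X$ is precisely the subsequence of \textsf{to-commit} slots strictly preceding $u_X$, each tagged with the block $X$ committed at it; note that slots $X$ may have decided \emph{beyond} $u_X$ (e.g.\ via the indirect rule, while an earlier slot stays \textsf{undecided}) do not contribute to $C_X$.

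First I would fix two correct validators $X$ and $Y$ with committed sequences $C_X,C_Y$ and, since the slot order is total, assume without loss of generality that $u_X$ precedes or equals $u_Y$ in that order. For every slot $s$ strictly before $u_X$: $X$ has decided $s$ by the definition of $u_X$, and $Y$ has decided $s$ as well, since $s \prec u_X \preceq u_Y$ and every slot before $u_Y$ is decided by $Y$. By \Cref{lemma:consistent-slots}, $X$ and $Y$ then agree on the state of $s$, and when that state is \textsf{to-commit} they committed the identical block by \Cref{corollary:unique-commit}. Hence the entries of $C_Y$ lying at slots $\prec u_X$ are exactly the entries of $C_X$ --- same slots, same blocks, same order --- while every remaining entry of $C_Y$ sits at a slot $\succeq u_X$ and therefore appears strictly later in the total order; so $C_X$ is a prefix of $C_Y$, which is the claim.

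Finally I would add the remark that the same argument applied to a single correct validator at two points in time (its DAG view only grows, and by \Cref{lemma:consistent-slots} a slot it has decided never flips) shows each validator's committed proposer sequence only ever extends, so the prefix relation holds uniformly across validators and across time. I do not expect a genuine obstacle here: all the Byzantine-flavoured work has already been discharged in \Cref{lemma:recursion-safety} through \Cref{lemma:consistent-slots}. The one thing that must be handled carefully is the bookkeeping --- being precise that the walk stops at the \emph{first} non-decided slot (so decisions made out of order past $u_X$ are irrelevant to $C_X$), and that it is the totality of the slot order that turns ``restrict $C_Y$ to slots below $u_X$'' into an honest prefix of $C_Y$ rather than an arbitrary subsequence.
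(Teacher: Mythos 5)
Your proposal is correct and follows essentially the same route as the paper: the paper's proof simply observes that the committed sequence is the run of decided slots before the first undecided one and invokes \Cref{lemma:consistent-slots}, which is exactly the reduction you carry out, only with the prefix bookkeeping (the stopping slots $u_X \preceq u_Y$ and the totality of the slot order) made explicit. No gap; your version is just a more detailed rendering of the paper's one-line argument.
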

\begin{proof}
    The committed sequence of proposer blocks is nothing but the sequence of committed blocks before the first undecided slot.
    The statement is then a direct implication of Lemma~\ref{lemma:consistent-slots}.
\end{proof}

\begin{theorem}[Total Order]
    \label{thm:total-order}
    \sysnamec satisfies the total order property of Byzantine Atomic Broadcast.
\end{theorem}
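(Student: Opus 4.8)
The plan is to reduce total order to the already-established consistency of the committed proposer sequence (Lemma~\ref{lemma:proposer-consistent}). Recall that \sysnamec produces its atomic-broadcast output by iterating over the committed sequence of proposer blocks and, for each committed proposer block $B$, deterministically linearizing the portion of $B$'s causal history that has not yet been output (ordering, e.g., by round and breaking ties by a fixed rule on authors and hashes), emitting an $\textit{a\_deliver}$ for every block so encountered; \textsf{to-skip} slots emit nothing. The first step is to make precise that this expansion is a pure function of the committed proposer sequence: the causal history of a committed block is fixed, since its parent hashes are immutable, and—by Lemma~\ref{lemma:recursion-safety} together with Corollary~\ref{corollary:unique-commit}—every correct validator agrees on which blocks are certified and on the sub-DAG reachable from $B$. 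Hence the set of blocks already output before processing $B$ depends only on the prefix of the proposer sequence strictly before $B$, and committing a new proposer block never reorders or retroactively alters previously delivered blocks.

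Next I would establish the key structural fact: if correct validator $X$'s committed proposer sequence is a prefix of correct validator $Y$'s (which holds by Lemma~\ref{lemma:proposer-consistent}), then $X$'s full $\textit{a\_deliver}$ sequence is a prefix of $Y$'s. This follows by induction on the proposer sequence: processing the common prefix of proposer blocks yields, step by step, the same ``already-delivered'' set and hence the same emitted subsequence of $\textit{a\_deliver}$ events for both validators; the extra proposer blocks that $Y$ additionally commits only append further $\textit{a\_deliver}$ events after this common prefix. Lemma~\ref{lemma:consistent-slots} also guarantees that the skipped slots interleaved in both sequences are decided identically, so no discrepancy is introduced by the \textsf{to-skip} entries either.

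Finally, total order is immediate. Suppose a correct validator $v_i$ outputs $\textit{a\_deliver}_i(m,q,v_k)$ before $\textit{a\_deliver}_i(m',q',v_k')$; then $(m,q,v_k)$ strictly precedes $(m',q',v_k')$ in $v_i$'s delivery sequence. For any correct validator $v_j$, its delivery sequence and $v_i$'s are prefixes of one another, so they agree on the relative order of any pair of events that both of them output; therefore $v_j$ cannot output $\textit{a\_deliver}_j(m',q',v_k')$ before $\textit{a\_deliver}_j(m,q,v_k)$, which is exactly the total order property.

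The main obstacle I anticipate is the first step: pinning down that the proposer-block-to-delivery expansion is genuinely deterministic and monotone—in particular that two validators holding proposer sequences of different lengths nonetheless compute the identical ``already-delivered'' set at each common step, and that the linearization of each committed sub-DAG is frozen once the proposer block is committed. This hinges on the sub-DAG reachable from a committed block being identically perceived by all correct validators, which is precisely where Lemma~\ref{lemma:recursion-safety} and Corollary~\ref{corollary:unique-commit} do the real work; the remainder is bookkeeping over the induction.
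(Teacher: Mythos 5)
Your proposal is correct and follows essentially the same route as the paper: it derives total order from Lemma~\ref{lemma:proposer-consistent} plus the fact that each correct validator applies the same deterministic linearization to the (fully available, hash-fixed) causal history of each committed proposer block, so the delivered sequences are prefixes of one another. Your extra appeal to Lemma~\ref{lemma:recursion-safety} and Corollary~\ref{corollary:unique-commit} for the determinism of the expansion is not needed (immutable parent hashes already fix the sub-DAG), but it is harmless; the rest is the paper's argument spelled out in more detail.
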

\begin{proof}
    Correct validators deliver blocks by using an identical deterministic algorithm to order the causal history of committed proposer blocks.
    Since a correct validator has all the causal histories of a block when the block is added to its DAG, and the sequence of committed proposer blocks of one validator is a prefix of another's (Lemma~\ref{lemma:proposer-consistent}), all correct validators deliver a consistent sequence of blocks, i.e., the sequence of blocks delivered from one validator is a prefix of the sequence delivered by any other validator. The total order property of BAB immediately follows.
\end{proof}

\para{Liveness of \sysnamec}
We show the liveness of \sysnamec under partial synchrony (\Cref{sec:overview}).

\begin{lemma}[Round-Synchronization] \label{th:view-synch}
    After GST all honest parties will enter the same round within $3\cdot\Delta$.
\end{lemma}
\begin{proof}
    After GST all messages sent before GST deliver within $\Delta$. This means that if $r$ is the highest round any honest validator proposed a block for before GST, then, accounting for $2\cdot \Delta$ to request any missing parent block, every honest validator will receive the block proposal of the honest validator at $GST+3\cdot \Delta$ and also enter $r$.
\end{proof}

\begin{lemma}[Leader-Proposal] \label{th:proposer-proposal}
    After GST an honest proposer's proposal will get votes from every honest validator.
\end{lemma}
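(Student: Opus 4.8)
The plan is to show that, after GST, every honest validator already holds the honest proposer's round-$r$ block $B$ at the moment it creates its own round-$(r+1)$ block, so that block necessarily references $B$ and therefore supports (votes for) it. First I would fix notation: let $L$ be the honest proposer (primary) of round $r$ and $B$ its round-$r$ proposal; since $L$ is honest it proposes exactly one block for round $r$, namely $B$. It suffices to treat a round $r$ that is entered after GST (the liveness-relevant case), so all timing reasoning uses the post-GST guarantee that a message sent at time $x$ arrives by $x+\Delta$.

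Next I would pin down the timeline. By \Cref{th:view-synch} all honest validators enter round $r$ within a window of length $\Delta$, say $[t_0, t_0+\Delta]$ with $t_0 \ge \GST$. Entering round $r$ presupposes holding $2f+1$ round-$(r-1)$ blocks, so $L$ broadcasts $B$ upon entering round $r$, hence by time $t_0+\Delta$; consequently every honest validator has received $B$ by time $t_0+2\Delta$. On the other side, an honest validator $V$ only finalizes (signs and disseminates) its round-$(r+1)$ block after (i) it has collected $2f+1$ round-$r$ blocks, and (ii) per the protocol's timeout rule it has additionally waited $\Delta$ for the primary block of round $r$. Since at least $f+1$ of the $2f+1$ round-$r$ blocks used by $V$ come from honest validators, and those honest validators — being within $\Delta$ of $L$ by round synchronization — broadcast their round-$r$ blocks no earlier than $t_0$, $V$ cannot have assembled its $2f+1$ round-$r$ blocks appreciably before $L$ sent $B$; adding the mandated $\Delta$ wait, $V$ does not commit its round-$(r+1)$ block before it has received $B$, and thus includes a reference to $B$ in it.

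Finally I would close with the DAG-pattern argument: in the depth-first search that defines support, starting from $V$'s round-$(r+1)$ block, the first (and only) block encountered for author $L$ at round $r$ is $B$, because $L$ is honest and $B$ carries $L$'s signature, so no distinct valid block for slot $(L,r)$ exists. Hence $V$'s round-$(r+1)$ block supports $B$, i.e.\ it is a vote for $B$ in the sense of \Cref{sec:dag}; since this holds for an arbitrary honest $V$, the block $B$ collects votes from all (at least $2f+1$) honest validators, which is exactly the claim, and which is what later yields a certificate over an honest primary after GST.

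The main obstacle is the timing step in the second paragraph: making rigorous that the single $\Delta$ timeout is enough for $V$ to receive $B$ strictly before signing its round-$(r+1)$ block, even though $V$ might gather $2f+1$ round-$r$ blocks quickly. This hinges on the precise value of the timeout and on the exact reference point at which the timer is started (entering round $r$ versus collecting $2f+1$ round-$r$ blocks), and it is precisely here that round synchronization (\Cref{th:view-synch}) is indispensable, since it bounds the clock skew between $L$ and every other honest validator.
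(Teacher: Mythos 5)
Your overall strategy matches the paper's: round synchronization (\Cref{th:view-synch}) plus a timing argument showing the honest proposer's block $B$ reaches every honest validator before that validator gives up waiting, followed by the observation that, since $L$ is honest and never equivocates, referencing $B$ means supporting (voting for) $B$ in the DFS sense. The last step is fine and is a useful piece of explicitness the paper leaves implicit.

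However, the timing step — which you yourself flag as the main obstacle — is a genuine gap, and as stated your version does not close. With the timer being a single $\Delta$ started after $V$ has collected $2f+1$ round-$r$ blocks, the worst case breaks the claim: post-\GST{} message delays are only upper-bounded by $\Delta$ and can be arbitrarily small, so $V$ may assemble its $2f+1$ round-$r$ blocks essentially at time $t_0$ and its timer expires by $t_0+\Delta+\varepsilon$; meanwhile $L$ may enter round $r$ as late as $t_0+\Delta$ (the full skew allowed by \Cref{th:view-synch}), send $B$ then, and $B$ is only guaranteed to arrive by $t_0+2\Delta$. Your phrase ``cannot have assembled its $2f+1$ round-$r$ blocks appreciably before $L$ sent $B$'' is exactly the unjustified step: honest round-$r$ blocks are sent no earlier than $t_0$, but that does not prevent them from arriving a full $\Delta$ before $B$ does. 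The paper resolves this quantitatively: the timeout is $2\Delta$, and the argument is anchored at the \emph{first} honest validator to enter the proposer's round, say at time $t$. Its broadcast lets $L$ (and everyone) adopt the needed parents and enter the round by $t+\Delta$, so $L$'s proposal is sent by $t+\Delta$ and delivered everywhere by $t+2\Delta$; since no honest validator started its $2\Delta$ timer before $t$, none times out before receiving $B$. So the constant and the timer's reference point are not incidental details to be checked later — the lemma is false with the parameters you assumed, and fixing the proof means adopting the $2\Delta$ timeout (or an equivalent budget covering both the proposer's entry skew and the delivery delay of its block).
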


\begin{proof}
    After GST if an honest validator enters wave $w$, then it has to broadcast the last block of wave $w-1$. Within $\Delta$ the honest proposer (and every other honest party) will receive the block and adopt the parents, being able to also enter wave $w$ as they are all synchronized (Lemma~\ref{th:view-synch}). Then the honest proposer will directly propose its block. Since the timeout is set to $4 \cdot \Delta$ the proposer's block of wave $w$ will arrive before the first honest validator times out hence, every honest validator will vote for the proposer.
\end{proof}

\begin{lemma}[Sufficient Votes] \label{th:votes}
    After GST all honest validators will create a certificate for the honest proposer.
\end{lemma}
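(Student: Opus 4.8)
The plan is to show that, after $\GST$, whenever a round $r$ has an honest proposer $P$, the certificate pattern of \Cref{sec:dag} over $P$'s round-$r$ block is forced into the history of the block that every honest validator proposes at round $r+2$. Fix such a round $r$ (it exists after $\GST$ for any proposer-slot rotation that keeps slots static long enough), and let $B$ be $P$'s proposal for round $r$.

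First I would pin down the votes. By \Cref{th:proposer-proposal} (Leader-Proposal), after $\GST$ every honest validator's round-$(r+1)$ block supports $B$, i.e., is a vote for $B$ in the sense of \Cref{sec:dag-patterns}. Since at least $2f+1$ validators are honest, there are at least $2f+1$ distinct round-$(r+1)$ blocks that vote for $B$, and each such block is delivered to every honest validator within $\Delta$ of being sent (we are past $\GST$).

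Second I would argue that each honest validator collects $2f+1$ of these votes before it commits to its round-$(r+2)$ proposal. Every honest validator has $B$ in view at round $r+1$ (it voted for $B$), so by the second waiting rule of \Cref{sec:dag-liveness} it delays its round-$(r+2)$ block until either its timeout elapses or it has observed $2f+1$ votes for $B$. Using \Cref{th:view-synch} (Round-Synchronization) to bound the inter-validator round skew by $\Delta$, and combining it with the $\Delta$ delivery bound for the votes, the timeout configured by the protocol (on the order of $2\Delta$, as in the proof of \Cref{th:proposer-proposal}) is large enough that all $\ge 2f+1$ honest votes for $B$ reach the validator before the timeout fires. Consequently each honest validator proposes its round-$(r+2)$ block with at least $2f+1$ blocks supporting $B$ among its round-$(r+1)$ parents; by the definition in \Cref{sec:dag}, such a block is a certificate for $B$. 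Since this holds for every honest validator, all honest validators create a certificate for the honest proposer, which is the claim.

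The step I expect to be the main obstacle is the timing bookkeeping in the last paragraph: one has to chain together the $\Delta$ round skew guaranteed by Round-Synchronization, the instant an honest validator enters round $r+2$ and starts its waiting timer, and the $\Delta$ propagation delay of the $2f+1$ honest round-$(r+1)$ votes, and then verify that the configured timeout dominates this worst-case delay, so that no honest validator is ever compelled to publish its round-$(r+2)$ block before the certificate pattern over $B$ is available to it. Everything else is a direct invocation of the two preceding lemmas together with the definitions of \Cref{sec:dag}.
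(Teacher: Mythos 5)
Your proof is correct and follows essentially the same route as the paper's: invoke the Leader-Proposal lemma to get votes for the honest proposer's block from all honest validators, then combine the post-$\GST$ $\Delta$ delivery bound, round synchronization, and the $2\Delta$ waiting rule at the decision round to conclude that every honest validator's decision-round block references $2f+1$ votes and hence forms a certificate. Your write-up is merely more explicit about the timing bookkeeping that the paper's proof asserts in one line.
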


\begin{proof}
    By Lemma~\ref{th:proposer-proposal} all honest validators will vote for an honest proposer after GST. For an honest validator to propose a block at the decision round it needs to (a) get the proposal of the proposer and (b) have $2f+1$ parents. All honest validators receive the proposer proposal within $\Delta$ since the proposer is honest. Additionally once a honest validator advances to the decision round all honest validators will receive its block proposal and adopt the parents within $3\cdot\Delta$.
    Consequently, by construction, honest validators wait for $4 \cdot \Delta$ before giving up the certificate creation and will receive the votes from all honest validators witnessing a certificate
\end{proof}

\begin{lemma}
    \label{lemma:3-consecutive}
    The round-robin schedule of proposers in \sysname ensures that in any window of  $3f+3$ rounds, there are three consecutive rounds with honest primary proposers. A primary proposer is the proposer of the first slot of a round.
\end{lemma}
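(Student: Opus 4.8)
The claim is a pigeonhole-style argument about a round-robin proposer schedule, so the plan is to first pin down exactly what schedule is being used. From the liveness discussion in \Cref{sec:dag-liveness} and the remark at the end of \Cref{sec:consensus}, the primary proposer of a round is the proposer of the first slot, and for liveness the schedule must keep each slot assignment fixed for $3$ consecutive rounds (so that a certificate over an honest primary can form). I would therefore model the schedule as a round-robin over the $n = 3f+1$ validators in which the primary rotates every $3$ rounds: rounds $1,2,3$ have primary $v_1$, rounds $4,5,6$ have primary $v_2$, and so on, cycling with period $3n = 9f+3$. The window $3f+3 = 3(f+1)$ thus spans exactly $f+1$ consecutive ``blocks'' of three rounds each, hence $f+1$ consecutive primary validators in the rotation.

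The core step is then a counting argument: among any $f+1$ consecutive validators in the round-robin cycle of $n = 3f+1$ validators, at least one is honest, since there are at most $f$ Byzantine validators and $f+1 > f$. But I need the stronger statement that one of the corresponding three-round blocks has an honest primary — which is immediate, because each of those $f+1$ consecutive validators ``owns'' a full block of three consecutive rounds, and the honest one among them gives three consecutive rounds all sharing that honest primary. So the plan is: (1) fix the schedule and note the window of $3f+3$ rounds covers $f+1$ consecutive primary slots in the rotation; (2) apply pigeonhole to conclude at least one of those $f+1$ primaries is honest; (3) observe that this honest primary is, by construction of the fixed-for-three-rounds schedule, the primary for three consecutive rounds, completing the proof.

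The main subtlety — and the step I would be most careful about — is boundary alignment: a window of $3f+3$ arbitrary rounds need not start at a block boundary of the period-$3$ rotation, so it might only partially overlap the first and last blocks it touches. I would handle this by noting that a window of length $3(f+1)$ touches at most $f+2$ consecutive blocks but is guaranteed to fully contain at least $f+1$ of them (the ``interior'' blocks, since losing at most a partial block at each end removes at most one full block's worth in total). Fully-contained blocks are what matter, since only a fully contained block gives all three of its rounds inside the window. Then pigeonhole on those $\ge f+1$ fully contained blocks' primaries yields an honest one, and its block contributes the three consecutive rounds with an honest primary. An equivalent and perhaps cleaner framing is to simply observe that $3f+3$ consecutive rounds always contain a contiguous sub-window of $3f+3 - 2 = 3f+1$ rounds that is block-aligned? — no; I would instead just argue directly via the ``at most one partial block at each end'' bookkeeping above, which is elementary and avoids any off-by-one pitfalls.
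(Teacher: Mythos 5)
Your proof rests on a schedule model that is not the one the paper uses, and the gap shows up precisely at the step you flagged as the subtlety. The paper's round-robin rotates the primary \emph{every round}: in a window of $3f+3$ rounds the primaries are $3f+3$ consecutive validators in the cyclic order, and the paper counts incidences of honest primaries over the $3f+1$ sliding triples of consecutive rounds (each honest validator's primary round lies in $3$ of these triples, giving $3(2f+1)=6f+3>2(3f+1)$ incidences, so some triple has all three primaries honest --- three \emph{distinct} honest validators in consecutive rounds). You instead assume the primary stays fixed for three consecutive rounds, citing the related-work remark about HammerHead; that remark concerns a possible adaptation, not the schedule the lemma and its use in the liveness argument (Lemma~\ref{lemma:decision-liveness}) refer to, where the honest primaries of rounds $k,k+1,k+2$ may be different validators.

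Even granting your model, the boundary bookkeeping is wrong in a way that breaks the argument. A window of $3f+3$ rounds that is \emph{not} block-aligned has partial overlaps of sizes $o$ and $3-o$ at its two ends, which together consume exactly one block's worth of rounds, so it fully contains only $(3(f+1)-3)/3=f$ blocks, not $f+1$ as you claim (your own ``removes at most one full block's worth'' observation yields $f$, contradicting your stated conclusion). With only $f$ fully contained blocks, the pigeonhole step fails: the adversary can statically corrupt exactly those $f$ consecutive validators in the rotation, leaving the two partial end-blocks (of $1$ and $2$ rounds, not adjacent to each other) as the only rounds with honest primaries, so that window contains no three consecutive rounds with honest primaries. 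Hence under the fixed-for-three-rounds schedule the lemma with the $3f+3$ bound is false for misaligned windows; the statement needs the per-round rotation (as in the paper's proof), or else alignment assumptions or a longer window.
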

\begin{proof}
    There are $3f+1$ groups of three consecutive rounds.
    Due to the round-robin schedule, each of the honest validators must be the primary proposer in exactly 3 of these groups.
    As there are $2f+1$ honest validators, due to the pigeonhole principle, one group must contain $\lceil \frac{3*(2f+1)}{3f+1} \rceil$ = 3 honest proposers.
\end{proof}

\begin{lemma}
    \label{lemma:decision-liveness}
    After GST any undecided slot eventually gets decided.
\end{lemma}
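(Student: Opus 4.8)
The plan is to show that, after $\GST$, the indirect decision rule always resolves a slot left \textsf{undecided} by the direct rule, by producing directly-committed anchors ahead of it. I would first record the consequences of synchrony: by \Cref{th:view-synch} honest validators move through rounds within $\Delta$ of one another, and since the timeout is $2\Delta$, \Cref{th:proposer-proposal,th:votes} apply, so any round whose primary proposer is honest and whose wave runs entirely after $\GST$ has the property that every one of the $\ge 2f+1$ honest validators contributes, in that wave's decision round, a block certifying the primary block; moreover no skip pattern can form for that slot, since at most $f$ blocks in the following round fail to support the primary. Hence the direct decision rule marks such a primary slot as \textsf{to-commit} at every honest validator, and by \Cref{corollary:unique-commit} they agree on it.

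Now let $s$ be a slot left \textsf{undecided} at round $r$. By \Cref{lemma:3-consecutive} applied to the rounds after $\max\{\GST, r\}$, there are three \emph{consecutive} rounds $R, R+1, R+2$, all with honest primary proposers and all running after $\GST$, whose primary slots are therefore all \textsf{to-commit} by the previous paragraph. I would then resolve $s$ by following anchor pointers: the anchor of an \textsf{undecided} slot at round $\rho$ is the first slot at round $>\rho+2$ that is \textsf{undecided} or \textsf{to-commit}; after $\GST$ an honest primary's slot is never directly skipped, so from round $\rho+3$ one reaches a non-skipped slot within a bounded number of rounds, and within a further bounded window (\Cref{lemma:3-consecutive}) one of the three \textsf{to-commit} primary slots above is encountered. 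Once the first non-skipped slot ahead of some $\sigma$ on this chain is one of those committed primaries, the indirect rule decides $\sigma$ --- \textsf{to-commit} if that primary causally references a certificate over a proposal of $\sigma$, and \textsf{to-skip} otherwise --- and walking the chain back to $s$ decides every predecessor in turn, so $s$ becomes decided; by \Cref{lemma:consistent-slots} all honest validators decide it identically.

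The delicate point, which I would spend most of the effort on, is showing this chain of anchors actually terminates at a committed primary rather than drifting off to ever-larger rounds: when a slot in the chain resolves to \textsf{to-skip} it ceases to be the anchor of its predecessor and the pointer is re-routed forward, and with several proposer slots per round the pointer can step a variable number of rounds. The role of \emph{three} consecutive honest primaries (equivalently a full wave, since the liveness schedule keeps the primary static for the wave length of $3$) is precisely to block every way the pointer could skip past a committed slot; I would make this rigorous by a well-founded induction on round numbers, using that directly-committed primaries recur within every window of $3f+3$ rounds after $\GST$ (\Cref{lemma:3-consecutive}), so that no slot can remain \textsf{undecided} indefinitely. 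The variable-distance generalisation of \Cref{sec:algorithms} is handled the same way.
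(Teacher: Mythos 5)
Your plan is correct and is essentially the paper's own argument: it likewise combines \Cref{lemma:3-consecutive} with \Cref{th:votes} to obtain three consecutive directly committed primary slots at rounds $k,k+1,k+2$ after GST, and then resolves every earlier undecided slot by an induction over round numbers showing its anchor is forced to be a \textsf{to-commit} slot. The only difference is presentational: the paper fixes that one triple and inducts downward (base case: slots at rounds $k-3,k-2,k-1$ are anchored exactly at the three committed primaries; step: for a slot at round $x\le k-4$ all intervening slots are already decided, so its anchor is either one of them marked \textsf{to-commit} or the primary of round $k$), which disposes of the anchor re-routing and chain-termination bookkeeping you single out as the delicate point.
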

\begin{proof}
    Let there be an undecided slot $s$ in round $r$.
    After GST, due to Lemma~\ref{lemma:3-consecutive}, there will eventually be an honest proposer for the first slots $s_0, s_1$ and $s_2$ of rounds $k, k+1$ and $k+2$ respectively, where $k > r$. By Lemma~\ref{th:votes}, the honest proposer's blocks will have $2f+1$ certificates and be scheduled for a commit.
    We now prove that by induction, all slots in round $\leq k-1$ get decided.
    In the base case, any undecided slots in rounds $k-3, k-2$ or $k-1$ get decided by the commits in slots $s_0, s_1$ and $s_2$ respectively, as they are the first slots higher than the respective decision rounds.
    For the induction step, any undecided slot $s$ in round $x \leq k-4$ also gets decided since $s_0$ is higher than the decision round of $x$ and there are no undecided slots between $s$ and $s_0$ (induction hypothesis).
\end{proof}

\begin{theorem}[Consensus Liveness] \label{th:c-liveness}
    After GST the proposal of an honest proposer will commit.
\end{theorem}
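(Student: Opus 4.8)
The plan is to combine the round-synchronization and certificate-creation machinery (Lemmas~\ref{th:view-synch}--\ref{th:votes}) with the decision-liveness lemma (Lemma~\ref{lemma:decision-liveness}) and the round-robin scheduling lemma (Lemma~\ref{lemma:3-consecutive}). First I would observe that it suffices to show that after GST, every honest primary proposer's block eventually becomes certified and then gets committed, because an arbitrary honest proposer's slot either already lies below some honest primary proposer slot that triggers an indirect commit, or is itself decided by Lemma~\ref{lemma:decision-liveness}; in either case the causal history of a committed proposer block contains it (unless it is skipped, but a skip requires a skip pattern, which contradicts Lemma~\ref{lemma:direct-skip} once $f+1$ honest validators have supported the honest proposal after GST via Lemma~\ref{th:proposer-proposal}).

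The core argument proceeds as follows. By Lemma~\ref{lemma:3-consecutive}, after GST there is a window of three consecutive rounds $k, k+1, k+2$ each having an honest primary proposer. By Lemma~\ref{th:view-synch} all honest parties are round-synchronized within $\Delta$, so by Lemma~\ref{th:proposer-proposal} each of these three honest proposer blocks receives votes from all $2f+1$ honest validators, and by Lemma~\ref{th:votes} each accumulates a certificate. The slot at round $k$ is therefore directly committed (it sees $2f+1$ certificate patterns, i.e.\ a supported proposer per the \textsc{TryDirectDecide} rule), since the honest blocks at round $k+1$ vote for it and the honest blocks at round $k+2$ certify it and are themselves seen by all honest validators. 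This gives a concrete committed proposer slot after GST; applying Lemma~\ref{lemma:decision-liveness} (whose proof already uses exactly this three-consecutive-honest-proposers structure) then shows that every undecided slot in rounds below $k$ eventually gets decided, so the commit sequence advances past any fixed round, and in particular past the round of the target honest proposer.

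Finally I would close the loop on the target honest proposer's proposal itself: once its round is surpassed by the advancing commit sequence, its slot is decided, and it cannot be decided \textsf{to-skip} — a direct skip is ruled out by Lemma~\ref{lemma:direct-skip} (after GST at least $f+1$ honest validators support it), and an indirect skip would require an anchor that is \textsf{to-commit} but lacks a certified link to it, contradicting Lemma~\ref{lemma:recursion-safety} since the honest proposal is certified by Lemma~\ref{th:votes}. Hence the slot is decided \textsf{to-commit} and the proposal is delivered. The main obstacle is bookkeeping the interplay between the "primary proposer" (first slot, guaranteed liveness) and a general honest proposer in a non-primary slot: for non-primary honest proposers one does not get a guaranteed timeout-backed certificate, so I would phrase the theorem's guarantee precisely in terms of what Lemma~\ref{lemma:decision-liveness} delivers — eventual decision of the slot — and then argue the decision must be \textsf{to-commit} for an honest, certified block, rather than claiming a direct-commit in three message delays for every honest slot.
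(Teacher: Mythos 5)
Your core argument coincides with the paper's proof: after GST, Lemma~\ref{th:votes} yields $2f+1$ certificates for the honest proposer's block, every validator eventually observes them in its DAG and directly schedules the slot as \textsf{to-commit}, and Lemma~\ref{lemma:decision-liveness} ensures all earlier undecided slots get decided so the commit sequence reaches and delivers the block. Your detour through Lemma~\ref{lemma:3-consecutive} to first manufacture a committed primary slot is redundant for this case (that machinery is already encapsulated inside Lemma~\ref{lemma:decision-liveness}), but it is harmless.

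The part that would fail is the attempted extension to honest proposers in non-primary slots. The timeout mechanism only protects the first proposer slot of each round (the paper explicitly provides guaranteed liveness after GST for one block per round), so Lemmas~\ref{th:proposer-proposal} and~\ref{th:votes} do not apply to non-primary slots: honest validators may advance before such a block arrives, $2f+1$ of their next-round blocks may fail to support it, and the slot can then legitimately be decided \textsf{to-skip}. Your claim that ``at least $f+1$ honest validators have supported it after GST'' has no supporting lemma for non-primary slots, and Lemma~\ref{lemma:direct-skip} is invoked with the wrong hypothesis --- its premise is that some correct validator has \emph{committed} a block in the slot, not merely that some honest validators support it; likewise the appeal to Lemma~\ref{lemma:recursion-safety} presupposes a certification that is not guaranteed there. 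The theorem, as the paper proves it, covers only the timeout-backed (primary) proposer; restricted to that case your argument is correct and is essentially the paper's.
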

\begin{proof}
    By Lemma~\ref{th:votes} there will be $2f+1$ certificates for the proposer, one per honest party. By the code an honest validator tries to commit the proposer for every block they get so eventually they will get the $2f+1$ certificates. The validator schedules the block to be committed. By Lemma~\ref{lemma:decision-liveness}, all prior undecided blocks will eventually be decided, and the validator will deliver the honest proposer's block.
\end{proof}

\begin{theorem}[Agreement]
    \label{thm:agreement}
    \sysnamec satisfies the agreement property of Byzantine Atomic Broadcast.
\end{theorem}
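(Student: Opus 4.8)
The agreement property of Byzantine Atomic Broadcast (the same clause as for reliable broadcast, \Cref{sec:overview}) requires that if an honest validator $v_i$ outputs $\textit{a\_deliver}_i(b, b.round, v_k)$, then every honest validator $v_j$ eventually outputs $\textit{a\_deliver}_j(b, b.round, v_k)$. The plan is to reduce this to facts already in hand. As in the proof of \Cref{thm:total-order}, each validator delivers blocks by running one fixed deterministic linearisation on the causal history of every committed proposer block, discarding blocks with duplicate sequence numbers (\Cref{thm:integrity}). Hence $v_i$ having delivered $b$ means there is a proposer block $P$, committed by $v_i$ at some fixed position $t$ of $v_i$'s committed proposer sequence, whose causal history contains $b$ and whose linearisation outputs $b$. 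It therefore suffices to show that every honest $v_j$ eventually commits a proposer sequence that contains this same $P$.

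First I would establish that every honest validator's committed proposer sequence grows without bound. After GST, \Cref{lemma:3-consecutive} guarantees infinitely many windows of three consecutive rounds each having an honest primary proposer; by \Cref{th:c-liveness} each such honest proposal is eventually committed by $v_j$, and by \Cref{lemma:decision-liveness} every slot that is temporarily \textsf{undecided} eventually becomes decided. Since the commit-sequence rule of \Cref{sec:consensus} releases every decided slot up to the first \textsf{undecided} one, the repeated clearing of undecided slots keeps pushing that barrier forward, so $v_j$'s committed proposer sequence eventually has length at least $t+1$. (If $v_i$ delivered $b$ before GST this does not matter, as agreement only asks for eventual delivery.)

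Next I would apply \Cref{lemma:proposer-consistent}: the committed proposer sequences of $v_i$ and $v_j$ are prefixes of one another, so once $v_j$'s sequence reaches length $t+1$ it contains exactly $P$ at position $t$. Because a correct validator only inserts a block into its DAG once it holds the entire causal history of that block, $v_j$ possesses $b$ and all of $P$'s history by the time it commits $P$; running the identical deterministic linearisation then produces $\textit{a\_deliver}_j(b, b.round, v_k)$, with the same triple $(b, b.round, v_k)$ by \Cref{thm:integrity}. This yields the claim.

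The step I expect to be the main obstacle is the unbounded-progress argument: it is not enough that individual honest proposals commit; one must argue that the committed prefix advances past \emph{every} position $t$. This requires carefully combining \Cref{lemma:decision-liveness} with the round-robin schedule of \Cref{lemma:3-consecutive} so that a Byzantine proposer slot that remains \textsf{undecided} is always eventually decided by a later honest anchor, preventing the ``stop at the first undecided slot'' barrier from stalling forever.
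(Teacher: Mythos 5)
Your proposal is correct and follows essentially the same route as the paper's proof: both reduce agreement to the prefix-consistency of committed proposer sequences (\Cref{lemma:proposer-consistent}) combined with post-GST liveness of honest proposers (\Cref{th:c-liveness}, via the proposer schedule) to show the other validator's committed sequence eventually extends past the delivering block, after which the identical deterministic linearisation yields delivery. Your extra care about the unbounded-progress step (via \Cref{lemma:3-consecutive} and \Cref{lemma:decision-liveness}) just makes explicit what the paper leaves implicit inside its appeal to \Cref{th:c-liveness}.
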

\begin{proof}
    If a correct validator outputs $\textit{a\_deliver}_i(b,r,v_k)$, then it must have committed a sequence of proposer blocks $L = l_0,l1...l_n$ such that the deterministic algorithm to deliver blocks from the sequence $L$ delivers block $b$.
    Another correct validator $Y$ that has not delivered $b$ will eventually see a proposal $b'$ from an honest proposer in round $r' > r$ as per the proposer schedule of \sysnamec. Due to Theorem~\ref{th:c-liveness}, after GST, $Y$ will commit the proposer's block $b'$.
    Due to Lemma~\ref{lemma:proposer-consistent}, $Y$ will also commit the proposer sequence $L$ before committing $b'$. Since $Y$ follows an identical deterministic algorithm as $X$ to deliver blocks from the committed sequence of proposer blocks, it also delivers $b'$ eventually.
\end{proof}

\para{Integrity of \sysnamec}
\sysnamec guarantees integrity by construction.

\begin{theorem}[Integrity]
    \label{thm:integrity}
    \sysnamec satisfies the integrity property of Byzantine Atomic Broadcast.
\end{theorem}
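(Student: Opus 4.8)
The plan is to show that integrity holds essentially by construction, relying on the determinism of the commit machinery together with Lemma~\ref{lemma:proposer-consistent}. The target is: for each round number $q$ and author $v_k$, an honest validator $v_i$ outputs $\textit{a\_deliver}_i(\cdot,q,v_k)$ at most once. I would first recall that an honest validator $v_i$ produces $\textit{a\_deliver}_i(b,b.round,v_k)$ outputs only by running the fixed deterministic linearization that takes the current committed sequence of proposer blocks and totally orders the union of their causal histories (as used in the proof of Theorem~\ref{thm:total-order}). So it suffices to argue that, across all invocations of this linearization over time, no slot $(v_k,q)$ is emitted twice.

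Second, I would show that the linearization is append-only over time. By Lemma~\ref{lemma:proposer-consistent}, the committed proposer sequence of $v_i$ only grows, with each new committed sequence having the previous one as a prefix; since the linearization is a deterministic function of this sequence that extends the linearization of any prefix, every earlier $\textit{a\_deliver}$ output is reproduced verbatim and no already-delivered block is re-emitted. Hence a second delivery of the same $(q,v_k)$ could only arise within a single linearization pass emitting that slot twice.

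Third — the step I expect to be the crux — I would rule out double emission of a slot within one pass, even in the presence of equivocation. The linearization walks the causal histories in a canonical order and, for each slot $(A,r)$, emits at most one block: it maintains the set of slots already placed, and when it subsequently encounters another block for an already-placed slot (possible only if $A$ equivocated) it discards it. Thus, although a committed proposer block may transitively reference two conflicting blocks $(A,r,h)$ and $(A,r,h')$, only one is ever delivered; equivalently, one may restrict delivery to blocks in the implicitly certified part of the DAG and invoke Lemma~\ref{lemma:unique-cert} to obtain uniqueness per slot. Combining the three points, for every $q$ and $v_k$ the honest validator $v_i$ outputs $\textit{a\_deliver}_i(\cdot,q,v_k)$ at most once, which is exactly the integrity property, and the same argument restricted to the reliable-broadcast sub-protocol yields integrity there as well.
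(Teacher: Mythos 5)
Your proposal is correct and follows essentially the same route as the paper, whose proof is a one-line observation that the deterministic linearization of committed proposer blocks' causal histories removes any block with a duplicate sequence number before delivery. Your additional points (append-only growth of the committed sequence via Lemma~\ref{lemma:proposer-consistent}, and per-pass deduplication even under equivocation) are just a more explicit spelling-out of that same by-construction argument.
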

\begin{proof}
    The algorithm linearizing the causal history of a committed proposer block removes any block with duplicate sequence numbers before delivering the sequence of blocks.
\end{proof}

\subsection{Security of \sysnamefpc } \label{sec:security-fpc}
We argue the safety and liveness of \sysnamefpc.

\begin{theorem}[Epoch close safety]
    \label{thm:epoch-safety}
    Transactions finalized in an epoch continue to persist in all subsequent epochs.
\end{theorem}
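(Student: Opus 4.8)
The plan is to show that any transaction $T$ finalized on the fast path during epoch $e$ has a certificate (a block carrying $2f+1$ votes for $T$) embedded in the causal history of the epoch-closing commit, so that all validators entering epoch $e+1$ observe this certificate and lock the relevant owned objects accordingly. Since \sysnamec provides total order and agreement (Theorems~\ref{thm:total-order} and \ref{thm:agreement}), the epoch-closing commit $C$ — defined as the first \sysnamec commit whose causal history contains blocks with the epoch-change bit set from $2f+1$ distinct validators — is seen identically by all correct validators, so it suffices to argue the certificate inclusion.

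First I would fix a fast-path transaction $T$ finalized in epoch $e$ and recall the two finalization patterns from \Cref{sec:detailed-fastpath}: either (1) $2f+1$ certificate patterns over the proposing block, each containing $2f+1$ votes for $T$, or (2) a single certificate pattern referenced by a \sysnamec commit. In case (2) the conclusion is essentially immediate: the committed block already places a certificate for $T$ in its causal history, and by the prefix property of the committed proposer sequence (Lemma~\ref{lemma:proposer-consistent}) every correct validator commits that block and hence processes the certificate before closing the epoch. In case (1), the key step is a quorum-intersection argument between the $2f+1$ validators whose blocks vote for $T$ (i.e., form the certificate) and the $2f+1$ validators whose epoch-change-bit-set blocks are in the causal history of $C$. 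Any validator $v$ in the intersection is honest and voted for $T$ in some block $B_v$ before setting its epoch-change bit; because the first parent of every block of a correct validator links to its own previous block (the convention from \Cref{sec:dag}), the epoch-change-bit block of $v$ in the causal history of $C$ transitively references $B_v$. This does not yet give a full certificate inside $C$'s history, so the argument must be sharpened: I would instead intersect the $2f+1$ voters with the $f+1$ honest validators among the epoch-change quorum, and argue that each such honest validator, upon detecting epoch change, had already seen $2f+1$ votes for $T$ (otherwise $T$ would not be finalizable in the system before the honest validators stopped voting) and therefore its self-referencing chain of epoch-change blocks pulls a full certificate for $T$ into the causal history of $C$.

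The main obstacle I expect is precisely closing this last gap cleanly: showing that finalization pattern (1) cannot complete \emph{after} $f+1$ honest validators have set their epoch-change bit. The epoch-change bit semantics ("blocks referencing these votes do not contribute to finalization") are what rule this out — once $f+1$ honest validators stop contributing votes, at most $f$ honest plus $f$ Byzantine voting blocks remain usable, which is fewer than $2f+1$, so no fresh certificate pattern can form. Hence any $T$ finalized via (1) was already certified, with the certificate visible to at least one honest validator, \emph{before} the epoch-change quorum formed; I would then invoke the reliable-broadcast agreement of block dissemination together with Lemma~\ref{lemma:recursion-safety} to conclude the certificate is linked from every block at later rounds, in particular from the blocks composing $C$'s causal history, completing the proof. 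The remaining bookkeeping — that persistence of the certificate implies persistence of the execution effects, since the next epoch's committee locks $T$'s owned inputs and will not certify any conflicting transaction — follows from the single-epoch fast-path safety already stated in \Cref{sec:fastpath}.
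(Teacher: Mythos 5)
Your overall reduction is the right one (it suffices to show that some certifying block for a finalized transaction is committed within the epoch), and your treatment of the commit-based finalization case is fine. The gap is in case (1): you perform the quorum intersection on the wrong set. A transaction finalized via pattern (1) comes with $2f+1$ validators each publishing a \emph{certifying} block, i.e.\ a block (with epoch-change bit unset) whose parents already contain $2f+1$ votes for $T$ --- not merely $2f+1$ voters. The paper intersects these $2f+1$ certifiers with the $2f+1$ validators whose bit-set blocks are committed at epoch close: the common honest validator $v$ has a certifying block $B_1$ and a committed bit-set block $B_2$; since bit-set blocks by definition never certify and $v$ keeps the bit set once set, $B_1$ is from an earlier round than $B_2$, hence (by the first-parent self-link convention) lies on $v$'s own chain inside the causal history of $B_2$ and is therefore committed --- contradiction, with no timing or counting argument needed.

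Because you intersected voters instead of certifiers, your ``sharpened'' step has to carry the load and it does not hold up. The claim that each honest voter in the intersection ``had already seen $2f+1$ votes for $T$'' before setting its bit does not follow from $T$ being finalized (finalization asserts the existence of $2f+1$ certifying blocks somewhere, not that any particular voter observed the votes), and even a voter that did see them need not have published a bit-unset block referencing them before switching the bit --- which is exactly what you need to pull a certificate into $C$'s history. Your counting argument also misreads the bit semantics: votes already in the DAG remain referenceable by other validators' bit-unset blocks; it is the bit-set blocks that cannot serve as certificates, so the count should be over certifying blocks, not ``usable voting blocks.'' Finally, the appeal to Lemma~\ref{lemma:recursion-safety} would require the $2f+1$ certifying blocks to sit in a single round $x$ and a committed block at a round strictly greater than $x$; your real-time ``the certificate formed before the epoch-change quorum'' does not give either, whereas the per-validator causal argument above sidesteps the issue entirely.
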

\begin{proof}
    It is sufficient to prove that all fast-path transactions that are considered final have one certifying block committed in the current epoch.
    For contradiction's sake, assume that the epoch closed before any certifying block for a finalized transaction $tx$ could be committed. For the epoch to close, blocks from $2f+1$ validators with the epoch-change bit set must be committed. Since $tx$ is finalized, $2f+1$ validators, by definition, publish a block that certifies the transaction.
    By quorum intersection, one honest validator $v$ published a block $B_1$ in round $r_1$ certifying transaction $tx$, whereas a block $B_2$ in round $r_2$ from $v$ with epoch-change bit set must have been committed. All blocks published by $v$ in rounds $\geq r_2$ also have the epoch-change bit set.
    Because blocks with the epoch-change bit set, by definition, do not certify any transaction, $B_1$ is necessarily published in an earlier round than that of $B_2$ (i.e. $r_1 < r_2$). $B_1$ is therefore contained in the causal history of $B_2$, and must also have been committed, which is a contradiction.
\end{proof}

\begin{theorem}[\sysnamefpc Safety]
    \label{thm:fpc-safety}
    An honest validator in \sysnamefpc never finalizes two conflicting transactions.
\end{theorem}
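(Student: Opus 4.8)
The plan is to show safety of \sysnamefpc by establishing the invariant that at most one transaction per owned object can ever gather a quorum of votes within an epoch, and then argue that finalization (via either of the two finality patterns) implies the transaction was certified. The overall structure mirrors the classic FastPay/Sui-Lutris quorum-intersection argument, but adapted to the fact that votes are embedded in DAG blocks rather than sent as standalone signed messages.

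First I would formalize what it means for a transaction $T$ to be \emph{voted on} by an honest validator: by the rules in \Cref{sec:fastpath}, an honest validator $v$ either includes $T$ in its own block (an implicit vote) or includes an explicit vote for $T$ in a block $B$ only if $T$ is in the causal history of $B$ and $T$ does not conflict with any transaction $v$ has previously voted on. The key observation, analogous to \Cref{lemma:unique-support}, is that once an honest validator votes for $T$ it will never vote for any conflicting $T'$: its local ``already voted'' set is monotone, and because each honest block references the author's own previous block (the first-parent convention), the vote is durably reflected in all its future blocks. Hence an honest validator contributes at most one vote among any set of mutually conflicting transactions.

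Next I would prove the core lemma: for any owned object $o$, at most one transaction reading $o$ is ever \emph{certified} (gathers $2f+1$ distinct-author votes) in a given epoch. Suppose for contradiction two conflicting transactions $T$ and $T'$ both collect $2f+1$ votes; the two voter sets each contain at least $f+1$ honest validators, so by quorum intersection some honest validator voted for both, contradicting the monotonicity argument above. This is the step I expect to be the main obstacle — not because the intersection argument is hard, but because one must be careful that ``vote'' is counted per distinct author (Byzantine authors cannot inflate a quorum by equivocating, which follows because a block carries a single author signature and we count distinct authors) and that explicit votes embedded in equivocating Byzantine blocks still only count once per author. One should also confirm that the epoch parameter is respected: votes cast with the epoch-change bit set do not count toward certification (by definition), so this does not create a second certified transaction within the epoch.

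Finally I would close the argument by connecting finalization to certification and then invoking \Cref{thm:epoch-safety}. An honest validator finalizes $T$ only via (i) $2f+1$ validators supporting a certificate over $T$, or (ii) a \sysnamec commit of a block whose causal history contains a certificate over $T$; in either case the finalized transaction is certified ($2f+1$ votes). Combining this with the core lemma, no two conflicting transactions can both be finalized within an epoch; and by \Cref{thm:epoch-safety} any finalized transaction's certifying block is committed in the current epoch and its effects persist into all future epochs, so the next-epoch re-voting (equivocation recovery) cannot resurrect a conflicting transaction — fresh votes in epoch $e' > e$ are only issued for objects not already locked by a finalized transaction whose certificate has been committed. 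This yields that an honest validator never finalizes two conflicting transactions, across epoch boundaries as well as within an epoch.
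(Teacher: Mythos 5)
Your proposal is correct and follows essentially the same argument as the paper's (much shorter) proof: finalization of a transaction with an owned-object input requires $2f+1$ votes, and quorum intersection together with the rule that an honest validator never votes for two conflicting transactions yields the contradiction. The only difference is that the paper also explicitly notes the same argument rules out a conflict between a fast-path transaction and a consensus-path (mixed-objects) transaction, since the latter likewise requires $2f+1$ votes for its owned-object inputs — a case your case split over the two fast-path finality patterns leaves implicit.
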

\begin{proof}
    Transactions that have an owned object as input require votes from $2f+1$ validators to be finalized.
    If two conflicting fast paths are finalized, an honest validator must have voted for both transactions (by quorum intersection), hence a contradiction.
    Using a similar argument, a fast path transaction does not conflict with a consensus path transaction, as the consensus path in \sysnamefpc finalizes a transaction with an owned object input only if it has votes from $2f+1$ validators.
\end{proof}

\begin{theorem}[Fast-Path Liveness] \label{th:fp-liveness}
    An honest fast-path transaction will commit after GST.
\end{theorem}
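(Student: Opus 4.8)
The plan is to show that after $\GST$ an honest fast-path transaction $T$ accumulates enough votes inside the DAG that one of the two finality conditions of \Cref{sec:detailed-fastpath} is met. I read ``honest fast-path transaction'' as one submitted by an honest client on owned objects it controls, so that $T$ conflicts with no transaction any honest validator has ever voted for (a Byzantine validator cannot forge a conflicting transaction without the owner's authorization); I also assume, as the security intuition already notes, that the epoch is long enough for the steps below to complete before honest validators set the epoch-change bit.

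First I would establish \textbf{inclusion and certification}. By the client's resubmission policy, some bounded time after $\GST$ the transaction reaches an honest validator $v$; since $T$ conflicts with nothing, $v$ places $T$ in its next block $B_0$ at some round $r_0$, its implicit vote for $T$. By round synchronization after $\GST$ (\Cref{th:view-synch}) and the $\Delta$ delivery bound, within a constant number of rounds every honest validator has $B_0$ in its causal history and --- again because no conflicting transaction exists --- casts an explicit vote for $T$ in its next block. Hence there is a round $r$ by which all $2f+1$ honest validators have produced a block that includes $B_0$ in its causal history, votes for $T$, and (by the length-of-epoch assumption) still has the epoch-change bit unset. Because after $\GST$ an honest validator references every previous-round block it has already received when it proposes --- exactly the waiting behaviour exploited in \Cref{th:proposer-proposal,th:votes} --- every honest block in round $r+1$ references all $2f+1$ of those voting blocks and is therefore a certificate pattern over $B_0$ carrying $2f+1$ votes for $T$.

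Then \textbf{finalization}, where either condition suffices. For condition (1): every honest block in round $r+2$ causally contains $2f+1$ of the round-$(r+1)$ certificate patterns over $B_0$, which supplies the required $2f+1$ certificate patterns, so $T$ is finalized directly. Alternatively, for condition (2): the certificate pattern over $B_0$ now lies in the DAG, and by \Cref{th:c-liveness} together with the proposer schedule (\Cref{lemma:3-consecutive}) some honest proposer's block at a round $> r+1$ is eventually committed by \sysnamec; since that proposer is honest and proposes after $\GST$, its causal history contains a certificate pattern over $B_0$ with $2f+1$ votes for $T$, so the consensus commit finalizes $T$.

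The main obstacle is not the DAG bookkeeping but fixing the hypotheses under which the statement holds: we must (i) use that $T$ is non-conflicting \emph{within the epoch}, so that no honest validator is ever barred from voting for it --- the cross-epoch case is the separate concern of equivocation tolerance and \Cref{thm:epoch-safety}, not this theorem --- and (ii) invoke the ``epoch long enough'' assumption so that the $2f+1$ honest votes land in blocks whose epoch-change bit is still $0$, since otherwise they would not count toward finality. The remaining ingredient, that after $\GST$ honest proposals reference all honest previous-round blocks, follows from the timeout rules already used in \Cref{th:proposer-proposal,th:votes}.
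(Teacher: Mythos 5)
Your proof is correct and its core is the same as the paper's: after \GST{} the non-conflicting transaction reaches an honest validator, is included in a block, every honest validator votes for it, and a quorum of $2f+1$ votes therefore appears in the DAG within a bounded number of rounds. The paper's own proof is essentially just this timing sketch: assuming the epoch does not end, all honest validators receive the transaction within $\Delta$, their votes enter the DAG within $4\Delta$ (a round lasts at most timeout${}+\Delta=3\Delta$), and in the next round every honest validator references the $2f+1$ votes and executes --- and it stops there, implicitly treating that as the commit. You go further in two useful ways: you explicitly discharge the finalization condition, either via $2f+1$ certificate patterns or via a consensus commit of a block whose causal history contains a certificate, the latter resting on \Cref{th:c-liveness} and the proposer schedule of \Cref{lemma:3-consecutive}; and you state as explicit hypotheses the non-conflict-within-the-epoch and epoch-long-enough assumptions that the paper only handles in the remark preceding its proof. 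One small caution: the appendix phrases finality in terms of certificate patterns over the proposing block, which require the $2f+1$ supporting, voting blocks to sit exactly at round $r_0+1$; after \GST{} some honest validators may already have proposed their round-$(r_0+1)$ block before receiving $B_0$, so their votes can land in later rounds. The vote-based formulation of \Cref{sec:fastpath} (a transaction is certified once it has $2f+1$ votes anywhere in the DAG, and finalized when $2f+1$ blocks, or a committed block, support such a certificate) is what the paper's proof implicitly uses, and adopting it makes your round-$(r+1)$ step go through without this timing caveat.
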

The proof is the same as consistent broadcast. We do it after GST assuming the epoch does not end. If the epoch has infinite length then we can convert all references to $\Delta$ with  ``eventually'' and the proof will work in asynchrony.
\begin{proof}
    An honest validator will submit a fast-path transaction that does not have equivocation. As a result, all honest validators will receive it after $\Delta$ and vote. These votes will appear in the DAG after at most $4\cdot\Delta$ since any round has at most duration of timeout+$\Delta$ = $3\cdot\Delta$. In the next round, every honest validator will reference the $2f+1$ votes in their DAG and execute.
\end{proof}

\begin{theorem}[Equivocation-Tolerence] \label{th:eq-tolerance}
    If a faulty validator $v_k$ concurrently called
    $\textit{r\_bcast}_k(m,q,e)$ and  $\textit{r\_bcast}_k(m',q,e)$ with $m \neq m'$ then the rest of the validators
    either \\ $\textit{r\_deliver}_i(m,q,v_k,e)$, or $\textit{r\_deliver}_i(m',q,v_k,e)$, or there is a subsequent epoch $e'>e$
    where $v_k$ is honest, calls $\textit{r\_bcast}_k(m'',q,e')$ and all honest validators $\textit{r\_deliver}_i(m'',q,v_k,e')$,
\end{theorem}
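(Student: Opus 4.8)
The plan is to reduce the statement to two cases depending on whether the equivocating broadcast at epoch $e$ ever becomes "resolved" inside $e$, and then to invoke the epoch-change machinery of \Cref{sec:epoch-change} together with the fast-path safety argument of \Cref{thm:fpc-safety} to handle the unresolved case by re-broadcast in a later epoch. First I would observe that, by the fast-path rules of \sysnamefpc, an honest validator only ever includes (or votes for) one of $m, m'$ for the slot $q$ of $v_k$ in epoch $e$: the very first one it sees that does not conflict with something it has already voted on. Hence the set of "voted" values among honest validators for $(q, v_k, e)$ is of size at most one once we restrict to a fixed honest validator, but different honest validators may have voted for different values. The key dichotomy is: either some value (say $m$) accumulates $2f+1$ votes and is certified — in which case by the finality rule and \Cref{thm:epoch-safety} all honest validators eventually $\textit{r\_deliver}_i(m,q,v_k,e)$, discharging the first disjunct — or neither value is certified.

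In the uncertified case I would argue that the slot $(q,v_k)$ is "blocked" in epoch $e$: no honest validator delivers either value, because delivery on the fast path requires a certificate (i.e.\ $2f+1$ votes), which by assumption never forms. Then I would appeal to the epoch-change protocol: at the epoch boundary of $e$, once $2f+1$ blocks with the epoch-change bit set are committed, the epoch is closed, and by the unlocking rule described in \Cref{sec:epoch-change} any validator in the committee of the next epoch $e' > e$ may unlock the blocked slot $(q,v_k)$ and allow it to receive fresh votes. Taking $e'$ to be the first subsequent epoch in which $v_k$ is honest, $v_k$ calls $\textit{r\_bcast}_k(m'',q,e')$ for a single value $m''$; since $v_k$ is honest it does not equivocate in $e'$, so by \Cref{th:fp-liveness} (fast-path liveness, applied within $e'$, assuming $e'$ is long enough — or "eventually" in the asynchronous reading) the transaction gathers $2f+1$ votes and all honest validators $\textit{r\_deliver}_i(m'',q,v_k,e')$, discharging the third disjunct.

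The main obstacle, and the step that needs the most care, is showing that the unlocking in $e'$ is \emph{safe} — i.e.\ that re-broadcasting $m''$ in $e'$ cannot contradict a delivery of $m$ or $m'$ that already happened in $e$. This is exactly where \Cref{thm:epoch-safety} is load-bearing: if some honest validator had delivered, say, $m$ in epoch $e$, then $m$ was finalized in $e$, hence (by that theorem) a certifying block for $m$ is in the causal history of the epoch-closing commit and is committed in $e$; consequently every honest validator entering $e'$ already knows slot $(q,v_k)$ is resolved to $m$ and will \emph{not} unlock it, so we would have been in the first disjunct, not this one. Thus the two disjuncts are genuinely exclusive and the unlocking rule only fires on truly blocked slots. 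The remaining pieces — that an honest $v_k$ eventually recurs (this is an assumption of the statement / liveness model), that within a long-enough epoch the fast-path certificate forms (reusing the proof of \Cref{th:fp-liveness} verbatim), and that "concurrently called with $m \neq m'$" forces the at-most-one-honest-vote invariant (a direct quorum-intersection argument as in \Cref{thm:fpc-safety}) — are routine and I would only sketch them.
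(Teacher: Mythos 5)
Your proposal is correct and takes essentially the same route as the paper's (much terser) proof: split on whether the value is delivered within epoch $e$, and otherwise rely on the epoch-change unlocking of uncertified locks plus fast-path liveness (\Cref{th:fp-liveness}) in a later epoch where $v_k$ is honest. Your added exclusivity argument via \Cref{thm:epoch-safety} goes beyond what the paper writes down, and is a reasonable strengthening rather than a deviation.
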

\begin{proof}
    For the case that validators $\textit{r\_deliver}_i(m',q,v_k,e)$ it is a direct result of Theorem~\ref{th:fp-liveness}. Otherwise, from the code of the epoch change when the epoch ends all validators forget the locks they have taken on messages without certificates. As a result in a future epoch $e'$ where $v_k$ is honest and does not equivocate it will be able to commit $m$ again from Thereon~\ref{th:fp-liveness}.
\end{proof}

\section{Implementation} \label{sec:implementation}
We implement a networked multi-core \sysname validator in Rust.
It uses \texttt{tokio}~\cite{tokio}
for asynchronous networking, utilizing TCP sockets for communication without relying on any RPC frameworks. For cryptographic operations, we use \texttt{ed25519-consensus}~\cite{ed25519-consensus}
for asymmetric cryptography and \texttt{blake2}~\cite{rustcrypto-hashes}
for cryptographic hashing.
To ensure data persistence and crash recovery, integrate a Write-Ahead Log (WAL), seamlessly tailored to our specific requirements. We have intentionally avoided key-value stores like RocksDB~\cite{rocksdb}
to eliminate associated overhead and periodic compaction penalties. Our implementation optimizes I/O operations by employing vectored writes~\cite{writev}
for efficient multi-buffer writes in a single syscall. For reading the WAL, we make use of memory-mapped files while carefully minimizing redundant data copying and serialization. We use the \texttt{minibytes}~\cite{minibytes}
crates to efficiently work with memory-mapped file buffers without unsafe code.

While all network communications in our implementation are asynchronous, the core consensus code runs synchronously in a dedicated thread. This approach facilitates rigorous testing, mitigates race conditions, and allows for targeted profiling of this critical code path.

In addition to regular unit tests, we have two supplementary testing utilities. First, we developed a simulation layer that replicates the functionality of the \texttt{tokio} runtime and TCP networking. This simulated network accurately simulates real-world WAN latencies, while our tokio runtime simulator employs a discrete event simulation approach to mimic the passage of time. Utilizing this simulator, we can test a wide range of scenarios on a single machine and accurately estimate resulting latencies. It's worth noting that we've found these simulated latencies, such as commit latency, to closely mirror those observed in real-world cluster testing, provided that the cross-validator latency distribution in the simulated network is correctly configured. Second, we created an a command-line utility (called `orchestrator') designed to deploy real-world clusters of \sysname with machines distributed across the globe. The simulator has proven indispensable in identifying correctness defects, while the orchestrator has been instrumental in pinpointing performance bottlenecks.
We open-source our \sysname implementation, its simulator, and orchestration utilities\footnote{\codelink}.
\section{Evaluation} \label{sec:evaluation}
We evaluate the throughput and latency of \sysname through experiments on AWS to show its performance improvements over the state-of-the-art.

\iflongversion
      Despite the large number of BFT consensus protocols~\cite{fino,sailfish,chen2023resilient,cohen2022proof,dispersedledger,dumbo-ng,shoal,bbca-ledger,bbca-chain,chan2020streamlet}, we
\else
      We
\fi
opt to compare \sysnamec with vanilla HotStuff~\cite{DBLP:conf/podc/YinMRGA19}, HotStuff-over-Narwhal (called \emph{Narwhal-HotStuff})~\cite{narwhal}, and Bullshark~\cite{bullshark}. We select these protocols for the availability of open-source implementations and detailed benchmarking scripts, their similarity to \sysname, and their adoption in real-world deployments. We specifically select the Jolteon~\cite{jolteon} variant of HotStuff as it has been adopted by Flow~\cite{flow}, Diem~\cite{baudet2019state}, Aptos~\cite{aptos}, and Monad~\cite{monad}. We also select the Narwhal-HotStuff variant as it operates on a structured DAG as \sysname and is the most performant variant of HotStuff. We finally select Bullshark as it is a performant DAG-based protocol adopted by the Sui blockchain~\cite{sui, sui-lutris}, Aleo~\cite{aleo}, and Fleek~\cite{fleek}. We evaluate the Narwhal-based systems (that is, Narwhal-HotStuff and Bullshark) in their default 1 worker configuration.
We also evaluate the fast path \sysnamefpc against Zef~\cite{DBLP:journals/corr/abs-2201-05671} (in its default configuration, with 10 shards), which is the state-of-the-art fast path protocol that serves as the foundation for the Linera blockchain~\cite{linera}.

Throughout our evaluation, we particularly aim to demonstrate the following claims.
\textbf{C1:} \sysnamec has higher throughput and drastically lower latency than the baseline state-of-the-art protocols.
\textbf{C2:} \sysnamec has a similar throughput to the baseline protocols but maintains sub-second latencies when operating in the presence of crash faults.
\textbf{C3:} \sysnamefpc maintains the same latency as the baseline state-of-the-art consensus-less protocol but with drastically higher throughput.

Note that evaluating BFT protocols in the presence of Byzantine faults is an open research question~\cite{twins}, and state-of-the-art evidence relies on formal proofs of safety and liveness (which we present in \Cref{sec:security}). While there is a need to robustly tolerate Byzantine faults, we note that they are rare in observed delegated proof of stake blockchains, as compared to crash faults that are very common.

\subsection{Experimental setup}
We deploy a \sysname testbed on AWS, using \texttt{m5d.8xlarge} instances across 13 different AWS regions\xspace
\iflongversion
      : N. Virginia (us-east-1), Oregon (us-west-2), Canada (ca-central-1), Frankfurt (eu-central-1), Ireland (eu-west-1), London (eu-west-2), Paris (eu-west-3), Stockholm (eu-north-1), Mumbai (ap-south-1), Singapore (ap-southeast-1), Sydney (ap-southeast-2), Tokyo (ap-northeast-1), and Seoul (ap-northeast-2)
\fi.
Validators are distributed across those regions as equally as possible. Each machine provides 10Gbps of bandwidth, 32 virtual CPUs (16 physical cores) on a 2.5GHz Intel Xeon Platinum 8175, 128GB memory, and runs Linux Ubuntu server 22.04.
\iflongversion
      We select these machines because they provide decent performance, are in the price range of `commodity servers', and are the same instance types used by our baselines.
\fi

\sysname can employ more than one slot per round to mitigate the performance impact of crash faults and commit more blocks per round, but if the proposer slot behaves in a Byzantine manner, it can still manipulate their slot to remain undecided, resulting in similar latency effects as an unmasked crash fault. Therefore, we have chosen to have two proposer slots per round as an effective compromise for our experiments.
To implement the partial synchrony assumption, validators wait up to 1 second to receive a proposal from the first proposer slot of the previous round.

In the following graphs, each data point is the average latency and the error bars represent one standard deviation (error bars are sometimes too small to be visible on the graph). We instantiate several geo-distributed benchmark clients within each validator submitting transactions at a fixed rate for a duration of several minutes.
\iflongversion
      We experimentally increase the load of transactions sent to the systems, and record the throughput and latency of commits. As a result, all plots illustrate the `stead state' latency of all systems under low load, as well as the maximal throughput they can serve after which latency grows quickly.
\fi
Transactions in the benchmarks are arbitrary and contain 512 bytes. The ping latency between the validators varies from 50ms to 250ms.

When referring to \emph{latency}, we mean the time elapsed from when the client submits the transaction to when the transaction is committed by the validators. When referring to \emph{throughput}, we mean the number of committed transactions over the duration of the run.
\ifpublish
      \iflongversion
            \Cref{sec:tutorial} provides a tutorial to reproduce our experiments.
      \fi
\fi

\subsection{Benchmark in ideal conditions}

\begin{figure*}[t]
      \centering
      \includegraphics[width=\textwidth]{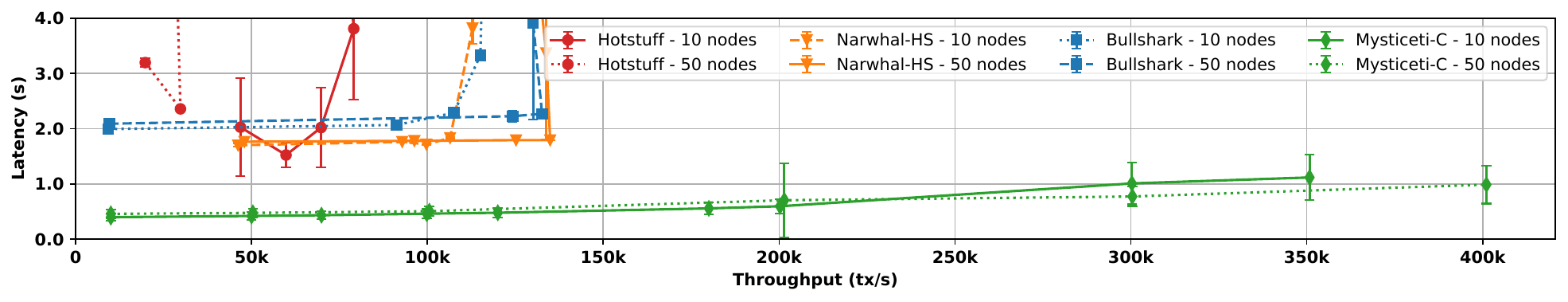}
      \caption{ \footnotesize Throughput-Latency graph comparing \sysnamec performance with state-of-the-art consensus protocols.}
      \vspace{-0.5cm}
      \label{fig:consensus}
\end{figure*}

\Cref{fig:consensus} illustrates the Latency (seconds) - Throughput (Transactions per second, TPS) relationship for \sysnamec and other consensus protocols, for a small deployment of 10 validators and a larger deployment of 50 validators. The systems run in ideal conditions, without faults.

At a steady state of 50k to 400k TPS for both network sizes \sysnamec exhibits sub-second latency, a factor 2x-3x lower than the fastest protocols, namely HotStuff, and Narwhal-HotStuff. Bullshark uses a certified DAG and worker architecture and is over 3x slower in terms of latency compared with \sysnamec for low system loads.
In terms of throughput, both \sysnamec networks scale extremely well and achieves a throughput of over 300k-400k TPS before the latency reaches 1s, that is, well lower than the latency of state-of-the-art systems. This illustrates that the single-host throughput efficiency of \sysnamec is higher than for previous designs. Note that current real-world blockchains combined\footnote{Estimates from \url{https://app.artemis.xyz/comparables}} process fewer than 100M transactions per day, equivalent to about 1.2k TPS, well within the steady state low-latency parameter space for \sysnamec, without any further scaling strategies.

These observations validate our claim \textbf{C1} showing that \sysnamec has higher throughput and drastically lower latency than the baseline state-of-the-art protocols.

Throughout these benchmarks, the the CPU utilization of the validators remains below 10\% and the validators consumes less than 15GB of memory (when experiencing the highest load of 400k tx/s).

\subsection{Benchmark with faults} \label{sec:faults}

\begin{figure}[t]
      \centering
      \includegraphics[width=\columnwidth]{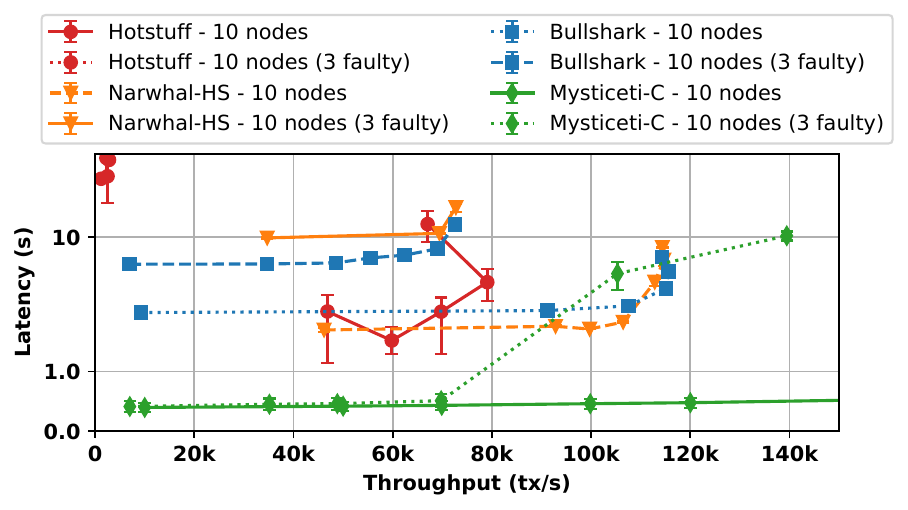}
      \caption{ \footnotesize Throughput - Latency under crash faults.}
      \label{fig:faults}
      \vspace{-0.5cm}
\end{figure}

\Cref{fig:faults} illustrates the performance of HotStuff, Narwhal-HotStuff, Bullshark, and \sysnamec when a committee of 10 parties suffers 0 to 3 crash faults (the maximum that can be tolerated in this setting). HotStuff suffers a massive degradation in both throughput and latency. With 3 faults, the throughput of HotStuff drops to a few hundred TPS and its latency exceeds 15s. Narwhal-HotStuff, Bullshark, and \sysnamec maintain a good level of throughput: the underlying DAG continues collecting and disseminating transactions despite the faults. Narwhal-HotStuff and Bullshark can process about 70k TPS in about 8-10 seconds. In contrast, \sysnamec can process the same load while maintaining sub-second latency. This improvement is due to the ability of \sysname to operate with multiple leaders per round. \sysnamec thus demonstrates a 15-20x latency improvement compared to the baseline state-of-the-art protocols.

These observations validate our claim \textbf{C2} showing that \sysnamec handles similar throughput to the state-of-the-art but with sub-second latency despite crash faults.

\subsection{Benchmark of the fast path}

\begin{figure}[t]
      \centering
      \includegraphics[width=\columnwidth]{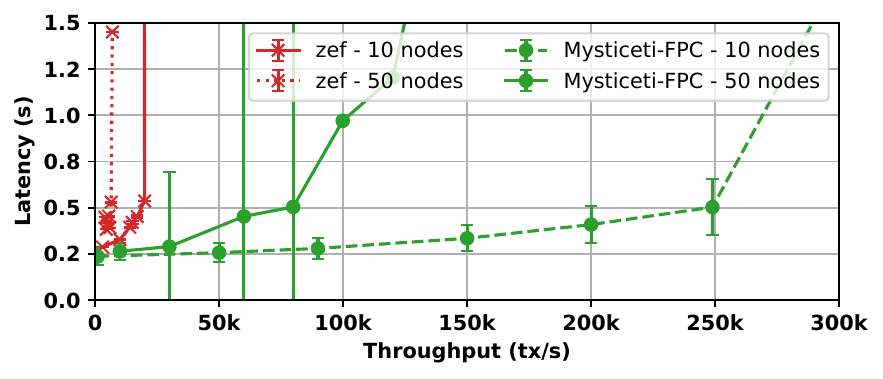}
      \caption{\footnotesize Throughput - Latency comparison the for fast path commits between \sysnamefpc and Zef}
      \label{fig:fast-path}
      \vspace{-0.5cm}
\end{figure}

\Cref{fig:fast-path} illustrates the Latency - Throughput of fast path commits for \sysnamefpc, compared with Zef~\cite{DBLP:journals/corr/abs-2201-05671} when deployed without privacy protections\footnote{Zef can also be instantiated to leverage the Coconut threshold credentials system~\cite{coconut} to provide privacy guarantees at the cost of performance.}. Both systems run in ideal conditions, without faults.
We observe that for low loads both protocols have a comparable latency of around 0.25s. However, as the load increases a Zef host has to verify and produce an increasing number of signatures, proportional to the throughput times the number of validators. As a result throughput tops at 20k TPS for a small Zef network and 7K TPS for a larger network, at a latency of 0.5s. \sysnamefpc avoids the need for individual signature verification for each transaction. At a low load, its latency is similar to Zef at 0.25s. However, as the load increases \sysnamefpc can process many more messages on a single host, namely 175k TPS for a small network and 80K for a larger network, at a latency of less than 0.5s. This is a single host throughput improvement of 8x-10x compared with Zef. We acknowledge that the Zef design can scale by adding additional hosts per validator, and sharding. However, this leads to additional hardware cost meaning that \sysnamefpc is an order of magnitude more resource efficient for the same latency.

We thus validate our claim \textbf{C3} showing that \sysnamefpc offers the same latency as state-of-the-art consensus-less protocols but with significantly higher throughput.
\section{\sysname in Production} \label{sec:production}
We collaborated with the \aquarium
\ifpublish
\else
    \unskip\footnote{We refer to the real blockchain that integrates our protocol as `\aquarium' for the purposes of double blind review.}
\fi
team to integrate \sysnamec into the \aquarium blockchain as a replacement for Bullshark~\cite{bullshark}, which it used for consensus (\Cref{fig:switch}).

There are a number of reasons why \aquarium is a good fit for using \sysnamec. First, \aquarium maintains a fixed committee consensus during each epoch, which does not require \sysnamec to support unscheduled reconfiguration, allowing for a drop in replacement of the consensus component. Secondly, Byzantine behavior in \aquarium is handled through shifts in stake delegation between epochs. Thus, the priority is to maintain performance under frequent crash faults, as is the case with \sysnamec. Byzantine faults need to be handled safely, but it is not critical to maintain extremely high performance while doing so, since they are rare. In the past year, no Byzantine faults involving equivocation have been observed on the \aquarium mainnet.

\para{Code adaptations}
To ensure seamless integration with the existing \aquarium codebase, we undertook a series of adaptations.
We improved system resilience through the addition of new unit tests, crash recovery mechanisms, and bulk synchronization.
\iflongversion
    We further integrated into \sysnamec key \aquarium features, such as a timestamp service
    \ifpublish
        ~\cite{sui-timestamps}
    \fi
    (detailed in \Cref{sec:timestamps}), facilitating compatibility and interoperability with the higher-level smart contract layer.
\fi
Finally, we integrated HammerHead~\cite{tsimos2023hammerhead}, adding proposer reputation, to further enhance stability and performance.

\para{From prototype to production}
The roadmap spans from the initial experimentation on the prototype code to a production-ready version of \sysnamec deployed in \aquarium.

Explorations on how to integrate \sysnamec started in November 2023, with experimentation on the prototype code (\Cref{sec:implementation}) and an exploration of which existing \aquarium code components could be reused.
We reached a significant milestone in February 2024: deploying a production-ready version of \sysnamec onto the geo-distributed private test environment. Initial testing was conducted on a testbeds comprised of 137 validators with voting power that emulated the distribution observed on the \aquarium mainnet.
We run stress tests that simulate typical blockchain traffic, ranging from 100 to 6,000 transactions per second.

\iflongversion
    While the deployment went smoothly, the system's latency profile initially did not meet expectations. Although \sysnamec had sub-second P50 latency, it has noticeably higher tail latency and  jitters. We discovered several contributing factors including inefficient synchronization of unevenly broadcasted blocks, and significant degradation under high CPU utilization in the QUIC networking implementation we used. The synchronization issue was not noticeable in the current \aquarium blockchain due to the higher latency of Bullshark. In an effort to address these challenges, we improved the fault tolerance and effectiveness of block synchronization. Also we migrated to TCP networking for node-to-node communication, with streaming of proposed blocks.
\fi

Thorough testing is essential to gain confidence before the deployment of \sysnamec into the \aquarium mainnet. First, we developed and open-sourced a Domain-Specific Language (DSL) to swiftly construct \sysname's DAG \footnote{\dslcodelink} under various scenarios, which simplifies the creation of diverse DAG structures such as missing proposers, and diverse selection of block ancestors. Also we used the deterministic simulated testing framework built for the \aquarium project to randomly inject network faults, network latency jitters and thread delays to \sysnamec. The system was verified to maintain safety and liveness under these randomly injected failures. In addition, many experiments with \sysnamec were carried out over the private testnet environment with high generated load, and sometimes down validators. We made sure the system stay live and latency growth is expected under these conditions.

When deploying \sysnamec to \aquarium staging environments, the environments alternated between running Bullshark, the existing consensus protocol, and \sysnamec with each epoch. Devnet epochs last 1 hour, while testnet epochs last 24 hours. This method ensures that both consensus protocols get test coverage in the staging environments, and provides reassurance that protocol upgrades can be executed smoothly upon transitioning to the mainnet. Moreover, it allows for performance comparison between the consensus protocols under the same environment. \aquarium mainnet validators operated by independent entities voted to switch to \sysnamec on July 25th, 2024 PST.

\para{Performance assessment}
The performance results depicted in \Cref{tab:production} are provided by the \aquarium team. Measurements are obtained from a private deployment on Vultr~\cite{vultr}, utilizing \texttt{vbm-24c-256gb-amd} instances deployed on 9 different regions: Amsterdam, Frankfurt, Paris, Los Angeles, San Jose (California), Newark (New York), Tokyo, New Delhi, and Johannesburg. Each machine provides 25Gbps of bandwidth, 48 virtual CPUs (24 physical cores) on a 2.85GHz AMD EPYC 7443P, 256GB memory, and runs Linux Ubuntu server 24.04.
The partially synchronous assumption is implemented by mandating validators to wait an additional 250ms for the block of the anchor slot of the previous round after receiving $2f+1$ proposals from that previous round.

\aquarium equipped with the production-ready implementation of \sysnamec demonstrates superior latency compared to when equipped with the production-ready implementation of Bullshark, with p50 and p95 latency of 650ms and 975ms for 137 validators, respectively. In contrast, \aquarium equipped with Bullshark exhibits a p50 and p95 latency of 2.89s and 4.6s for the same configuration. The measurements are taken while both systems run in their steady-state, with a load of 5,000 transactions per second (and exhibiting an equal throughout) for multiple hours. These results demonstrate the substantial latency improvements -- of over 4x -- brought to the blockchain when swapping Bullshark for \sysnamec.

\begin{table}[t]
    \centering
    \begin{tabular}{lrrrr}
        \toprule

        \footnotesize  Protocols & \footnotesize Committee Size & \footnotesize   TPS & \footnotesize P50 Latency        & \footnotesize P95 Latency            \\
        \midrule
        \footnotesize Bullshark  & \footnotesize 137    & \footnotesize 5,000   & \footnotesize 2,890ms                 & \footnotesize 4,600ms                \\
        \footnotesize \sysnamec  & \footnotesize 137       & \footnotesize 5,000   & \footnotesize 650ms & \footnotesize 975ms \\
        \bottomrule
    \end{tabular}
    \caption{\footnotesize Comparison of production performance: bullshark vs. \sysnamec deployment within \aquarium with 137 validators (with equal voting power). Both systems are subjected to a load of 5,000 TPS and observed a sustained throughput of 5,000 TPS. All benchmarks ran for many hours.}
    \label{tab:production}
    \vspace{-0.5cm}
\end{table}

\section{Related Work}
\sysname is a family of protocols designed to support next-ge\-ne\-ra\-tion distributed ledgers~\cite{sok-consensus, chainspace, omniledger, byzcuit}. To this end, its goal is to capture as wide a range of distributed ledgers as possible whether consensus-based or consensus-less. The pioneer on hybrid distributed ledgers is the Sui Lutris blockchain~\cite{sui-lutris} which has been productionized by Sui~\cite{sui}. However, the design of Sui Lutris focuses on providing a glue between the two distinct use-cases of consensus-based and consensus-less distributed ledgers, or in the production code a glue of FastPay~\cite{fastpay} and Bullshark~\cite{bullshark}. This design process of starting with the to-be-glued components and ending in a final system has led to significant inefficiencies such as multiple rebroadcasting of the same data as well as signature verification costs.
Unlike Sui Lutris, \sysname is designed from first principles and as a result shows a potential halving of the latency, matching the lower bounds of PBFT~\cite{pbft} for consensus and Reliable Broadcast~\cite{DBLP:journals/jacm/BrachaT85,cachin2011introduction} for consensusless distributed ledgers with equivocation tolerance.

We already discussed the core benefits of \sysnamefpc in terms of much lower CPU cost. In addition, it inherits the ability to change epochs, reconfigure the validator set, and tolerate equivocations from Sui Lutris. These benefits can also be used to embed other broadcast-based protocols like FastPay~\cite{fastpay}, Astro~\cite{astro}, and Zef~\cite{DBLP:journals/corr/abs-2201-05671}, to improve privacy.

In terms of consensus, the most recent DagRider~\cite{dag-rider}, Narwhal-Tusk~\cite{narwhal}, Bullshark~\cite{bullshark} were the inspiration for using a structured DAG and defining a safe commit rule on it. However, they all use a DAG of certified blocks which increases both latency and implementation complexity. Although at first glance, certification seems to benefit adversarial cases where nodes can advance the DAG without needing to synchronize, our production experience of Bullshark~\cite{bullshark} has shown that this benefit is negated right after consensus is finished and executing transactions starts (which requires all dependencies to be already executed). As a result, the certification benefits only Byzantine Atomic Broadcast protocols but not if used for the common case of powering a State Machine Replication system (e.g., a blockchain). \sysname uses instead a DAG of signed but not certified blocks, reducing latency significantly being the fastest DAG-based SMR to date.

Cordial Miners~\cite{DBLP:conf/wdag/KeidarNPS23} has also proposed a similar DAG-structure to \sysnamec. However, their \emph{Blocklace} detects and excludes equivocating miners so that it can eventually converge when there is no misbehavior. Its expected latency is additionally higher than \sysnamec as it only commits one proposed block per wave (3 rounds) and it lacks an implementation for us to do some more direct comparison\footnote{
The Cordial Miners manuscript publicly available during \sysname's development considered a single certificate pattern to be a sufficient condition to commit a block. This is not safe. As we saw in our proofs, there is a need for $2f+1$ blocks certify a block to safely commit it. The published version of the work, that appeared concurrently to this work, fixes this issue.
}. \sysname in comparison has additionally shown how to integrate a fast path as well as how to commit most of the blocks with an expected latency of $3$ rounds.
The subsequent concurrent work on Flash~\cite{DBLP:journals/corr/abs-2305-03567} also discussed how to leverage a blocklace/DAG to allow for payments akin to the \sysnamefpc fast path, but without integrating it with a consensus path for complex transactions.
\iflongversion
    Motorway~\cite{motorway} uses a consensus protocol based on `data lanes', a relaxed notion of a DAG where replicas independently and concurrently disseminate data. Consensus is achieved over metadata through any black-box consensus mechanism, leveraging these data lanes as a data dissemination layer. This approach follows the spirit of Narwhal-HotStuff~\cite{narwhal}, with the added benefit of increased performance by eliminating the need for validators to progress in strict DAG rounds. Unfortunately, this approach also increases significantly the engineering complexity and foregoes the robustness properties that structured DAG-based protocols have.
\fi

As far as the \sysnamec commit rule is concerned, the first proposal of having a pipelined and multi-proposer version for quorum-based consensus comes from Multi-Paxos~\cite{lamport2001paxos}.
This work has been studied extensively as well as extended to multiple directions~\cite{DBLP:conf/sosp/TennageBKSJEF23,tennage2024racs,moraru2012egalitarian}. However, it only addresses crash and omission faults.
The core idea can directly be transferred to Byzantine faults as PBFT~\cite{pbft} uses a similar structure to Paxos, and we can see its adoption in Mir-BFT~\cite{stathakopoulou2022solution}. Blockmania~\cite{danezis2018blockmania} as well as Schett \& Danezis~\cite{schett2021embedding} further develop the idea for DAG-based consensus, and the recent work Shoal~\cite{shoal} has applied it to certified DAGs with recursive commit rules~\cite{bullshark}. \sysname's commit rule is the next evolution, extending pipelining into uncertified recursive DAGs in order to achieve simultaneously the lowest latency possible ($3$ message rounds, according to~\cite{DBLP:journals/tdsc/MartinA06}) as well as the high throughput and censorship resistance of DAGs.

Notably, Narwhal-based designs use a worker-primary architecture to increase throughput. \sysnamec can be adapted to this architecture, by acting as a primary for any number of workers in case additional throughput is needed.
Additionally, Shoal and HammerHead~\cite{tsimos2023hammerhead} propose leader reputation protocols inspired by Carousel~\cite{cohen2022aware}. Our production implementation of \sysnamec adopts these designs to select more reliable proposers (\Cref{sec:production}), but for liveness, it would need to adopt a proposer slot rotation schedule where slots remain static for $3$ rounds.

Previous consensus protocols such as Hashgraph~\cite{baird2016swirlds} also use a DAG of signed but not certified blocks: however, they use DAGs that are not structured as threshold clocks~\cite{DBLP:journals/corr/abs-1907-07010} making their proofs of safety very complex and leaving several open questions regarding practical implementations~\cite{narwhal}. Fino~\cite{fino} generalizes the commit rule of Bullshark to an unstructured certified DAG. BBCA-ledger~\cite{bbca-ledger} interweave together a novel low-latency happy path based on a variant of Byzantine Consistent Broadcast and Bullshark as a high-throughput DAG-based fallback path.

Notably, \sysnamec works in only 3 message communication rounds, which matches PBFT, and is optimal latency~\cite{optimal-three-round,simplex} without the use of optimistic methods like Zyzzyva~\cite{zyzzyva}. This is lower than the state-of-the-art Jolteon~\cite{jolteon} currently deployed in multiple blockchains~\cite{flow, diem, aptos, monad}. The reason is that these protocols focus on linear communication complexity, whereas \sysnamec embraces its cubic cost and amortizes it using the DAG structure as first proposed by Dag-Rider and Narwhal.
\iflongversion
  
\section{Conclusion}
We introduce \sysnamec, a threshold clock-based Byzantine consensus protocol with the lowest WAN latency of 0.5s and the ability to process over 200k TPS at this latency for single-host nodes, far exceeding the needs of blockchains today (which consume in total about 1.2k TPS). We additionally present \sysnamefpc, a fast path protocol achieves even lower latency at 0.25s but with over 8x better resource efficiency compared with protocols with explicit certificates. Despite being designed in a BFT setting, both \sysname protocols efficiently handle crash faults using multiple proposer slots per round, implemented through a novel decision rule.

\iflongversion
    We leave several explorations for the future. For use cases requiring higher throughput, we note that \sysnamec can be augmented with workers, in a similar way to Tusk and Bullshark. This would allow it to scale without known bounds, at the cost of additional latency (a round trip) to coordinate workers and primaries. An alternative approach would be to run multiple \sysnamec instances in parallel, something we feel is under-explored but inspired us to have explicit votes in \sysnamefpc. The structure of \sysnamefpc has all nodes timestamping transactions through their votes and may be useful for implementing MEV protections.

    Finally, we note that as the latency of consensus under low load shrinks (now 0.5s) the latency advantages of the fast path diminish. It is an open industrial question whether use cases that require low latency justify the complexity of dual path systems going forward, as the latency gap closes.
    It is also an open question whether the worker-primary architecture employed by Narwhal-based designs is useful today since a single worker throughput by far exceeds the capacity needed by blockchains, and does so at a lower latency. This may change in the future as more capacity is needed.
\fi

\fi

\ifpublish
  \section*{Acknowledgment}
  This work was conducted while Kushal Babel was interning with Mysten Labs. 
We would like the thank Dmitry Perelman, Xun Li, and Lu Zhang from the Mysten Labs engineering team for the great discussions that improved this work. 
We also extend our thanks to Nikita Polianskii for various discussions around potential fast path designs and to Matthias Hallgren for suggesting to dynamically adjust the number of proposer slots per round based on the network's observed behavior.
We also thank Ehud Shapiro and Oded Naor for reviewing the manuscript and related work.
Finally, we thank Kartik Nayak, Aniket Kate, and Nibesh Shrestha for identifying that our timeout for liveness needs to be 3$\Delta$.
\fi

\bibliographystyle{IEEEtran}
\bibliography{references}

\iflongversion
  \appendices
  \crefalias{section}{appendix}

  \section{Example of \sysnamec Execution} \label{sec:example}
This section completes \Cref{sec:consensus} by providing a step-by-step example of a \sysnamec execution by leveraging \Cref{fig:example-1} of \Cref{sec:consensus}. This figure illustrates an example of a \sysname DAG with four validators, (A0, A1, A2, A3), four slots per round. Initially, all proposers are marked as \textsf{undecided}.

\para{Slots classification}
The validator applies the direct decision rule starting with the latest slots, L6d, L6c, L6b, L6a, L5d, L5c, L5b, L5a (in that order), but fails to determine their status due to the absence of both a skip pattern and a certificate pattern. They thus remain \textsf{undecided}.

The direct decision rule then successfully marks L4d as \textsf{to-commit} due to the presence of $2f+1$ certificate patterns, colored in green in \Cref{fig:example-2}. This reasoning is then applied successively to L4c and L4b, also marked as \textsf{to-commit}.
\Cref{fig:example-3} then demonstrates the direct decision rule applied to L4a, resulting in its classification as \textsf{to-skip} due to the presence of a skip pattern.
Continuing with L3d, L3c, L3b, and L2d, the direct decision rule categorizes them all as \textsf{to-commit}, similar to L4d, L4c, L4b.

Moving to L2c, \Cref{fig:example-4} shows the direct decision rule failing to classify it. Lacking both a skip and certificate pattern, the validator resorts to the indirect decision rule. It first identifies the \emph{anchor} of L2c, which is the block with the lowest rank and round number $r'$ such that $(r' > r + 2)$ and that is marked as either \textsf{undecided} or \textsf{to-commit}. In this case, L2c's anchor is L5a. Since L5a is \textsf{undecided}, L2c remains so as well. The same reasoning is applied to L2b which is also marked \textsf{undecided}.
Proceeding with L2a, the direct decision rule marks it as \textsf{to-commit}.

However, the direct decision rule cannot classify L1d. Consequently, \Cref{fig:example-5} demonstrates the application of the indirect decision rule to L1d, with L4b as its anchor (note that L4a is marked \textsf{to-skip} and thus cannot be an anchor). Since L4b is marked \textsf{to-commit}, and L1d has a certificate pattern linking to its anchor, L1d is marked \textsf{to-commit}.

After marking L1c as \textsf{to-commit} in the same way as L1d, the validator analyzes L1b. Since the direct decision rule cannot decide it, the indirect decision rule is applied with L4a as its anchor. Unlike L1d, there's no certified link from L4a to L1b, resulting in L1b being marked \textsf{to-skip} (\Cref{fig:example-6}).
Finally, L1a is marked \textsf{to-commit} via the direct decision rule.

\para{Commit sequence}
With as many proposers as possible classified as either \textsf{to-commit} or \textsf{to-skip}, the validator can establish the commit sequence. Beginning with the lowest slot, it outputs slots marked as \textsf{to-commit} while skipping those marked \textsf{to-skip}, halting once an undecided slot is encountered. The resulting commit sequence is L1a, L1c, L1d, L2a.
Eventually, the DAG will progress and the slot L5a will be classified either as \textsf{to-commit} or \textsf{to-skip}, allowing the validator to classify L2c and all subsequent slots.

\para{Choosing the number of proposer slots}
The example presented by \Cref{fig:example} assumes a number of proposer slots per round equal to the committee size.
While this choice offers the best latency under normal conditions, it may impact performance during periods of extreme asynchrony or under Byzantine attack.
In these cases, the probability that the direct decision rule fails to classify a proposer slot increases when some proposer slots are slow or equivocate. This forces the validator to resort to the indirect decision rule more often.
A practical mitigation could be to dynamically adjust the number of proposer slots per round based on the network's observed behavior. This would allow the system to maintain low latency during periods of synchrony while increasing the probability of observing certificate patterns during periods of asynchrony. To ensure all validators interpret the DAG consistently, the number of proposer slots per round should be agreed upon by all validators. This can be achieved by updating the number of proposers slots per round at commit boundaries,  following a similar deterministic approach to HammerHead~\cite{tsimos2023hammerhead}

  \section{Exposing Commit Timestamps} \label{sec:timestamps}
As mentioned in \Cref{sec:production}, the production-ready implementation of \sysnamec is equipped to expose timestamps to the higher application layer. Each \sysnamec block contains both the timestamp of its proposal and its commit. Validators incorporate their current time in each block they propose. Upon receiving a block, its timestamp undergoes validation by ensuring that the included time is greater than or equal to the timestamps of its parent blocks; otherwise, the block is rejected as invalid. Honest validators will only consider blocks as parents of their proposal if they possess past timestamps with respect to their local time, while if a block arrives with a future timestamp, a validator must wait before including or rejecting it.

Consequently, if a Byzantine validator introduces a block too far into the future, it will be rejected. To mitigate the small variations in the local clocks of validators, our implementation suspends the block in memory for a brief duration if its timestamp is only slightly ahead of the current local time.

When \sysnamec outputs a commit, it associates a timestamp with this action, termed a \emph{commit timestamp}. The commit timestamp is defined as the maximum of the timestamp(s) of the proposer block(s) of such commit and the timestamp of the previous commit. Therefore, \sysnamec commit timestamps ensure monotonic increase. It's essential to include the commit timestamp of the previous commit in this maximum calculation because successive committed blocks are not necessarily linked by a parent-child relationship, thus unable to guarantee monotonicity. This contrasts with timestamp mechanisms in existing DAG-based consensus protocols, which lack a proposer every round and can thus ensure that each committed block references the previous committed block~\cite{narwhal,dag-rider,bullshark}.
\fi

\end{document}
\endinput